\newtheorem{theorem}{Theorem}[section]
\newtheorem{lemma}[theorem]{Lemma}
\newtheorem{corollary}[theorem]{Corollary}
\newtheorem{remark}[theorem]{Remark}
\newtheorem{definition}[theorem]{Definition}
\newcounter{itm}
\newenvironment{myprotocol}[1]
  {\begin{minipage}{\columnwidth}
    \begin{framed}\hspace{0ex}
     \begin{minipage}{0.99\columnwidth}
       {\bf #1:}
       \setcounter{itm}{1}
 \begin{list}{}{\usecounter{itm}}}
   {    \end{list}
       \vspace{-1.5ex}
       \end{minipage}
     \end{framed}
    \end{minipage}\vspace{-0.6ex}}
\newcommand*{\textfrac}[2]{{{#1}/{#2}}}
\newcommand*{\bbN}{\mathbb{N}}
\newcommand*{\cA}{\mathcal{A}} 
\newcommand*{\cG}{\mathcal{G}}
\newcommand*{\cH}{\mathcal{H}}
\newcommand*{\cS}{\mathcal{S}}
\newcommand*{\cX}{\mathcal{X}}
\newcommand*{\cY}{\mathcal{Y}}
\newcommand*{\cZ}{\mathcal{Z}}
\newcommand*{\Pt}{\tilde{P}}
\newcommand*{\Ext}{\mathsf{Ext}}
\newcommand*{\Cond}{\mathsf{Cond}}
\newcommand*{\uniform}{\mathcal{U}}
\newcommand*{\mhalf}{{-\textfrac{1}{2}}}
\newcommand*{\half}{{\textfrac{1}{2}}}
\newcommand*{\id}{\mathsf{id}}
\newcommand*{\ket}[1]{|#1\rangle}
\newcommand*{\bra}[1]{\langle #1|}
\newcommand*{\proj}[1]{\ket{#1}\bra{#1}}
\newcommand*{\tr}{\mathsf{tr}}
\newcommand*{\sbin}{\{0,1\}}
\newcommand*{\rhob}{\widehat{\rho}}
\newcommand*{\spr}[2]{\langle #1|#2 \rangle}
\newcommand*{\supp}{\mathrm{supp}}
\newcommand*{\myspan}{\mathrm{span}}
\newcommand*{\BAD}{\mathcal{B}}
\newcommand*{\fr}[2]{\frac{#1}{#2}}
\newcommand*{\h}[5]{H(#1|#2)^{#3}_{\fr{#4}{#5}}}
\newcommand*{\inc}[2]{#1_{>#2}}
\newcommand*{\dec}[2]{#1_{\leq #2}}
\newcommand*{\hcond}[3]{H(\emptyset|#1)_{\fr{#2}{#3}}}
\newcommand*{\bPsi}{\widehat{\Psi}}
\newcommand*{\mytree}[1]{\mathbb{T}_{#1}}
\newcommand*{\treevsimple}[1]{\mathbf{v}_{#1}}
\newcommand*{\treev}[2]{\mathbf{v}^{#1}_{#2}}
\newcommand*{\treew}[2]{\mathbf{w}^{#1}_{#2}}
\newcommand*{\hmin}{H_{\min}}
\newcommand*{\hminrate}{R_{\min}}
\newcommand*{\hbmin}{\mathbf{h}_{\min}}
\newcommand*{\ExpE}{\mathop{\mathbb{E}}}
\newcommand*{\matrixsampler}{{parallel sampler}}
\newcommand*{\matrixsamplers}{{parallel samplers}}
\begin{document}

\title{Sampling of min-entropy relative to quantum knowledge}


\author{Robert K\"onig and Renato Renner\\
\textsf{rkoenig@caltech.edu, renner@phys.ethz.ch} }

\maketitle

\begin{abstract}
  Let $X_1, \ldots, X_n$ be a sequence of $n$ classical random
  variables and consider a sample $X_{s_1}, \ldots, X_{s_r}$ of $r
  \leq n$ positions selected at random. Then, except with
  (exponentially in $r$) small probability, the min-entropy
  $H_{\min}(X_{s_1} \cdots X_{s_r})$ of the sample is not smaller
  than, roughly, a fraction $\frac{r}{n}$ of the overall entropy
  $H_{\min}(X_1 \cdots X_n)$, which is optimal.
  
  Here, we show that this statement, originally proved in [S.~Vadhan,
  LNCS~2729, Springer, 2003] for the purely classical case, is still
  true if the min-entropy $H_{\min}$ is measured relative to a quantum
  system.  Because min-entropy quantifies the amount of randomness
  that can be extracted from a given random variable, our result can
  be used to prove the soundness of locally computable extractors in a
  context where side information might be quantum-mechanical.  In
  particular, it implies that key agreement in the bounded-storage
  model|using a standard \emph{sample-and-hash} protocol|is fully
  secure against quantum adversaries, thus solving a long-standing
  open problem.
\end{abstract}

\maketitle

\tableofcontents

\section{Introduction}

Let $X$ be a classical random variable and let $E$ be a (generally quantum-mechanical) system whose state might be correlated to $X$. The \emph{min-entropy of $X$ given $E$}, denoted $\hmin(X|E)$, is a natural measure for the uncertainty on the value of $X$ given access to the \emph{side information} $E$. More precisely, $\hmin(X|E)$ corresponds to the maximum length of a bitstring $R$ which is (a)~uniquely determined by $X$ and (b)~virtually uniform and independent of $E$.\footnote{See Lemma~\ref{lem:hminextractablekeylength} of Section~\ref{sec:basicdefinitionsh}  for a mathematically precise statement.}  

Here, we study the following question initiated by Nisan and Zuckerman~\cite{NisZuc96}.\footnote{Nisan and Zuckerman considered the special case where $E$ is classical.}   Given a sequence $X_1, \ldots, X_n$ of $n$ classical random variables with min-entropy (relative to side information $E$) at least $\hmin(X_1 \cdots X_n|E) \geq n \nu$, for some $\nu \geq 0$, what is the min-entropy $\hmin(X_{s_1} \cdots X_{s_r}|E)$ of a randomly selected sample $X_{s_1}, \ldots, X_{s_r}$ of $r$~positions? In other words, we are starting with a sequence $X_1, \ldots, X_n$ which contains at least $n \nu$ bits of uniform (relative to $E$) randomness, and we are interested in the amount of uniform (again relative to $E$) randomness of the subsequence $X_{s_1}, \ldots, X_{s_r}$.

As a main result, we show that the min-entropy per position is preserved under sampling, i.e., 
\[
  \frac{1}{n} \hmin(X_1 \cdots X_n|E) \geq \nu \quad \text{implies} \quad
  \frac{1}{r} \hmin(X_{s_1} \cdots X_{s_r}|E) \geq \nu + o(1) 
\]
(except with probability exponentially small in $r$). This generalizes a result by Vadhan~\cite{Vadhan03} who considered the case where $E$ is purely classical.\footnote{If the system $E$ is purely classical, it can generally be omitted in the analysis, as explained in Section~\ref{sec:extracionpriorcinfo}.}

A main application of this result is in the context of \emph{randomness extraction}. It relies on the \emph{leftover-hash lemma}~\cite{ILL89} (see also~\cite{BBCM95}), or, more precisely, its quantum generalization~\cite{Ren05} (see also~\cite{KoMaRe03, RenKoe05}), saying that the randomness of a classical random variable $X$, measured in terms of the min-entropy, can be extracted by applying a suitable hash function.  That is, $X$ can be mapped to a string $Z$ of size (roughly) $\hmin(X|E)$ which is virtually uniform and independent of $E$. Our result now implies that, given a long sequence $X_1, \ldots, X_n$ with sufficient min-entropy, random bits can be obtained by the \emph{sample-and-hash technique}, i.e., first sampling a subsequence $X_{s_1}, \ldots, X_{s_r}$ and then applying a two-universal hash function. 
 
The sample-and-hash technique is of interest in cryptography, in particular in the context of the \emph{bounded storage model}~\cite{Maurer92b}. Here, the security of cryptographic schemes is based on the assumption that a string of random variables $X_1, \ldots, X_n$, called \emph{randomizer}, is temporarily available for public access, but too long to be stored on a computer, even by a potential adversary. The idea then is to use this string as a source of secret randomness. 

Based on the original work by Maurer~\cite{Maurer92b}, various schemes for \emph{key expansion} in the bounded storage model have been proposed~\cite{DziMau02,DziMau04a,Lu02,Vadhan03}. These are mostly based on the sample-and-hash technique described above. More precisely, a short initial string is used for selecting positions of the randomizer $X_1, \ldots, X_n$. Then a hash function is applied to extract a key $Z$. 

Because the min-entropy of the randomizer $X_1, \ldots, X_n$ given the information $E$ stored by an adversary, $\hmin(X_1 \cdots X_n | E)$, is necessarily large, our result implies that the final key $Z$ is indeed uniform relative to~$E$ and, hence, secret. In other words, our result proves that key expansion in the bounded storage model is possible in the context of a quantum adversary. It generalizes previous results~\cite{DziMau04a,Lu02,Vadhan03} where security has been proved under the assumption that the adversary is purely classical.

\subsubsection*{Outline}
The paper is organized as follows: We first cover some background material on randomness extraction in Section~\ref{sec:basicdefsknownresults}. In Section~\ref{sec:ourcontribution}, we discuss our main result and its relation to prior work. Section~\ref{sec:proofsketch} provides an informal overview of the central ideas involved in the proof. The remainder of the paper is devoted to a formal derivation of our main results; in Section~\ref{sec:rulesandtoolsentropy}, we establish the required properties of min-entropy. We subsequently apply these to the problem at hand in Section~\ref{sec:entropysampling}, where we derive our main result. We conclude in Section~\ref{sec:bsmapplicationdetailed} by giving explicit parameters for key expansion in the bounded storage model.

\section{Basic definitions and known results\label{sec:basicdefsknownresults}}
\subsection{Randomness extractors}
\emph{Randomness extraction}, i.e., the process of transforming
partially random data $X$ into a uniformly distributed string $Z$,
plays an important role in computer science and, in particular,
cryptography.  For example, it is used to generate secure keys, given
only partially secret raw data. One of the most fundamental results in
the area of randomness extraction is the \emph{leftover-hash
  lemma}~\cite{ILL89}.  It states that the number of uniform bits
that can be extracted from a given random variable $X$ by two-universal
hashing (i.e., by applying a function chosen at random from a
two-universal set of hash functions) is roughly equal
to the \emph{min-entropy}\footnote{In the literature, the quantity
  $H_{\min}$ is also denoted $H_{\infty}$ and called \emph{R\'enyi
    entropy of order $\infty$}.} of $X$ defined by
\begin{equation} \label{eq:classminentropy}
  H_{\min}(X):=-\log \max_x P_X(x) \ .
\end{equation} 
We can express this result more formally by saying that two-universal hashing is an {\em extractor}. A $(k,\varepsilon)$-extractor is a function $\Ext:\cX\times\cY\rightarrow\cZ$ with the property that the random variable $Z=\Ext(X,Y)$  is $\varepsilon$-close to uniform\footnote{The $L_1$-norm of a function $f:\cZ\rightarrow \mathbb{R}$ is defined as $\|f\|:=\sum_{z\in\cZ}|f(z)|$.}, i.e.,
\begin{align*}
\frac{1}{2}\|P_{\Ext(X,Y)}-P_{\uniform_Z}\|\leq \varepsilon\ ,
\end{align*}
 whenever $X$ is a random variable $X$ with min-entropy at least $\hmin(X)\geq k$ and $Y$ is an independent and uniform seed, i.e., $P_Y\equiv P_{\uniform_{\cY}}$. (Here $P_{\uniform_\cZ}$ denotes the uniform distribution on $\cZ$.) A strengthening of this notion is the concept of a {\em strong extractor}, whose output is required to be uniform even conditioned on the seed $Y$. A strong $(k,\varepsilon)$-extractor satisfies the inequality
\begin{align}\label{eq:strongextractordefinition}
\|P_{\Ext(X,Y)Y}-P_{\uniform_Z}\cdot P_{\uniform_{Y}}\|\leq \varepsilon\ 
\end{align}
for all $P_{XY}\equiv P_X\cdot P_{\uniform_{\cY}}$ with $\hmin(X)\geq k$. Two-universal hashing corresponds to a strong $(k,\varepsilon)$-extractor with $\ell$ bits of output, for any $\varepsilon\geq 0$ and $k\geq \ell+2\log\textfrac{1}{\varepsilon}$.

While two-universal hashing is optimal in the number $H_0(Z):=\log |\cZ|$ of bits it can extract, it is not usable in certain applications. For example, computing the output $Z=\Ext(X,Y)$ might be infeasible, e.g., if the initial number $H_0(X)=n$ of bits is too large to be processed by a limited computational device. Also, in cryptographic scenarios, the seed $Y$ is sometimes a (secret) key of limited size (e.g., $H_0(Y)=O(\log n)$) compared to the length of $X$. Thus it is natural to try to find extractors with additional properties, such as efficient computability or limited seed length. An example of such a requirement which is important for applications in the bounded storage model is {\em local computability}; in other words, if $X=(X_1,\ldots,X_n)$ consists of a large number $n$ of blocks (or bits), the output $\Ext(X,Y)$ should only depend on a small subset $X_{\cS}=(X_{s_1},\ldots,X_{s_{r}})$ 
of these values, where $\cS=\{s_1,\ldots,s_{r}\}=\cS(y)\subset [n]=\{1,\ldots,n\}$ specifies the subset  for every $y\in\cY$. In other words, these extractors are of the form $\Ext(X,Y)=f(X_{\cS(Y)},Y)$.

\subsection{Randomness condensers}
With the aim of finding other constructions of extractors, it is natural to consider weaker notions of randomness generation. One natural way to generalise the concept of a randomness extractor is to require that the output is only close to a random variable with high min-entropy (instead of being close to a uniform random variable). This leads to the definition of a $(k,k',\varepsilon)$-condenser: This is a function $\Cond:\cX\times\cY\rightarrow Z$ such that for all random variables $X$ with $\hmin(X)\geq k$, there is a random variable $\bar{Z}$ with $\hmin(\bar{Z})\geq k'$ such that
\begin{align*}
\frac{1}{2}\|P_{\Cond(X,Y)}-P_{\bar{Z}}\|\leq \varepsilon\ ,
\end{align*}
where $Y$ is a uniform and independent seed on $\cY$. In terms of the so-called {\em smooth min-entropy}\footnote{The supremum ranges over all {\em subnormalised probability distributions} $P_{\bar{Z}}$, that is functions $P_{\bar{Z}}:\cZ\rightarrow [0,1]$ satisfying $\sum_{z\in\cZ} P_{\bar{Z}}(z)\leq 1$.} $\hmin^\varepsilon(Z):=\sup_{P_{\bar{Z}}:\|P_Z-P_{\bar{Z}}\|\leq \varepsilon} \hmin(\bar{Z})$
 this requirement is simply expressed by
\[
\hmin^\varepsilon(\Cond(X,Y))\geq k'\ .
\]
 The notion of a condenser is a strict generalisation of the notion of an extractor. Indeed, a $(k,\varepsilon)$-extractor $\Ext:\cX\times\cY\rightarrow\cZ$ is a $(k,\log|\cZ|,\varepsilon)$-condenser and vice versa.

Again, a stronger version of condensers is obtained by requiring that $\Cond(X,y)$ has high smooth entropy {\em with high probability} over $y$. The analog of~\eqref{eq:strongextractordefinition} defining a strong $(k,k',\varepsilon)$-condenser then is the requirement that for every $X$ with $\hmin(X)\geq k$, there exists a joint distribution $P_{\bar{Z}\bar{Y}}$ such that
\begin{align*}
\frac{1}{2}\|P_{\Cond(X,Y)Y}-P_{\bar{Z}\bar{Y}}\|\leq \varepsilon\ ,
\end{align*}
where $Y$ is independent of $X$ with uniform distribution $P_{Y}\equiv P_{\uniform_{\cY}}$  on $\cY$, and $\hmin(\bar{Z}|\bar{Y})\geq k'$. Here, the {\em conditional min-entropy}  is defined as
\[
\hmin(Z|Y):=-\log\sum_{y\in\cY}P_Y(y)\max_{z} P_{Z|Y=y}(z)\ .
\]
As before, this requirement is equivalent to demanding  that
\[
\hmin^\varepsilon(\Cond(X,Y)|Y)\geq k'\ ,
\]
where $\hmin^\varepsilon(Z|Y):=\sup_{\|P_{\bar{Z}{\bar{Y}}}-P_{ZY}\|\leq\varepsilon} \hmin(\bar{Z}|\bar{Y})$ is the {\em conditional smooth min-entropy}. With this definition, a function $\Ext:\cX\times\cY\rightarrow\cZ$ is a strong $(k,\varepsilon)$-extractor if and only if it is a strong $(k,\log|\cZ|,\varepsilon)$-condenser.

\subsection{Constructing locally computable extractors: The sample-and-hash approach}
Condensers can be used as a building block for constructing extractors. A possible way of obtaining a new construction is by applying an extractor to the output of a condenser. More precisely, suppose that
\begin{align*}
\Cond:&\cX_C\times\cY_C\rightarrow \cX_E\qquad&\textrm{ is a }(k_C,k_E,\varepsilon_C)-\textrm{condenser, and }\\
\Ext:&\cX_E\times\cY_E\rightarrow\cZ_E\qquad&\textrm{ is a }(k_E,\varepsilon_E)-\textrm{extractor}\ .
\end{align*}
It is easy to see that in this situation, the function
\begin{align*}
\widehat{\Ext}:\cX_C\times (\cY_C\times\cY_E)&\rightarrow\cZ_E\\
(x_C,(y_C,y_E))&\mapsto \Ext(\Cond(x_C,y_C),y_E)
\end{align*}
is a $(k_C,\varepsilon_C+\varepsilon_E)$-extractor. This is because the condenser $\Cond$ generates a random variable with a sufficient amount of min-entropy for $\Ext$. This conclusion is also true for the strong versions of these notions: if $\Cond$ and $\Ext$ are a strong condenser and a strong extractor, respectively, then the function $\widehat{\Ext}$ is a strong extractor.

Let us now return to the problem of constructing locally computable extractors. Clearly, if $\Cond(X,Y)=\Cond((X_1,\ldots,X_n),Y)$ is of the form $\Cond(X,Y)=X_{\cS(Y)}$, where $\cS(y)\subset [n]$ is a subset of indices for every $y\in\cY$, then the previous construction results in an extractor of the form
$\widehat{\Ext}(X,Y)=\widehat{\Ext}((X_1,\ldots,X_n),(Y_C,Y_E))=\Ext(X_{\cS(Y_C)},Y_E)$. This extractor is clearly locally computable.  This way of building a locally-computable extractor by first sampling a few indices specified by $\cS(Y_C)$ at random and then applying an extractor is called the {\em sample-and-hash approach}. Building locally computable extractors is thus reduced to the problem of constructing condensers of the form $\Cond(X,Y)=X_{\cS(Y)}$.

\subsection{Averaging samplers are condensers:  preservation of min-entropy rates\label{sec:averagingsamplersclassical}}
Consider a sequence of random variable $X=(X_1, \ldots, X_n)$ on $\cX^n$ and assume
that the \emph{min-entropy rate} $R_{\min}(X):=\frac{\hmin(X)}{H_0(X)}$ is lower bounded by $\mu$, i.e.,
\[
\mu\leq R_{\min}(X)=\frac{1}{n\log|\cX|} H_{\min}(X_1 \cdots X_n)\ .
\]
We will call the quantity $\frac{\hmin(X)}{H_0(X)}$ on the lhs the {\em min-entropy rate} of $X$. Suppose further that we select $r$ of these random variables at random, resulting in a subset $X_\cS=(X_{s_1},\ldots,X_{s_r})$ corresponding to indices $\cS=\{s_1,\ldots,s_r\}$.
Intuitively, one would expect that with high probability over the choice of $\cS$, the amount of randomness contained
in such a sample is proportional to its size $r=|\cS|$, i.e.,
\begin{equation} \label{eq:mainbound}
\mu-\delta\leq  R_{\min}(X_{\cS})=\frac{1}{|\cS|\log|\cX|}  H_{\min}(X_{\cS}) 
\end{equation}
for some small $\delta>0$.
In other words, we expect the min-entropy rate to be preserved under sampling.
Indeed, as shown by Vadhan~\cite{Vadhan03} (improving on previous work
by Nisan and Zuckerman~\cite{NisZuc96}),
inequality~\eqref{eq:mainbound} is correct with high probability (over
the choice of the sample $\cS=\{s_1,\ldots,s_r\}$). In the terminology of condensers, this is saying  that the function
\begin{align*}
\Cond:\cX^n\times\binom{[n]}{r}&\rightarrow \cX^r\\
((X_1,\ldots,X_n),\cS)&\mapsto X_\cS
\end{align*}
is a $(\mu n\log |\cX|,(\mu-\delta)r\log|\cX|,\varepsilon)$-condenser for some small $\delta,\varepsilon>0$. We call this function the {\em $\binom{[n]}{r}$-subset condenser}. 

Neglecting issues related to computational complexity, a condenser of the form $\Cond(X,Y)=X_{\cS(Y)}$ is fully specified by the distribution $P_{\cS}\equiv P_{\cS(Y)}$ over subsets of $[n]$. The $\binom{[n]}{r}$-subset condenser is simply represented by the uniform distribution over all subsets $\cS\subset [n]$ of size $|\cS|=r$. 

It is natural to ask which distributions over subsets $\cS$ give rise to good condensers.  Intuitively, a necessary condition is that the set of subsets $\cS$ covers $[n]$ well in some sense. In fact, Vadhan~\cite{Vadhan03} showed that it suffices for $\cS$ to be a so-called {\em averaging sampler}; i.e., a distribution over subsets of $[n]$ which can be used to approximate the average of any $n$~values. Formally, such a sampler is defined as follows:

\begin{definition}\label{def:averagingsampler}
An $(n,\xi,\varepsilon)$-sampler is a probability distribution $P_{\cS}$ over subsets $\cS\subset [n]$ with the property that
\begin{align}
\Pr_{\cS}\left[\frac{1}{|\cS|}\sum_{i\in\cS}\beta_i\leq \frac{1}{n}\sum_{i=1}^n\beta_i-\xi\right]\leq \varepsilon\textrm{ for all }(\beta_1,\ldots,\beta_n)\in [0,1]^n\ . \label{eq:averagingsamplerdef}
\end{align}
For simplicity, we will assume that $P_\cS$ is completely supported on subsets of the same size, and refer to this as $|\cS|\leq n$.
\end{definition}
Observe that we only consider a one-sided error.\footnote{We point out that the notion of {\em samplers} is usually defined differently in the computer science literature. There, a {\em sampler} is an algorithm which efficiently approximates the average of a large number of values. The aim is to give an estimate of the average $\frac{1}{n}\sum_{i=1}^n \beta_i$ of an (arbitrary) vector $(\beta_1,\ldots,\beta_n)\in [0,1]^n$ whose entries are accessible in the form of an oracle. Here we restrict our attention to so-called {\em averaging} samplers: These output the value $\frac{1}{|\cS|}\sum_{i\in\cS}\beta_i$ of a (randomly) chosen subset $\cS\subset [n]$ of values. For a more detailed discussion of samplers and their computational aspects, see~\cite{goldreich97}.}
We will call $\xi$ the {\em accuracy} of the sampler, and $\varepsilon$ its {\em failure probability}. Returning to our example, the uniform distribution  over subsets of a fixed size is an averaging sampler with the following parameters. 
\begin{lemma}\label{lem:subsetsampler}
Let $r<n$ and let $P_\cS$ be the uniform distribution over subsets $\cS\subset [n]$ of size $|\cS|=r$. This defines a $(n,\xi,e^{-\textfrac{r\xi^2}{2}})$-sampler for every $r>0$ and $\xi\in [0,1]$.
\end{lemma}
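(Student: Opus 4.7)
The plan is to derive the lemma as a standard Chernoff--Hoeffding bound adapted to sampling without replacement. Fix an arbitrary vector $(\beta_1,\ldots,\beta_n)\in[0,1]^n$, set $\mu=\fr{1}{n}\sum_{i=1}^n\beta_i$, and let $Y_1,\ldots,Y_r$ denote the values obtained by drawing a uniformly random subset $\cS\subset[n]$ of size $r$ without replacement. The failure event in~\eqref{eq:averagingsamplerdef} is $\sum_{j=1}^r(\mu-Y_j)\geq r\xi$, so Markov's inequality applied to $x\mapsto e^{tx}$ yields, for any $t>0$,
\[
\Pr_{\cS}\!\left[\fr{1}{r}\sum_{i\in\cS}\beta_i\leq\mu-\xi\right]\;\leq\;e^{-tr\xi}\,\ExpE\!\left[\exp\!\Bigl(t\sum_{j=1}^r(\mu-Y_j)\Bigr)\right].
\]

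Next I would invoke Hoeffding's reduction for sampling without replacement: for every convex function $\phi$, the expectation of $\phi\bigl(\sum_j Y_j\bigr)$ under sampling without replacement is at most the same expectation when the $Y_j$ are drawn i.i.d.\ from the uniform distribution on $\{\beta_1,\ldots,\beta_n\}$. Applied to $\phi(x)=e^{tx}$, this dominates the right-hand expectation by $\bigl(\ExpE[e^{t(\mu-Y)}]\bigr)^r$, where $Y$ is a single uniform draw. Since $\mu-Y$ lies in an interval of length at most $1$ and has zero mean, Hoeffding's lemma gives $\ExpE[e^{t(\mu-Y)}]\leq e^{t^2/8}$, so the overall bound becomes $e^{-tr\xi+rt^2/8}$. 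Optimizing at $t=4\xi$ produces $e^{-2r\xi^2}$, which is strictly stronger than the claimed $e^{-r\xi^2/2}$, and the lemma follows a fortiori.

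The main obstacle -- really the only step that is not a mechanical calculation -- is justifying the reduction from without-replacement to with-replacement sampling at the level of the moment generating function. Hoeffding's original 1963 paper establishes precisely this inequality via an exchangeability argument conditioned on the multiset of sampled values, and I would simply cite it. If a self-contained derivation is preferred, one conditions on the unordered sample, uses the fact that the ordering is then uniform over $r!$ permutations, and applies Jensen's inequality across permutations to dominate the with-replacement MGF; this is the essence of Hoeffding's argument. Because the exponent in the statement has slack of a factor of $4$ compared with the optimal constant, even a lossy variant of this reduction -- or a crude reduction passing through independent Bernoulli approximations of the indicator variables $\mathbf{1}[i\in\cS]$ -- would already give the required bound.
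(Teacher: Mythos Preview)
Your argument is correct and in fact yields the sharper exponent $e^{-2r\xi^2}$. The paper does not prove this lemma at all: it simply remarks that the bound follows from the Hoeffding--Azuma inequality and cites Lemma~5.5 of Babai--Hayes~\cite{babaihayes05}. Your route is a slight variant: rather than casting the without-replacement sample as a martingale and invoking Azuma, you appeal directly to Hoeffding's 1963 convexity reduction (sampling without replacement is dominated by sampling with replacement for convex test functions) and then apply the standard i.i.d.\ Hoeffding bound. Both approaches are textbook and yield the same order of exponent; yours is arguably more elementary because it avoids the martingale machinery, while the Azuma formulation has the mild advantage that it applies verbatim to more general dependent sampling schemes. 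Either way the stated bound $e^{-r\xi^2/2}$ is comfortably implied.
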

\noindent This statement is a consequence of the Hoeffding-Azuma inequality and given as Lemma~5.5 of~\cite{babaihayes05}. We call this sampler simply the {\em $\binom{[n]}{r}$-subset sampler}. It will be sufficient for our purposes, but our results hold more generally for arbitrary averaging samplers.

Vadhan showed that in the same way as the $\binom{[n]}{r}$-subset sampler gives rise to the $\binom{[n]}{r}$-condenser, any averaging sampler defines a corresponding condenser (with appropriate parameters). In other words, a probability distribution $P_\cS$ over subsets of $[n]$ with the sampler property~\eqref{eq:averagingsamplerdef} preserves the min-entropy rate when picking a random subset, in the sense of~\eqref{eq:mainbound}.

\subsection{Extractors, condensers and prior classical information~\label{sec:extracionpriorcinfo}}
In cryptographic settings, it is often desirable to generate randomness which is not only (close to) uniform, but also {\em independent} of an adversary's prior information. We first consider the case where the adversary is classical, such that her information is described by a  random variable~$E$. In other words, the task is to generate a  key $Z$ satisfying 
$\frac{1}{2}\|P_{Z\tilde{E}}-P_{\uniform_{\cZ}}\cdot P_{\tilde{E}}\|\leq\varepsilon$, where $\tilde{E}$ summarises the adversary's knowledge. 

Suppose the initial situation is described by a joint distribution $P_{XE}$, where $X$ is held by the honest parties, and the adversary holds $E$. We will assume that the adversary's information about $X$ is limited; this  is expressed by a lower bound on  the conditional entropy $\hmin(X|E)$. Conveniently, a strong $(k,\varepsilon)$-extractor achieves key extraction in this setup, when invoked with (public) independent randomness $Y$. That is, we have
\begin{align}\label{eq:extractoruniform}
\frac{1}{2}\|P_{\Ext(X,Y)YE}-P_{\uniform_{\cZ}}\cdot P_{\uniform_{\cY}}\cdot P_E\|\leq 2\varepsilon
\end{align}
for all $P_{XE}$ with $\hmin(X|E)\geq k+\log\textfrac{1}{\varepsilon}$. In other words, if the adversary's initial prior information $E$ about $X$ is limited, the extracted key $Z=\Ext(X,Y)$ will look uniform to the adversary even if he is given the seed of the extractor (i.e., $\tilde{E}=(E,Y)$). This procedure of using public (independent) randomness to generate secret keys from partially secret information is well-known as {\em privacy amplification}~\cite{BBCM95} (usually in conjunction with two-universal hashing as an extractor).

Inequality~\eqref{eq:extractoruniform} is a trivial application of Markov's inequality; it is obtained by applying the extractor property to the conditional distributions $P_{X|E=e}$. A similar conclusion holds more generally for any strong $(k,k',\varepsilon)$-condenser: Here we have
\begin{align}
\hmin^{2\varepsilon}(\Cond(X,Y)|YE)\geq k'\label{eq:condclassicalprior}
\end{align}
for all joint distributions $P_{XE}$  with $\hmin(X|E)\geq k+\log\textfrac{1}{\varepsilon}$.  This means that the problem of randomness extraction in the context of prior classical information essentially reduces to the randomness generation problem without any side-information.

\subsection{Extractors, condensers and prior quantum information}
\label{sec:extracionpriorqinfo}
The mentioned property of extractors and condensers fails to be true in cases where the adversary's prior information $E$ is quantum. Indeed, in this case, the conditional distributions $P_{X|E=e}$ are no longer defined, and the analysis of randomness extraction has to be done differently. 

The relevant concepts in this modified setup are sufficiently straightforward to define: Consider a classical random variable $X$ and a quantum system $E$ which is correlated to this variable. This situation is completely described by a classical-quantum state $\rho_{XE}=\sum_{x\in\cX} P_X(x)\proj{x}\otimes\rho^x_E$ (where $\{\ket{x}\}_{x\in\cX}$ is an orthonormal basis), or equivalently the ensemble $\{P_X(x),\rho^x_E\}_{x\in\cX}$ on $E$. For the purpose of randomness extraction, the relevant measure of min-entropy is the conditional min-entropy 
$H_{\min}(X|E)$ introduced in~\cite{Ren05}; this quantity is defined by\footnote{This definition is meaningful arbitrary bipartite states $\rho_{XE}$ even with non-classical part $X$.}
\[
  \hmin(X|E) := - \log \min_{\sigma_E} \min \{\lambda: \rho_{XE} \leq \lambda \cdot \id_X \otimes \sigma_E \} \ .
\]
 The conditional min-entropy generalizes the classical min-entropy~\eqref{eq:classminentropy}. For classical-quantum states $\rho_{XE}$, the min-entropy $\hmin(X|E)$ characterises the  amount of uniform randomness $Z=f(X)$ that can be extracted from $X$ such that $Z$ is independent of $E$.

In terms of this measure of prior information, a {\em $(k,\varepsilon)$-strong quantum extractor} is a function $\Ext:\cX\times\cY\rightarrow\cZ$ with the property that (cf.~\eqref{eq:extractoruniform})
\begin{align}
\frac{1}{2}\|\rho_{\Ext(X,Y)YE}-\rho_{\uniform_\cZ}\otimes\rho_{\uniform_{\cY}}\otimes\rho_E\|\leq \varepsilon\ \label{eq:strongextractorquantumdef}
\end{align}
for all classical-quantum-states $\rho_{XE}$ with $\hmin(X|E)\geq k$. In this expression, $Y$ is an independent and uniform seed on $\cY$, and $\rho_{\uniform_\cZ}$ denotes the completely mixed state on $\cZ$, i.e., the state $\frac{1}{|\cZ|}\sum_{z\in\cZ} \proj{z}$. Clearly, a $(k,\varepsilon)$-strong quantum extractor is a $(k,\varepsilon)$-strong extractor in the original (classical) sense. The converse is not true in general (see~\cite{gavinskyetal07} for a particularly striking example in the bounded storage model). However, the left-over hash lemma can be generalised to the quantum case: the two-universal hashing construction $\Ext:\sbin^n\times\sbin^n\rightarrow\sbin^\ell$ is a $(k,\varepsilon)$-strong quantum extractor for any $k\geq \ell+2\log\textfrac{1}{\varepsilon}$, as shown by Renner~\cite{Ren05}. (The optimality of this extractor with respect to the number of extracted bits is shown below in Lemma~\ref{lem:hminextractablekeylength}.) As with classical extractors, an important goal is to find constructions which are more randomness-efficient, and satisfy additional properties such as local computability.

Similarly, a {\em $(k,k',\varepsilon)$-strong quantum condenser} $\Cond:\cX\times\cY\rightarrow\cZ$ is defined by the requirement (cf.~\eqref{eq:condclassicalprior})
\begin{align*}
\hmin^\varepsilon(\Cond(X,Y)|YE)\geq k'\ 
\end{align*}
for all $\rho_{XE}$ with $\hmin(X|E)\geq k$. In this expression, the {\em smooth min-entropy} $\hmin^\varepsilon(X|E)$ is defined by a maximisation over a set of operators in the vicinity of $\rho_{XE}$. Note that there is a certain freedom in these definitions (the only constraint is the preservation of the desirable composability properties). We choose to define the smooth min-entropy as 
\begin{align*}
\hmin^\varepsilon(X|E)=\sup_{\substack{\bar{\rho}_{XE}:\|\bar{\rho}_{XE}-\rho_{XE}\|\leq\varepsilon\\
\tr(\bar{\rho}_{XE})\leq 1}} \hmin(X|E)_{\bar{\rho}_{XE}}\ ,
\end{align*}
where the maximisation is over all subnormalised nonnegative operators $\bar{\rho}_{XE}$ in an $\varepsilon$-ball around $\rho_{XE}$, and the quantity on the rhs~is the  min-entropy of the corresponding operator (see below for a formal definition). As shown in~\cite{Ren05}, if $X$ is classical, this supremum is achieved by an operator $\bar{\rho}_{XE}$ which is classical on $\cX$. To guarantee compatibility of quantum condensers and extractors, we require a $(k,\varepsilon)$-strong quantum extractor to satisfy~\eqref{eq:strongextractorquantumdef} for all subnormalised nonnegative operators  $\rho_{XE}$ with classical part $X$ and $\hmin(X|E)\geq k$. This is true for two-universal hashing, as the analysis in~\cite{Ren05} shows.

\section{Our contribution\label{sec:ourcontribution}}
\subsection{Main result: samplers are quantum condensers}
Our main result states that samplers can be used to ``condense'' min-entropy even in a quantum context, in the same way as they give rise to randomness condensers for classical distributions (as discussed in Section~\ref{sec:averagingsamplersclassical}). More precisely, we consider an $n$-tuple $X^n=(X_1,\ldots,X_n)$ of random variables on $\cX^n$, where $\cX$ is a (large) alphabet. We show that relative to a quantum system $E$, the {\em min-entropy rate} is preserved when picking a random subset $X_{\cS}$ (using a sampler). 

To express this in a concise form, we introduce the {\em min-entropy rates}
\begin{align}
\hminrate^\varepsilon(A|B)_\rho&:=\frac{\hmin^\varepsilon(A|B)_\rho}{H_0(A)_\rho}\ ,\label{eq:smoothminentropyrate}
\end{align}
where $H_0(A)=\log |\cA|$ is the alphabet size of $A$. Our main result states that this quantity is approximately preserved under sampling. Clearly, when applied to a $(n,\xi,\varepsilon)$-sampler, such a statement must depend on the accuracy $\xi$ of the sampler and its failure probability~$\varepsilon$. For such a sampler and the situation described above, our main result is given by the inequality
\begin{align}
\hminrate^{\varepsilon'}(X_{\cS}|\cS E)_\rho&\geq \hminrate(X^n|E)_\rho-3\xi-2\kappa \log\textfrac{1}{\kappa}\ ,\label{eq:samplerlowerbound}
\end{align}
where the parameters $\varepsilon'$ and $\kappa$ are equal to
\begin{align*}
\varepsilon'=2\cdot 2^{-\xi n\log|\cX|}+3\varepsilon^{\textfrac{1}{4}}\qquad\textrm{ and }\qquad
\kappa =\frac{n}{|\cS|\log |\cX|}\ .
\end{align*}
(This result is stated as Corollary~\ref{cor:samplersminentropypreservation} below.) 
This inequality shows that (for appropriate alphabet sizes) the min-entropy rate is preserved, up to the accuracy of the sampler. As expected, the failure probability $\varepsilon$ of the sampler is reflected in the distance (i.e., the smoothness parameter $\varepsilon'$). In fact, this distance mainly depends on the failure probability of the sampler, and the term $2\cdot 2^{-\xi n\log|\cX|}$ is usually negligible.

Observe that the expression $2\kappa\log\textfrac{1}{\kappa}$ on the lhs~of~\eqref{eq:samplerlowerbound} goes to zero as $\kappa\rightarrow 0$. The parameter $\kappa$ captures the alphabet sizes in the problem; our result applies to regions where $\kappa$ is small. As $|\cS|\leq n$, this is equivalent to demanding that $\cX$ is a large alphabet. Thus we will henceforth assume that the random variables $X_i$ are large ``blocks''(instead of individual bits).

It is instructive to apply this result to the $\binom{[n]}{r}$-subset sampler: Here the error probability $\varepsilon$ decays exponentially with $r$ for any fixed $\xi\in [0,1]$. 
More precisely, the following reformulation of~\eqref{eq:samplerlowerbound} is obtained by setting $\Delta=3\xi+2\kappa\log\textfrac{1}{\kappa}$. 
We then have
\begin{align*}
\hminrate^{\varepsilon}(X_{\cS}|\cS E)_\rho&\geq
\hminrate(X^n|E)_\rho-\Delta\qquad\textrm{ for any }\Delta\geq 2\kappa\log\textfrac{1}{\kappa}\ , \textrm{ where }\\
\varepsilon&=e^{-\Omega\left(r(\Delta-2\kappa\log\textfrac{1}{\kappa})^2\right)} 
\end{align*}Thus (smooth) min-entropy-rate is preserved up to a constant, with an exponentially small error~$\varepsilon$.

\subsection{Related work}

We briefly explain how our contribution relates to other known results.  We stress that giving a comprehensive review of all the relevant areas is not the aim of this section. Nor do we attempt to provide a complete list of references; the pointers given here are mainly intended to facilitate access to further literature. We identify the following broad points of contact with previous work:

\subsubsection*{Quantum information about classical random variables: Random access encodings}

Our main result is an upper bound on the amount of information a quantum system gives about certain classical values. As such, it fits into a long line of work, the most prominent example of which is Holevo's  upper bound on the accessible information~\cite{holevo73}. 

More specifically, our result bounds the information about a (randomly selected) substring $X_{\cS}=(X_{s_1},\ldots,X_{s_r})$ of a classical string $X^n=(X_1,\ldots,X_n)$. In this sense, it is structurally identical to the {\em random access encodings} studied by Ambainis, Nayak, Ta-Shma and Vazirani~\cite{ANTV99}. Formally, an {\em $\binom{[n]}{1}\overset{p}{\mapsto}m$ random access encoding} maps $n$-bit strings $X^n=(X_1,\ldots,X_n)$ into $m$-qubit states $\rho_X$  in a way that allows to retrieve any (single) bit $X_i$ with probability at least $p$ by a measurement.\footnote{The notation used here is slightly different from these original papers.} Strengthening the result of~\cite{ANTV99}, Nayak~\cite{Nayak99} showed that at least  $m\geq (1-h(p))n$ qubits are needed  for this kind of encoding. (Here $h(\cdot)$ is the binary entropy function.) This can be understood as a precise expression of the qualitative statement that $m$~qubits cannot be used to store more than $m$~classical bits.

Recently, this result has been significantly generalized by Ben-Aroya, Regev and de Wolf~\cite{aroyaetal07}. They studied $\binom{[n]}{r}\overset{p}{\mapsto} m$ encodings, where the aim is to be able to retrieve {\em each substring} $X_\cS$ of length $r=|\cS|$ with probability at least~$p$ from the $m$-qubit state. They showed that the success probability $p$ decreases exponentially in $r$ when $m<0.7n$. The result~\cite{aroyaetal07} of Ben-Aroya et al.\ is of the same form as ours. Indeed, as explained below, in terms of entropies, it expresses the fact that in the studied situation, the  {\em entropy-rate} is preserved. However, there are at least three major differences to our work.

Firstly, \cite{aroyaetal07} provides an upper bound on the \emph{guessing probability} $p(X_{\cS} | E)$, i.e., the probability of retrieving the correct value $X_{\cS}$ given quantum information $E$, which is the figure of merit in the context of random access encodings. By virtue of the identity $p(X_{\cS}|E) = 2^{-H_{\min}(X_{\cS}|E)}$ (see~\cite{KoReSc08} for more details), their result implies a lower bound on the min-entropy (and, hence, also on the smooth min-entropy for any $\varepsilon \geq 0$). In contrast, we derive a lower bound on the \emph{smooth} min-entropy $H_{\min}^{\varepsilon}(X_{\cS}|E)$, which is the relevant quantity in the context of randomness extraction (e.g., in the bounded storage model). This, in turn, implies an upper bound on the guessing probability $p(X_{\cS} | E) \leq
2^{-H^{\varepsilon}_{\min}(X_{\cS} | E)} + \varepsilon$. Because our result is not optimized for very small $\varepsilon$, the upper bound on the guessing probability following from our result might be far below the bound of~\cite{aroyaetal07}. On the other hand, the bound on the smooth min-entropy implied by the result of~\cite{aroyaetal07} is below our bound, which is asymptotically optimal.

A second, apparently insignificant yet important difference between~\cite{aroyaetal07} and our work is the alphabet size of the random variables $X_i$  in the tuple $X^n=(X_1,\ldots,X_n)\in \cX^n$. While these are single bits in~\cite{aroyaetal07}, they may be random variables over a large alphabet in our work, i.e., every $X_i\in\sbin^c$ is itself a $c$-bit string for some (usually large\footnote{Note that our main result as stated in Corollary~\ref{cor:samplersminentropypreservation} does not directly apply to cases where the alphabet of the random variables $X_i$ is too small, e.g., if they are single bits. However, our result can be extended to these cases in the following way. Given, for instance, a bitstring $B = (B_1, \ldots, B_N)$, the permuted string, $B_{\pi} := (B_{\pi(1)}, \ldots, B_{\pi(N)})$, for any permutation $\pi \in S_N$, has the same min-entropy as $B$. We can therefore apply Corollary~\ref{cor:samplersminentropypreservation} to the permuted string $B_{\pi}$, for a randomly chosen $\pi$, and appropriately chosen partitioning $B_{\pi} = (X_1, \ldots, X_n)$ into $n$ blocks, resulting in a substring $B' = X_{\cS}$ with high min-entropy. Since, after undoing the permutation on $B'$, this string is identically distributed as a bitstring chosen at random from $B$, we conclude that the min-entropy rate is essentially conserved under random sampling of bits.}) $c$. In the latter case, choosing a random subset $\cS\subset [n]$ of size $r=|\cS|$ effectively generates a substring $X_{\cS}=(X_{s_1},\ldots,X_{s_r})$ of length $\ell = c r$ by blockwise sampling. For $c \gg \log n$, this procedure consumes only $\log \binom{n}{r} \leq r \log n \ll \ell$ random bits, in contrast to $\log\binom{n}{c r} \geq \ell$ when the individual bits are chosen at random. When applied to the bounded storage model, this means that we can extract more bits than the number of initial (shared) key bits. On the other hand, while the sample-and-hash approach can in principle be applied using the result of~\cite{aroyaetal07}, the number of extracted bits is much smaller than the number of initial key bits, that is, no significant key expansion can be achieved. 

Thirdly, the result of~\cite{aroyaetal07} measures the initial quantum information about the string $X$ in terms of the number $m$ of qubits used in the encoding. More precisely, it is assumed that $X$ is uniformly distributed and that at most $m$ qubits containing information about $X$ are stored in a quantum system $E$ (formally, $H_0(E) \leq m$, where $H_0(E)$ denotes the logarithm of the dimension of $E$). In contrast, our result applies more generally to situations where merely a lower bound on the quantity $H_{\min}(X|E)$ is known, while the quantum system $E$ may be arbitrarily large. The above special case where the dimension of $E$ is bounded follows from the general fact that $H_{\min}(X|E)\geq H_{\min}(X)-H_0(E)$.

\subsubsection*{Key extraction: Extractors and privacy amplification}

The study of key extraction in the presence of a classical adversary is, as argued above, equivalent to the question of constructing randomness extractors (see~\cite{Shal02} for a survey of this intensely studied subject). More specifically, two-universal hashing was first 
applied to privacy amplification in~\cite{BeBrRo88,BBCM95}. Maurer and Dziembowski~\cite{DziMau02,DziMau04a} obtained optimal protocols for key extraction in the (classical) bounded storage model. Lu~\cite{Lu02} made the connection to locally (or on-line) computable  {\em strong extractors}. Vadhan subsequently gave essentially optimal constructions by showing that sampling preserves min-entropy~\cite{Vadhan03}; the sampling approach for extracting randomness can be traced back to the work of Nisan and Zuckerman~\cite{NisZuc96} and abounds in the randomness extractor literature.

The situation in the presence of an adversary with prior quantum information is more intricate, and much less is known to date.  On the negative side, Gavinsky, Kempe, Kerenidis, Raz and de Wolf~\cite{gavinskyetal07} gave a surprising example of a classical extractor which fails to extract randomness in the presence of a quantum adversary (with a similar amount of quantum memory). 
On the positive side, Renner~\cite{Ren05} showed that two-universal hashing is optimal in the amount of extracted key (see also~\cite{RenKoe05}). K\"onig and Terhal~\cite{KoeTer06} showed that strong extractors with binary output also extract secure bits against quantum adversaries; this provides quantum extractors with short seeds, but does not achieve significant key expansion in the bounded storage model. Recently, new constructions of quantum extractors were proposed by Fehr and Schaffner~\cite{fehrscha07}. While these extractors can be used for privacy amplification, their parameters are not suitable for the bounded storage model.

\section{Proof sketch\label{sec:proofsketch}}
In this section, we give an informal overview of the main ideas involved in the proof of the result~\eqref{eq:samplerlowerbound}. In Section~\ref{sec:proofidea}, we give a simple proof of an analogous statement for the (classical) Shannon entropy. Our proof for (quantum) min-entropy mimics this line of argument, but differs in a few major points, as discussed below.

A few of our techniques may be of independent interest. A central idea is the {\em splitting} of a state into several components based on {\em conditional operators}; it leads to a modified chain-rule for min-entropies. We explain this in Section~\ref{sec:towardsmodifiedchainrule}. The converse procedure which we call {\em recombining} is especially interesting when only subsets of the split states are used in the recombination. The outcome of such a partial recombination is a state which approximates the original state. By selecting split states in a systematic fashion, we can single out the high-entropy components of a state. As we 
explain in Section~\ref{sec:partialrecombination}, this  is a fundamental tool for showing that a given state has a certain amount of (smooth) min-entropy.

We will conclude this part of the paper with an overview of how these two procedures -- the splitting and the recombining -- can be combined with an argument about samplers to give the result we seek. 

We stress that this section is introductory in nature, and the technical details are left to later sections. In particular, we will only argue qualitatively,
and the formulas in Sections~\ref{sec:towardsmodifiedchainrule} and~\ref{sec:partialrecombination} are not meant to be taken literally. However, the basic structure of our arguments will be exactly as sketched here.

\subsection{Proof idea\label{sec:proofidea}}
We show how to derive a modified statement related to~\eqref{eq:samplerlowerbound},  where we restrict our attention to probability distributions
and
where the min-entropy $H_{\min}(A|B)$ is replaced by the (conditional) Shannon entropy $H(A|B)=H(AB)-H(B)$. (Here $H(X)=-\sum_{x\in\cX} P_X(x)\log P_X(x)$ denotes the usual Shannon entropy.) This kind of proof is sketched in~\cite{NisZuc96}  to give an intuition why samplers are good condensers. However, neither the proofs in~\cite{NisZuc96} nor Vadhan's proof~\cite{Vadhan03} proceed along these lines.

The essential properties of the Shannon entropy used are the subadditivity property
\begin{align}\label{eq:shannonentropysubadditivity}
H(A|BC)\leq H(A|B)\ ,
\end{align}
i.e., the fact that further conditioning can only reduce the entropy, and the chain-rule
\begin{align}\label{eq:chainruleshannonentropy}
H(AB|C)=H(A|BC)+H(B|C)\ .
\end{align}

Consider  a probability distribution  $P_{X^nE}$, where $X^n=(X_1,\ldots,X_n)$ is an $n$-tuple of random variables. Our aim is to show that with high probability over a randomly chosen subset $\cS\subset [n]$ of size $|\cS|=r$, the entropy of $H(X_{\cS}|E)$ is approximately equal to $\frac{r}{n}H(X^n|E)$.

To abbreviate the notation, we will define 
\begin{align*}
\inc{X}{j}&=X_{j+1}X_{j+2}\cdots X_n\\
\dec{X}{j}&=X_1X_2\cdots X_j\qquad\textrm{ for }j\in [n]\\
\inc{X}{n}&=\emptyset\ 
\end{align*}
for any such $n$-tuple. The first step is what we call a {\em splitting step: } The chain-rule~\eqref{eq:chainruleshannonentropy} implies that the entropy $H(X^n|E)$
 can be decomposed into its
constituents, 
\begin{align*}
H(X^n|E) = \sum_{i=1}^n \alpha_i\qquad\textrm{ where }\qquad \alpha_i := H(X_i | \inc{X}{i}E)\textrm{ for any } i\in [n]\ .
\end{align*}
In other words, we have split the entropy into a sum of individual components.

If we now select a subset  $\cS\subset [n]$ of $r=|\cS|$ indices at random, then
Chernoff's inequality implies that the inequality \vspace{-1.5ex}
\begin{equation} \label{eq:chernoff} 
\frac{1}{r}  \sum_{s\in\cS} \alpha_{s} \geq \frac{1}{n} H(X^n|E) - O(\textfrac{1}{\sqrt{r}}) \
  ,  \vspace{-1ex}
\end{equation}
holds except with probability exponentially small in $r$. Note that this holds more generally for any $(n,\xi,\varepsilon)$-sampler $\cS$ with corresponding adaptations.

By strong subadditivity, we have 
\begin{align}
\alpha_{j} = H(X_{j} | \inc{X}{j}E) \leq H(X_{j} |\inc{X}{j\cap \cS}E)\qquad\textrm{ for any }j\in [n]\ ,\label{eq:strongsubapp}
\end{align}
where $X_{>j\cap \cS}$ is the concatenation of all
variables $X_{i}$ with $i>j$ and $i\in\cS$. With this inequality we can  essentially eliminate all variables $X_i$ with $i\not\in\cS$ from our inequalities.

The final step is what we call a {\em recombination step}: Using the chain
rule once again, we obtain with~\eqref{eq:strongsubapp}
\begin{align*}
H(X_{\cS}|E) = \sum_{s\in\cS}
H(X_{s} | \inc{X}{s\cap \cS}E) \geq \sum_{s\in\cS}\alpha_{s}\ .
\end{align*}
In other words, we can get a lower bound on the joint entropy $H(X_\cS|E)$ by combining the individual contributions $s\in\cS$. 

With~\eqref{eq:chernoff}, we conclude
that  with all but exponentially small probability, the (Shannon)-entropy rate is preserved when selecting a random subset.

The proof of our main result for min-entropy follows the same lines, with a modified chain-rule for min-entropies. 
Notice that the chain-rule in the form~\eqref{eq:chainruleshannonentropy} can be seen as the combination of two inequalities,
\begin{align}
H(AB|C)&\leq H(A|BC)+H(B|C)\ ,\label{eq:chainruleshannonsplitting}\\
H(AB|C)&\geq H(A|BC)+H(B|C)\label{eq:chainruleshannonrecombination}
\end{align}
both of which are used in the proof sketch. Indeed, the first inequality~\eqref{eq:chainruleshannonsplitting} allows us to divide the joint entropy $H(X^n|E)$ into a sum of individual contributions, whereas the second inequality~\eqref{eq:chainruleshannonrecombination} provides a lower bound on the joint entropy $H(X_{\cS}|E)$ in terms of its components. We refer to the first application as  a {\em splitting} and the second application as a {\em recombination step}.  For the min-entropy, these two steps are more involved; we do not only split and recombine entropies, but corresponding quantum states, as explained in the next section.

\subsection{Towards a modified chain-rule: Entropy-splitting\label{sec:towardsmodifiedchainrule}}

The subadditivity property~\eqref{eq:shannonentropysubadditivity} is
easily shown to hold for the min-entropy. Similarly, a
recom\-bi\-na\-tion-chain-rule~\eqref{eq:chainruleshannonrecombination}
can be proved for min-entropy. However, the splitting-chain
rule~\eqref{eq:chainruleshannonsplitting} is no longer true for
min-entropies and has to be replaced by a more subtle statement. This
can be seen as a quantum version of the \emph{entropy splitting lemma}
proposed in~\cite{Wullsc07}. It is a major component of our proof and
may be of independent interest.

To state this modified splitting-chain-rule, consider a state $\rho_{ABC}$ with purification $\ket{\Psi_{ABCD}}$. We will construct a decomposition
\begin{align}
\ket{\Psi_{ABCD}}=\sum_{\alpha} \ket{\Psi^\alpha_{ABCD}}\label{eq:entropysplitstatesdecomposition}
\end{align}
of $\ket{\Psi_{ABCD}}$ into mutually orthogonal subnormalised states $\{\ket{\Psi^\alpha_{ABCD}}\}_{\alpha}$ such that 
\begin{align}\label{eq:inequalitysplittingexact} 
\hmin(A|BC)_{\rho^\alpha}+\hmin(B|C)_{\rho^\alpha}\geq \hmin(AB|C)_\rho\ \end{align}
for every $\alpha$. In contrast to~\eqref{eq:chainruleshannonsplitting}, this statement splits the entropy into a sum of individual entropies of states which are {\em different} from the original state $\ket{\Psi_{ABCD}}$. They are, however, directly related to $\ket{\Psi_{ABCD}}$ by~\eqref{eq:entropysplitstatesdecomposition}; we call these states {\em split states}.

For technical reasons, it will be convenient to have a version
of~\eqref{eq:inequalitysplittingexact} which decomposes
$\ket{\Psi_{ABCD}}$ into a fixed number $m\in\mathbb{N}$ of
states. The indices $\alpha$ are then from the set
$[m]:=\{1,\ldots,m\}$, and~\eqref{eq:inequalitysplittingexact} is
replaced by
\begin{align}\label{eq:inequalitysplittingapproximate} 
\hmin(A|BC)_{\rho^\alpha}+\hmin(B|C)_{\rho^\alpha}\geq \hmin(AB|C)_\rho-\frac{\Delta}{m}\qquad\textrm{ for all }\alpha\in [m]\ ,
 \end{align}
 where $\Delta$ is function of $\ket{\Psi_{ABCD}}$ which can be
 bounded in situations of interest. (The exact statement is given as
 Corollary~\ref{cor:split} below.) An important property of the split
 states is that each $\ket{\Psi^\alpha_{ABCD}}$ is the result of
 applying a projection $Q_{AD}^\alpha$ to $\ket{\Psi_{ABCD}}$, where
 $Q_{AD}^\alpha$ only acts non-trivially on systems $A$ and $D$.

\subsection{(Partial) recombination of split states\label{sec:partialrecombination}}
Decomposing a state $\ket{\Psi_{ABC}}$ into a sum of mutually orthogonal states $\{\ket{\Psi^\alpha_{ABC}}\}_{\alpha\in [m]}$ gives us a convenient way of bounding the (smooth) min-entropy of $\ket{\Psi_{ABC}}$. The general procedure is as follows: Suppose for example that our aim is to bound the quantity $H(A|B)_\rho$ from below. 
We will show that if the entropy $H(A|B)_{\rho^\alpha}$ is large for every split state $\ket{\Psi^{\alpha}}$, then the same is true for the quantity $\hmin(A|B)_\rho$ (up to a correction of size $\log m$, see Lemma~\ref{lem:recombinationbasic}  for a precise statement).

We can use this fact to show that a state $\ket{\Psi_{ABC}}$ is close to a state $\ket{\widehat{\Psi}_{ABC}}$ with large min-entropy $H(A|B)_{\widehat{\rho}}$. We start from an arbitrary orthogonal decomposition of $\ket{\Psi_{ABC}}$ of the form~\eqref{eq:entropysplitstatesdecomposition} into $m$ states $\{\ket{\Psi^\alpha}\}_{\alpha\in [m]}$. We then identify a subset $\Gamma(\lambda)\subset [m]$ with the property that
\begin{align*}
\hmin(A|B)_{\rho^\alpha}\geq \lambda\qquad\textrm{ for all }\alpha\in \Gamma(\lambda)\ .
\end{align*}
We define the {\em partially recombined state }
\begin{align*}
\ket{\widehat{\Psi}_{ABC}}=\sum_{\alpha\in \Gamma(\lambda)} \ket{\Psi^\alpha}\ .
\end{align*}
We can show that  $H(A|B)_{\widehat{\rho}}\gtrsim \lambda$ is large. Moreover, since the states $\{\ket{\Psi^\alpha_{ABC}}\}_{\alpha\in [m]}$ are assumed to be orthogonal, we can bound the distance of $\ket{\widehat{\Psi}_{ABC}}$ to the original state $\ket{\Psi_{ABC}}$ by an expression of the form $2\sqrt{1-\omega(\Gamma(\lambda))}$, where $\omega(\Gamma(\lambda))$ is the weight of $\Gamma(\lambda)$ under the probability distribution $\omega(\alpha)=\tr \proj{\Psi^\alpha_{ABC}}$ on $[m]$. In this way, showing that the smooth min-entropy $H^\varepsilon(A|B)_{\rho}$ of $\ket{\Psi_{ABC}}$ is lower bounded by a value $\lambda$ reduces to showing that the corresponding set $\Gamma(\lambda)$ has a large weight under~$\omega$.

\subsection{Putting it together: splitting, sampling and recombining\label{sec:splittingsamplingrecombining}}
Let us now return to our original problem: Given a quantum state $\rho_{X^nE}=\rho_{X_1\cdots X_nE}$ with purification $\ket{\Psi}$, we would like to show that $\hmin^\varepsilon(X_{\cS}|E)_\rho$ is large with high probability over the choice of $\cS\subset [n]$. To illustrate the  required steps in the proof, let us consider a simple example where $n=4$.

The first step is to apply the splitting rule to $\ket{\Psi}$, dividing the joint entropy $\hmin(X^n|E)$ into a contribution from $X_1$ and the remainder. This gives $m$ states $\ket{\Psi^{\alpha_1}}_{\alpha_1\in [m]}$ with the property that for all $\alpha^1\in [m]$, 
\begin{align}
\hmin(X^4|E)_\rho\lesssim \hmin(X_1|\inc{X}{1}E)_{\rho^{\alpha_1}}+\hmin(\inc{X}{1}|E)_{\rho^{\alpha_1}}\ .\label{eq:hsplitfirst}
\end{align}
(Here $\rho^{\alpha_1}=\proj{\Psi^{\alpha_1}}$ denotes the density operator corresponding to $\ket{\Psi^{\alpha_1}}$.) We then apply the splitting-chain-rule to each of these states in order to split $\hmin(\inc{X}{1}|E)_{\rho^{\alpha_1}}=\hmin(X_2X_3X_4|E)_{\rho^{\alpha_1}}$ into the contribution of $X_2$ and the remaining part. This results, for each $\alpha_1\in [m]$, in a collection of states $\{\ket{\Psi^{\alpha_1\alpha_2}}\}_{\alpha_2\in [m]}$ satisfying
\begin{align}
\hmin(\inc{X}{1}|E)_{\rho^{\alpha_1}}\lesssim\hmin(X_2|\inc{X}{2}E)_{\rho^{\alpha_1\alpha_2}}+\hmin(\inc{X}{2}|E)_{\rho^{\alpha_1\alpha_2}}\ .\label{eq:hsplitsecond}
\end{align}
Finally, dividing the last term into contributions from $X_3$ and $X_4$, we get, for each $(\alpha_1,\alpha_2)\in [m]^2$, a family of states $\{\ket{\Psi^{\alpha_1\alpha_2\alpha_3}}\}_{\alpha_3\in [m]}$ such that
\begin{align}
\hmin(\inc{X}{2}|E)_{\rho^{\alpha_1\alpha_2}}\lesssim\hmin(X_3|\inc{X}{3}E)_{\rho^{\alpha_1\alpha_2\alpha_3}}+\hmin(X_4|E)_{\rho^{\alpha_1\alpha_2\alpha_3}}\ .\label{eq:hsplitthird}
\end{align}
This completes the splitting step. Summarising, we have obtained a collection of states starting from $\ket{\Psi}$: Those states $\{\ket{\Psi^{\alpha_1}}\}_{\alpha_1\in [m]}$ obtained by applying the splitting-chain-rule once, the states $\{\ket{\Psi^{\alpha_1\alpha_2}}\}_{\alpha_1\alpha_2\in [m]^2}$ corresponding to states that are the result of splitting twice and so on. 

A useful geometric visualisation (which is, however, not essential for the proof) is obtained by placing these states at the vertices of an $m$-ary tree (in this case of depth $3$). We place $\ket{\Psi}$ at the root, and the descendants of each vertex are the split states obtained by splitting. Thus every $3$-tuple $(\alpha_1,\alpha_2,\alpha_3)\in [m]^3$ specifies a path with vertex labels $(\ket{\Psi},\ket{\Psi^{\alpha_1}},\ket{\Psi^{\alpha_1\alpha_2}},\ket{\Psi^{\alpha_1\alpha_2\alpha_3}})$ from the root to a leaf.

 Let us combine inequalities~\eqref{eq:hsplitfirst}--\eqref{eq:hsplitthird} into
\begin{align*}
\hmin(X^4|E)_\rho&\lesssim \hmin(X_1|\inc{X}{1}E)_{\rho^{\alpha_1}}+\hmin(X_2|\inc{X}{2}E)_{\rho^{\alpha_1\alpha_2}}\\
&\ \ \ +\hmin(X_3|\inc{X}{3}E)_{\rho^{\alpha_1\alpha_2\alpha_3}}+\hmin(X_4|E)_{\rho^{\alpha_1\alpha_2\alpha_3}}\qquad\textrm{for all }\alpha^3=(\alpha_1,\alpha_2,\alpha_3)\in [m]^3\ .
\end{align*}
 By attaching the entropies of interest to the edges of the mentioned tree, we can interpret this inequality as expressing the fact that the sum of the values of the edges along each path of the tree from the root to a leaf is lower bounded by $\hmin(X^4|E)_\rho$.

The next things to consider are the sampling- and recombination step. Our aim is to show that the smooth entropy $\hmin^\varepsilon(X_{\cS}|E)_{\rho}$ is large (with high probability over the choice of the subset $\cS\subset [4]$). We follow the procedure outlined in the previous section. That is, we define the recombined state
\begin{align*}
\ket{\widehat{\Psi}}&=\sum_{\alpha^3\in\Gamma(\lambda,\cS)} \ket{\Psi^{\alpha^3}}
\end{align*}
where $\Gamma(\lambda,\cS)$ is the set of paths $\alpha^3\in [m]^3$ with the property that 
\begin{align*}
\delta_{1\in\cS}\hmin(X_1|\inc{X}{1}E)_{\rho^{\alpha_1}}+\delta_{2\in\cS}\hmin(X_2|\inc{X}{2}E)_{\rho^{\alpha_1\alpha_2}}&\\
\qquad +\delta_{3\in\cS}\hmin(X_3|\inc{X}{3}E)_{\rho^{\alpha_1\alpha_2\alpha_3}}+\delta_{4\in\cS}\hmin(X_4|E)_{\rho^{\alpha_1\alpha_2\alpha_3}}&\geq \lambda\ .
\end{align*}
In other words, we restrict our attention to paths (and corresponding states) which (when restricted to $\cS$), have large entropy. We then need to show the following:
\begin{enumerate}[(i)]
\item\label{it:samplercorrectness}
with high probability over the choice~$\cS$, the state $\ket{\widehat{\Psi}}$  is close to $\ket{\Psi}$
\item\label{it:entropypsihat}
the entropy $H(X_{\cS}|E)_{\hat{\rho}}$ is large.
\end{enumerate}
The proof of~\eqref{it:samplercorrectness} again involves  a bound of the form
\[
\frac{1}{2}\big\|\proj{\Psi}-\proj{\widehat{\Psi}}\big\|\leq \sqrt{1-\omega(\Gamma(\lambda,\cS))}\ ,
\]
where $\omega(\alpha^3)=\tr\proj{\Psi^{\alpha^3}}$, $\alpha^3\in [m]^3$ is a (fixed) probability distribution on the leaves. We will show that a sampler has the following property, when applied to the situation described above (see Section~\ref{sec:averagingsamplersmatrix}): With high probability over the choice of $\cS$, the weight $\omega(\Gamma(\lambda,\cS))$ is large. More generally, we show how the sampler-property extends from a single sequence of values to the case of a matrix of values (in our case corresponding to edges of a tree).

The proof of~\eqref{it:entropypsihat} is done inductively using subadditivity, the recombination-chain-rule, and the recombination argument outlined above. For concreteness, suppose for example that $\cS=\{2,4\}$. Then we have
\begin{align*}
\hmin(X_2|\inc{X}{2}E)_{\rho^{\alpha_1\alpha_2}}+\hmin(X_4|E)_{\rho^{\alpha_1\alpha_2\alpha_3}}\geq \lambda\ 
\end{align*}
for all $\alpha^3=(\alpha_1,\alpha_2,\alpha_3)\in\Gamma(\lambda,\cS)\subset [m]^3$. It is convenient to rephrase this as follows, writing $\alpha^3=(\alpha^2,\alpha_3)$. We then have for all $\alpha^2\in [m]^2$
\begin{align*}
\hmin(X_4|E)_{\rho^{(\alpha^2,\alpha_3)}}\geq \lambda-\hmin(X_2|\inc{X}{2}E)_{\rho^{\alpha^2}}\qquad\textrm{ for all }\alpha_3\textrm{ with }(\alpha^2,\alpha_3)\in\Gamma(\lambda,\cS)\ .
\end{align*}
In particular, when we apply this to the (intermediate) partially recombined states
\begin{align}\label{eq:intermpartrecomb}
\ket{\widehat{\Psi}^{\alpha^2}}=\sum_{\alpha_3:(\alpha^2,\alpha_3)\in\Gamma(\lambda,\cS)} \ket{\Psi^{(\alpha^2,\alpha_3)}}\ ,
\end{align}
we obtain
\begin{align*}
\hmin(X_4|E)_{\widehat{\rho}^{\alpha^2}}\gtrsim \lambda-\hmin(X_2|\inc{X}{2}E)_{\rho^{\alpha^2}}\qquad\textrm{ for all }\alpha^2\ .
\end{align*}
We will also use the fact that the recombined states satisfy $\hmin(X_2|\inc{X}{2}E)_{\widehat{\rho}^{\alpha^2}}\geq\hmin(X_2|\inc{X}{2}E)_{\rho^{\alpha^2}}$ (see Lemma~\ref{lem:subsetgammastates}\eqref{eq:rhobentropy}).
Subadditivity gives $\hmin(X_2|X_4E)_{\widehat{\rho}^{\alpha^2}}\geq \hmin(X_2|\inc{X}{2}E)_{\widehat{\rho}^{\alpha^2}}$ for all $\alpha^2\in [m]$. With the previous two inequalities, we therefore get
\begin{align*}
\hmin(X_4|E)_{\widehat{\rho}^{\alpha^2}}+\hmin(X_2|X_4E)_{\widehat{\rho}^{\alpha^2}}\gtrsim\lambda\ .
\end{align*}
This in turn implies
\begin{align*}
\hmin(X_2X_4|E)_{\widehat{\rho}^{\alpha^2}}\gtrsim \lambda\qquad\textrm{for all }\alpha^2\in [m]^2
\end{align*}
by the recombination-chain-rule. 
Because $\ket{\bPsi}$ can be written as sum of the states~\eqref{eq:intermpartrecomb}, the recombination-procedure then gives
\begin{align*}
\hmin(X_2X_4|E)_{\widehat{\rho}}\gtrsim\lambda\ ,
\end{align*}
as claimed.

This line of argument can be followed more generally for a general subset $\cS\subset [n]$. We will need intermediate (partially) recombined states $\{\ket{\widehat{\Psi}^{\alpha^j}}\}_{\alpha^j\in [m]^j}, j\in [n]$; these can again be thought of as being attached to the vertices of a tree. They are defined recursively, by recombining ``good'' states (i.e., those corresponding to prefixes of elements in $\Gamma(\lambda,\cS)$). In other words, when recombining, we work our way up the tree (omitting ``bad'' states, i.e., those with small entropies.)

This concludes our sketch proof; it is now time to elaborate on the details.

\section{Rules and tools for min-entropy\label{sec:rulesandtoolsentropy}}
In this section, we set the ground for our result concerning samplers. In particular, we formally introduce the conditional min-entropy $\hmin(A|B)_{\rho}$ in  Section~\ref{sec:basicdefinitionsh}. This will be done via an intermediate quantity $H(A|B)_{\fr{\rho}{\sigma}}$.  Most of our rules for min-entropy, the most basic of which are stated in Section~\ref{sec:basicrules}, apply to these intermediate quantities; they will be our main object of study. In Section~\ref{sec:entropysplitting}, we establish our central splitting-chain-rule.

\subsection{Preliminaries\label{sec:preliminaries}}

Throughout, we consider nonnegative operators acting on finite-dimensional Hilbert spaces (or {\em systems}) $\cH_A,\cH_B,\ldots$ and their tensor products. We use subscripts to indicate which systems an operator acts on. We also use subscripts when we trace out systems, but sometimes make use of superscripts to denote ``tracing out everything but'', in the
following sense: for a tripartite state $\rho_{ABC}$, we write
$\tr_{BC}(\rho_{ABC})=\tr_{\overline{A}}(\rho_{ABC})=\rho_A$ for the reduced density operator on $A$. As explained above, we sometimes abuse notation by omitting identities. For example, we will write operator inequalities such as
\begin{align*}
\rho_{AB}\leq \sigma_B\ ,
\end{align*}
for a bipartite operator $\rho_{AB}$ and an operator $\sigma_B$ on $\cH_B$. By this inequality, we simply mean $\rho_{AB}\leq\id_{A}\otimes\sigma_B$ (which is defined by the condition that $\id_A\otimes\sigma_B-\rho_{AB}$ is a nonnegative operator). More generally, when writing operators on multipartite systems, we omit identities whenever a unique meaningful statement can be obtained by tensoring corresponding identities to the operators. To give an example, we will write expressions such as
\begin{align*}
Q_B\rho_{AB} Q_B\leq  P_D\rho_{ABCD}P_D\ ,
\end{align*}
where the operators act on the spaces indicated by subscripts, instead of
\begin{align*}
(\id_{A}\otimes Q_B\otimes\id_{CD})(\rho_{AB}\otimes\id_{CD})(\id_{A}\otimes Q_B\otimes\id_{CD})\leq (\id_{ABC}\otimes P_D)\rho_{ABCD}(\id_{ABC}\otimes P_D)\ .
\end{align*}

Basic properties of operator inequalities we need are their preservation under partial traces and the application of operators, i.e., the  fact that $\rho_{AB}\leq \sigma_{AB}$ implies that 
\begin{align*}
\rho_{A}\leq \sigma_{A}\
\end{align*}
and 
\begin{align*}
T_{AB}\rho_{AB}T^\dagger_{AB}\leq T_{AB}\sigma_{AB}T^\dagger_{AB}
\end{align*}
for any operator $T_{AB}$ on $\cH_A\otimes\cH_B$.

For two operators $\rho_{AB}$ on $\cH_A\otimes\cH_B$ and $\sigma_B$ on $\cH_B$ such that the support of $\rho_B$ is contained in the support of $\sigma_B$, the {\em conditional operator} $\fr{\rho_{AB}}{\sigma_B}$ is defined as\footnote{Note that for $\sigma_B=\rho_B$, definition~\eqref{eq:deifnitionconditionaloperator} coincides with the conditional operator $\rho_{A|B}=\fr{\rho_{AB}}{\rho_B}$ discussed, e.g., in~\cite{leifer07}.}
\begin{align}
  \fr{\rho_{AB}}{\sigma_{B}}:=  \sigma_B^\mhalf\rho_{AB}\sigma_B^\mhalf \ .\label{eq:deifnitionconditionaloperator}
\end{align}\
Here $\sigma_B^\mhalf := \sqrt{\sigma_B^{-1}}$, where $\sigma_B^{-1}$ is the {\em generalised inverse}\footnote{The \emph{generalised inverse} $\sigma^{-1}$ of an operator $\sigma$ is defined as the operator which has the same eigenspaces as $\sigma$ with zero eigenvalue on the null eigenspace of $\sigma$ and eigenvalues $\lambda^{-1}$ on the eigenspace of $\sigma$ corresponding to the eigenvalue $\lambda > 0$.} of $\sigma_B$. An important property of conditional operators is that 
\begin{align}\label{eq:divpartialtraceconsistency}
  \frac{\rho_{AB}}{\sigma_B}=\tr_C\bigl(\frac{\rho_{ABC}}{\sigma_B}\bigr)\ 
\end{align}
for any tripartite operator $\rho_{ABC}$.

We will say that a bipartite operator $\rho_{AE}$ on $\cH_A\otimes
\cH_E$ is {\em classical on $A$ (relative to an orthonormal basis
  $\{\ket{a}\}_a$ of $\cH_A$)} if it has the form $\rho_{AE}=\sum_a
\proj{a}_A\otimes \rho^a_E$. Clearly, if $\rho_{AE}$ is classical on
$A$ relative to $\{\ket{a}\}_a$, then so is
$\rho'_{AE}=O_{AE}\rho_{AE}O_{AE}^\dagger$, for any operator of the
form $O_{AE}=\sum_{a}\proj{a}\otimes O^a_E$. It is easy to verify that
this statement is still true when considering purifications and
additional classical systems: If $\ket{\Psi_{ABEF}}$ is such that the
reduced density operator $\rho_{ABE}$ is classical on both $A$ and $B$
(relative to some orthonormal bases) then the same is
true\footnote{This can be seen by decomposing the state as
  $\ket{\Psi_{ABEF}}=\sum_{a,b}\ket{a}\ket{b}\ket{\varphi^{a,b}_{EF}}$. Classicality
  of the state $\rho_{ABE}$ on $A$ and $B$ then implies that
  $\tr_{F}(\ket{\varphi^{a,b}_{EF}}\bra{\varphi^{a',b'}_{EF}})=0$
  whenever $(a,b)\neq (a',b')$. The claim can then be deduced from the
  fact that
  $\tr_{F}(O^{a}_E\ket{\varphi^{a,b}_{EF}}\bra{\varphi^{a',b'}_{EF}}(O^{a'}_E)^\dagger)=O^{a}_E\tr_{F}(\ket{\varphi^{a,b}_{EF}}\bra{\varphi^{a',b'}_{EF}})(O^{a'}_E)^\dagger$.}
for the state $O_{AE}\ket{\Psi_{ABEF}}$.

\subsection{Definition of min-entropy\label{sec:basicdefinitionsh}}

As already mentioned, every pair of operators $\rho_{AB}$ on $\cH_A\otimes\cH_B$ and $\sigma_B$ on $\cH_B$ such that the support of $\rho_B$ is contained in the support of $\sigma_B$ give rise to a conditional operator $\fr{\rho_{AB}}{\sigma_B}$.\footnote{In the following, we will always assume that the support of $\rho_B$ is contained in the support of $\sigma_B$, such that the conditional operator is well defined.} We define the quantity  $H(A|B)_{\fr{\rho}{\sigma}}$  as minus the logarithm\footnote{All logarithms $\log$ are binary; natural logarithms will be denoted by $\ln$.} of the maximal eigenvalue of this conditional operator, that is
\begin{align*}
\h{A}{B}{}{\rho}{\sigma}:=-\log \lambda_{\max} (\fr{\rho_{AB}}{\sigma_B})\ .
\end{align*}
In some sense, this can be read as ``the entropy of $\rho_A$ when it is conditioned on $\sigma_B$''; in the case where $A$ is classical, the operator $\sigma_B$ is related to a measurement on $\cH_B$ (which is supposed to reproduce the value on $A$, see~\cite{KoReSc08}).

Maximising this quantity over all nonnegative trace-one operators $\sigma_B$ whose support contains the support of $\rho_B$ gives the {\em min-entropy of $A$ given $B$}, defined as\footnote{In Section~\ref{sec:extracionpriorqinfo}, the quantity $\hmin(A|B)_\rho$ was introduced without explicit reference to the intermediate quantities $H(A|B)_{\fr{\rho}{\sigma}}$.} 
\begin{align}\label{eq:minentropyconvdef}
\hmin(A|B)_\rho:=\sup_{\sigma_B} H(A|B)_{\fr{\rho}{\sigma}}\ .
\end{align}
This quantity has a simple operational interpretation, as will be shown in a forthcoming publication~\cite{KoReSc08}: it is equivalent to the maximal probability of guessing $A$ given $B$, in the case where $A$ is classical.

While~\eqref{eq:minentropyconvdef} is ultimately the quantity of interest, the intermediate quantities $H(A|B)_{\fr{\rho}{\sigma}}$ are easier to manipulate, and satisfy various useful rules. As we will see below, most of these follow more or less directly from the alternative characterisation
\begin{align}
  \h{A}{B}{}{\rho}{\sigma}&:=-\log\min \{\lambda\ :\ \frac{\rho_{AB}}{\sigma_B} \leq \lambda \cdot \id_{AB}\}\label{eq:operatorinequalityhmin}
\end{align}
in terms of a family of operator inequalities.

For consistency reasons, it is convenient to set
\begin{align*}
H(A|B)_{\rho}&:=H(A|B)_{\fr{\rho}{\rho}}\\
\hcond{B}{\rho}{\sigma}&:=-\log\min \{\lambda\ :\ \frac{\rho_{B}}{\sigma_B} \leq \lambda \cdot \id \}\ ,
\end{align*}
where $\fr{\rho_B}{\sigma_B}=\sigma_B^\mhalf\rho_B\sigma_B^\mhalf$. Note that the latter quantity is equal to $H(A|B)_{\fr{\rho}{\sigma}}$ if the Hilbert space $\cH_A$ corresponding to system $A$ is trivial, i.e., $\cH_A\cong\mathbb{C}$. We can think of the quantity $H(B)_{\fr{\rho}{\sigma}}$ as  a conditional entropy obtained by adjoining a trivial system to $B$ using the isomorphism $\cH_B\cong \cH_B\otimes\mathbb{C}$. Informally, this corresponds to a situation where we condition ``nothing'' on $\sigma_B$; formally, it will turn out to be convenient to define $H(\emptyset|B)_{\fr{\rho}{\sigma}}:=\hcond{B}{\rho}{\sigma}$.

Finally, we will also (formally) encounter situations where $\rho_{AB}=0$; in these cases, we formally set $H(A|B)_{\rho}=\infty$, $H(A|B)_{\fr{\rho}{\sigma}}=\infty$, meaning that an arbitrarily large value can be assigned to these quantities in any identity where they appear.

For a parameter $\varepsilon \geq 0$, the $\varepsilon$-smooth min-entropy of $A$ given $B$ is equal to (cf.~\cite{Ren05}) 
\begin{align*}
\hmin^\varepsilon(A|B)_{\rho}=\sup_{\substack{\bar{\rho}_{AB}: \|\bar{\rho}_{AB}-\rho_{AB}\|\leq \varepsilon\\
\tr(\bar{\rho}_{AB})\leq 1}}\hmin(A|B)_{\bar{\rho}}\ , 
\end{align*}
where the supremum is over all nonnegative operators $\bar{\rho}_{AB}$ with trace bounded by~$1$ in an $\varepsilon$-ball around $\rho_{AB}$. (Here $\|A\|=\tr\sqrt{A^\dagger A}$ is the $L_1$-norm.) 

As already mentioned, the (smooth) entropy $\hmin^\varepsilon(X|E)$ captures the 
number of secret bits extractable from $X$ with respect to an adversary holding~$E$. The following lemma justifies this operational interpretation.
\begin{lemma}\label{lem:hminextractablekeylength}
Consider a state $\rho_{XE}$ where $X$ is classical. 
Let $\rho_{\uniform_{\sbin^\ell}}$ denote the completely mixed state on $\sbin^\ell$.
Then
\begin{enumerate}[(i)]
\item\label{it:leftover} For any $\ell\leq
  \hmin(X|E)-2\log\textfrac{1}{\varepsilon}$, there is a function 
  $f:\cX\times\cX\rightarrow\sbin^\ell$ (independent of $\rho_{X E}$) which extracts an $\ell$-bit
  string $Z=f(X,Y)$ from $X$, such that $Z$ is $\varepsilon$-close to
  uniform and independent of $(E,Y)$, where $Y$ is a uniform and
  independent seed. In formulae, we have
\begin{align*}
\frac{1}{2}\|\rho_{f(X,Y)YE}-\rho_{\uniform_{\sbin^\ell}}\otimes \rho_{Y}\otimes\rho_E\|\leq \varepsilon\ .
\end{align*}
\item\label{it:optimalityleftover}
For any function $f:\cX\rightarrow\sbin^\ell$ and $\varepsilon\geq 0$, the inequality
  \[
  \frac{1}{2}\bigl\| \rho_{f(X) E} - \rho_{\uniform_{\sbin^\ell}} \otimes \rho_E \bigr\| \leq \varepsilon
  \]
  implies
  \[
    \hmin^{2\varepsilon}(X|E)_{\rho} \geq \ell \ .
  \]
\end{enumerate}
\end{lemma}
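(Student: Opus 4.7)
The lemma bundles a direct and a converse assertion, which I would treat separately.

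For~\ref{it:leftover}, I would take $f(x,y):=h_y(x)$ with $\{h_y:\cX\to\sbin^\ell\}_{y\in\cX}$ a two-universal family of hash functions (for instance, truncation to $\ell$ bits of $x\cdot y$ computed in $GF(2^{\log|\cX|})$), and invoke Renner's quantum leftover-hash lemma~\cite{Ren05}. Its proof bounds the expected conditional $L_2$-distance of $\rho_{h_Y(X)YE}$ from $\rho_{\uniform_{\sbin^\ell}}\otimes\rho_Y\otimes\rho_E$ by two-universality (relative to the optimal $\sigma_E$ realising $\hmin(X|E)$) and then converts to a trace-norm bound via Cauchy--Schwarz; the hypothesis $\ell\leq\hmin(X|E)-2\log\textfrac1\varepsilon$ is exactly what makes the resulting $L_1$-distance at most~$\varepsilon$.

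For~\ref{it:optimalityleftover}, my plan is the three-step chain $\ell\leq\hmin(f(X)|E)_{\rho_{\uniform_{\sbin^\ell}}\otimes\rho_E}\leq\hmin^{2\varepsilon}(f(X)|E)_\rho\leq\hmin^{2\varepsilon}(X|E)_\rho$. The first inequality is immediate from $\rho_{\uniform_{\sbin^\ell}}\otimes\rho_E=2^{-\ell}\id\otimes\rho_E$ together with the choice $\sigma_E=\rho_E$ in~\eqref{eq:minentropyconvdef}. The second follows because the hypothesis places the uniform-product state within $L_1$-distance $2\varepsilon$ of $\rho_{f(X)E}$, making it a valid smoothing witness. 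The third is the data-processing inequality $\hmin^{2\varepsilon}(f(X)|E)_\rho\leq\hmin^{2\varepsilon}(X|E)_\rho$ for the deterministic classical map $x\mapsto f(x)$ on CQ states. At the unsmoothed level this is a one-line computation: writing $\rho_{XE}=\sum_xP_X(x)\proj x\otimes\rho_E^x$ and $\tau_E^z:=\sum_{x:f(x)=z}P_X(x)\rho_E^x$, the inequality $\tau_E^z\leq 2^{-\ell}\sigma_E$ forces $P_X(x)\rho_E^x\leq\tau_E^{f(x)}\leq 2^{-\ell}\sigma_E$ for every $x$, giving $\hmin(X|E)_\rho\geq\hmin(f(X)|E)_\rho$. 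For the smoothed version, given an optimising subnormalised $\bar\rho_{f(X)E}=\sum_z\proj z\otimes\bar\tau_E^z$, I would first apply the Jordan decomposition $\bar\rho_{f(X)E}-\rho_{f(X)E}=P-N$ (with $P,N\geq 0$, $PN=0$ and $\|P\|_1+\|N\|_1\leq 2\varepsilon$) to replace $\bar\rho$ by $D:=\bar\rho_{f(X)E}-P=\rho_{f(X)E}-N$, which is nonnegative, satisfies $D\leq\bar\rho_{f(X)E}$ (so $\hmin(f(X)|E)_D\geq\ell$) and $D\leq\rho_{f(X)E}$ (so $D_E^z\leq\tau_E^z$ fibre-wise), and lies within $L_1$-distance $\|N\|_1\leq 2\varepsilon$ of $\rho_{f(X)E}$. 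It then remains to lift $D$ fibre-wise back to the $X$-register.

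The main obstacle is precisely this lift. Writing $A_x:=P_X(x)\rho_E^x$ and $G^z:=\tau_E^z-D_E^z\geq 0$, what I need are nonnegative operators $\gamma_x\leq A_x$ with $\sum_{x:f(x)=z}\gamma_x=G^z$, so that $\bar\alpha_x:=A_x-\gamma_x\geq 0$ produces the correct fibre marginals $\sum_{x:f(x)=z}\bar\alpha_x=D_E^z\leq 2^{-\ell}\sigma_E$ (giving $\hmin(X|E)_{\bar\rho}\geq\ell$ by the unsmoothed argument) while $\|\bar\rho_{XE}-\rho_{XE}\|_1=\sum_x\tr\gamma_x=\sum_z\tr G^z\leq 2\varepsilon$. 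Unfortunately such a ``bounded decomposition'' $G^z=\sum_x\gamma_x$ with $0\leq\gamma_x\leq A_x$ need not exist for non-commuting summands: a two-dimensional counterexample with $A_1=\proj 0$, $A_2=\proj +$ and $G^z$ proportional to $\proj 1$ shows that the individual bound $\gamma_x\leq A_x$ cannot always be met. I would therefore attempt a less rigid construction, for instance replacing $\gamma_x\leq A_x$ by a fibre-wise CP-contraction $\mathcal E_z$ satisfying $\mathcal E_z(\tau_E^z)=D_E^z$ and setting $\bar\alpha_x:=\mathcal E_{f(x)}(A_x)$, or passing through a purification of $\rho_{XE}$ to perform the lift coherently on the environment. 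Verifying that the resulting $\bar\rho_{XE}$ simultaneously achieves positivity, the correct fibre marginals, and the trace-distance bound $\|\bar\rho_{XE}-\rho_{XE}\|_1\leq 2\varepsilon$ is the delicate technical heart of the proof.
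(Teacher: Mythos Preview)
Your three-step chain for part~\eqref{it:optimalityleftover} is exactly the paper's argument, and part~\eqref{it:leftover} is handled identically by both you and the paper, by citing Renner's quantum leftover-hash lemma. The only difference is that the paper does not attempt to prove the final data-processing step $\hmin^{2\varepsilon}(f(X)|E)_\rho \leq \hmin^{2\varepsilon}(X|E)_\rho$; it simply invokes it as a known property from~\cite{Ren05} (``since the min-entropy can only decrease when applying a function'').

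The obstacle you have isolated in re-proving the smoothed data-processing inequality from scratch is genuine: your counterexample correctly shows that the bounded fibre-wise decomposition $G^z=\sum_x\gamma_x$ with $0\leq\gamma_x\leq A_x$ can fail for non-commuting summands, and for your CP-contraction alternative $\bar\alpha_x=\mathcal E_{f(x)}(A_x)$ (which does give positivity and the correct marginals) the trace-distance control $\sum_x\|\bar\alpha_x-A_x\|_1\leq 2\varepsilon$ is not automatic. The inequality is nonetheless true, and for the purposes of this lemma you are entitled to cite it exactly as the paper does. If you want a self-contained argument, the cleanest route is not the fibre-wise lift you sketch but rather to work with purified-distance (fidelity-based) smoothing, where Uhlmann's theorem makes the lift along a CPTP map immediate, and then relate back to trace distance via Fuchs--van de Graaf; this is essentially how the monotonicity is established in~\cite{Ren05}.
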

\begin{proof}
Statement~\eqref{it:leftover} is a reformulation of the fact that the two-universal hashing construction is a quantum extractor, as shown by Renner~\cite{Ren05}.

For the proof of~\eqref{it:optimalityleftover}, let $\bar{\rho}_{S E} := \rho_{\uniform_{\sbin^\ell}} \otimes \rho_E$. Then, obviously 
  \[
    \hmin(S|E)_{\bar{\rho}} \geq H(S|E)_{\frac{\bar{\rho}}{\bar{\rho}}} = \ell \ .
  \]
  Because $\frac{1}{2}\| \bar{\rho}_{S E} - \rho_{f(X) E} \| \leq \varepsilon$, this implies
  \[
    \hmin^{2\varepsilon}(f(X)|E)_{\rho} \geq \hmin(S|E)_{\bar{\rho}} \geq \ell \ .
  \]
  Since the min-entropy can only decrease when applying a function (see~\cite{Ren05}), we conclude that
  \[
    \hmin^{2\varepsilon}(X | E)_{\rho} \geq \hmin^{2\varepsilon}(f(X) | E)_{\rho}  \geq \ell \ ,
  \]
as desired.
\end{proof}

\subsection{Some basic rules and properties\label{sec:basicrules}}
We now summarise a few basic rules for the quantities $\h{A}{B}{}{\rho}{\sigma}$ which directly follow from~\eqref{eq:operatorinequalityhmin} using standard properties of operator inequalities, as described in Section~\ref{sec:preliminaries}.

\begin{lemma}[Properties of min-entropy]\label{lem:hminprop}
The min-entropy satisfies the following.
\begin{enumerate}[(i)]
\item {\bf (Positivity for classical systems)}\label{it:classicalcond}
Let $\rho_{AB}=\sum_{a} \proj{a}\otimes\rho_B^a$ be classical on $A$. Then
$H(A|B)_\rho\geq 0$.
\item {\bf (Dimension bound)}\label{it:hzerobound}
For any $\rho_{AB}$ and $\sigma_B$ with $\sigma_B\leq \rho_B$, we have
$H(A|B)_{\fr{\rho}{\sigma}}\leq H_0(A)$, where $H_0(A)=\log \dim \cH_A$. In particular, $H(A|B)_{\rho}\leq H_0(A)$.
 More generally $H(B|C)_{\fr{\rho}{\sigma}}\geq  H(AB|C)_{\fr{\rho}{\sigma}}-H_0(A)$
for any  $\rho_{ABC}$ and $\sigma_{C}$.

\item {\bf (Subadditivity)}\label{it:simplesubadditivity}
  $H(A|B)_{\fr{\rho}{\sigma}}\geq H(A|B C)_{\fr{\rho}{\sigma}}$ for  any $\rho_{A B C}$ and $\sigma_{B C}$.

\item {\bf (Recombination-chain-rule)}\label{it:chainrule}
  $H(AB|C)_{\fr{\rho}{\sigma}}\geq  H(A|B
  C)_{\rho}+H(B|C)_{\fr{\rho}{\sigma}}$ for any $\rho_{A B C}$ and $\sigma_C$.
\end{enumerate}
\end{lemma}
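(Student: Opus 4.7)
The unifying tool for all four parts is the operator-inequality characterisation
\[
  H(A|B)_{\fr{\rho}{\sigma}} = -\log \min\bigl\{\lambda\ :\ \fr{\rho_{AB}}{\sigma_B} \leq \lambda \cdot \id_{AB}\bigr\},
\]
together with the partial-trace identity $\fr{\rho_B}{\sigma_B} = \tr_A(\fr{\rho_{AB}}{\sigma_B})$ and the fact that operator inequalities are preserved under partial traces and under conjugation by positive operators. My plan is to turn each of the four assertions into a short manipulation of an operator inequality of the form $\fr{\rho}{\sigma} \leq \lambda \id$.

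For (i), I would expand $\rho_{AB} = \sum_a \proj{a}\otimes\rho_B^a$ and use that each $\rho_B^a \leq \rho_B$, so $\rho_B^\mhalf \rho_B^a \rho_B^\mhalf \leq \id_B$; summing against the orthogonal projectors $\proj{a}$ gives $\fr{\rho_{AB}}{\rho_B} \leq \id_{AB}$ and hence $H(A|B)_\rho \geq 0$. For (ii), the idea is to take $\fr{\rho_{ABC}}{\sigma_C} \leq \lambda\, \id_{ABC}$ and trace out $A$, obtaining $\fr{\rho_{BC}}{\sigma_C} \leq \lambda\,\dim(\cH_A)\id_{BC}$, which immediately yields $H(B|C)_{\fr{\rho}{\sigma}} \geq H(AB|C)_{\fr{\rho}{\sigma}} - H_0(A)$. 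To obtain the absolute bound $H(A|B)_{\fr{\rho}{\sigma}} \leq H_0(A)$ when $\sigma_B \leq \rho_B$, I would combine this partial-trace step with the observation that $\sigma_B \leq \rho_B$ forces $\id_B \leq \fr{\rho_B}{\sigma_B}$ on $\supp(\sigma_B)$, so the resulting $\lambda$ must satisfy $\lambda\cdot\dim(\cH_A) \geq 1$.

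For subadditivity (iii), I would conjugate the hypothesis $\fr{\rho_{ABC}}{\sigma_{BC}} \leq \lambda\, \id$ by $\sigma_{BC}^{1/2}$ to obtain $\rho_{ABC} \leq \lambda\, \sigma_{BC}$ (with implicit $\id_A$), trace out $C$ to get $\rho_{AB} \leq \lambda\, \sigma_B$, and conjugate by $\sigma_B^\mhalf$ to recover an inequality witnessing $H(A|B)_{\fr{\rho}{\sigma}} \geq -\log \lambda$. For the recombination-chain-rule (iv), I would compose two operator inequalities: from $H(A|BC)_\rho = -\log\mu$ one reads off $\rho_{ABC} \leq \mu\, \rho_{BC}$, and from $H(B|C)_{\fr{\rho}{\sigma}} = -\log\nu$ one reads off $\rho_{BC} \leq \nu\, \sigma_C$; concatenation gives $\rho_{ABC} \leq \mu\nu\,\sigma_C$, which after conjugating by $\sigma_C^\mhalf$ yields $\fr{\rho_{ABC}}{\sigma_C} \leq \mu\nu\,\id_{ABC}$ and hence the claim $H(AB|C)_{\fr{\rho}{\sigma}} \geq -\log\mu - \log\nu$.

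The only non-routine point I expect is the bookkeeping of supports: every use of a generalised inverse $\sigma^{-1}$ needs the relevant support-inclusion hypothesis, and this must be verified after each partial trace or conjugation. Similarly, the auxiliary inequality $\id_B \leq \fr{\rho_B}{\sigma_B}$ used in (ii) holds only on $\supp(\sigma_B)$, so one must either restrict to that subspace or read everything modulo the null space. Once a consistent convention for these implicit identities and support conditions is in place, each item reduces to a one-line calculation, so the real obstacle is notational rather than mathematical.
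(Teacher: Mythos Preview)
Your proposal is correct and follows essentially the same route as the paper: each item is reduced to a one-line manipulation of the operator inequality $\rho \leq \lambda\,\sigma$ (or equivalently $\fr{\rho}{\sigma}\leq\lambda\,\id$), using preservation under partial trace and conjugation. The only cosmetic difference is in the first part of~(ii): the paper takes the full trace of $\rho_{AB}\leq 2^{-H(A|B)_{\fr{\rho}{\sigma}}}\sigma_B$ and uses $\tr(\sigma_B)\leq\tr(\rho_B)$, whereas you trace out only $A$ and then invoke $\id_B\leq\fr{\rho_B}{\sigma_B}$ on $\supp(\sigma_B)$; both arguments are equally short and valid.
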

\begin{proof}
\eqref{it:classicalcond} directly follows from
$\proj{a}\otimes\rho_B^a\leq \rho_B$ for all $a$.

For the proof of the first part of~\eqref{it:hzerobound}, we simply take the trace on both sides of the inequality $\rho_{AB}\leq 2^{-H(A|B)_{\fr{\rho}{\sigma}}}\sigma_B$ to get $\tr(\rho_{AB})\leq 2^{H_0(A)-H(A|B)_{\fr{\rho}{\sigma}}}\tr(\sigma_B)$, which gives the claim because $\tr(\sigma_B)\leq \tr(\rho_B)=\tr(\rho_{AB})$. For the proof of the second part of~\eqref{it:hzerobound}, observe that we have
$\rho_{ABC}\leq 2^{-H(AB|C)_{\fr{\rho}{\sigma}}}\sigma_C$ by definition. Tracing out the system $A$ gives
\begin{align*}
\rho_{BC}\leq 2^{-H(AB|C)_{\fr{\rho}{\sigma}}+H_0(A)}\sigma_C\ .
\end{align*}
The claim~\eqref{it:hzerobound} then follows from the definition of $H(B|C)_{\fr{\rho}{\sigma}}$.

Similarly,~\eqref{it:simplesubadditivity} directly follows by tracing
out $C$ from the inequality   \[
    \rho_{A B C}\leq 2^{-H(A|B C)_{\fr{\rho}{\sigma}}}\sigma_{B C}\ .
  \]

For the proof of~\eqref{it:chainrule}, observe that 
 \begin{align*}
     \rho_{ABC}&\leq 2^{-H(A|B C)_{\rho}}\rho_{B C}
  \leq 
    2^{-H(A|B C)_{\rho}-H(B|C)_{\fr{\rho}{\sigma}}} \sigma_C \ .
  \end{align*}
The claim follows from the definition of $H(AB|C)_{\fr{\rho}{\sigma}}$.
\end{proof}
We point out  that~\eqref{it:hzerobound} and~\eqref{it:simplesubadditivity} directly translate into the statements
\begin{align*}
\hmin(B|C)_\rho&\geq \hmin(AB|C)_\rho-H_0(A)\\
\hmin(A|B)_{\rho}&\geq \hmin(A|BC)_{\rho}
\end{align*}
for the min-entropy. An analogous statement cannot be made for the recombination-chain-rule~\eqref{it:chainrule}, and we will have to retain the dependence on $\sigma_C$ in our arguments.

Having established subadditivity and a recombination-chain-rule, we will address the problem of finding a converse splitting-chain-rule in the next section. Before doing so, however, we will mention another property of the min-entropy which will be important for our purposes. This is the fact the entropy of a state $\ket{\Psi'}=Q\ket{\Psi}$ obtained by applying a projection to a state $\ket{\Psi}$ is lower bounded by the entropy of the original state. We will later see that this allows us to retain information about the entropy  when going from a state to its split descendants. 

Note that this statement is not generally true, but depends crucially on where the projection acts.
\begin{lemma}[Monotony under
  projections]\label{lem:projectionmonotony}
Let $\ket{\Psi_{ABC}}$ be a pure state, let $Q_C$ be an operator on~$C$ and
let $\ket{\Psi_{ABC}'}=Q_C\ket{\Psi_{ABC}}$.  Let $\rho_{ABC}$ and
$\rho'_{ABC}$ be the corresponding density operators. 
Then 
\begin{align*}
H(A|BC)_{\rho'}\geq H(A|BC)_\rho\ .
\end{align*}
Furthermore, if $Q_C$ is a projector, then 
\begin{align*}
H(A|B)_{\fr{\rho'}{\sigma}}\geq H(A|B)_{\fr{\rho}{\sigma}}\qquad\textrm{ and }\qquad
\hcond{B}{\rho'}{\sigma}\geq \hcond{B}{\rho}{\sigma}
\end{align*}
for arbitrary $\sigma=\sigma_C$.
\end{lemma}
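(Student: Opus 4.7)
The plan is to work directly with the operator-inequality characterization of min-entropy in \eqref{eq:operatorinequalityhmin}, sandwiching the defining inequalities by $Q_C$. For the first claim, I would start from the tight inequality $\rho_{ABC} \leq \lambda\, \id_A \otimes \rho_{BC}$ with $\lambda = 2^{-H(A|BC)_\rho}$. Sandwiching both sides by $Q_C$ and $Q_C^\dagger$ preserves the inequality, and because $Q_C$ acts trivially on $A$, the right-hand side equals $\lambda\, \id_A \otimes (Q_C \rho_{BC} Q_C^\dagger)$. Since the partial trace over $A$ commutes with operators on $C$, we have $Q_C \rho_{BC} Q_C^\dagger = \tr_A(Q_C \rho_{ABC} Q_C^\dagger) = \rho'_{BC}$, giving $\rho'_{ABC} \leq \lambda\, \id_A \otimes \rho'_{BC}$ and hence $H(A|BC)_{\rho'} \geq -\log\lambda = H(A|BC)_\rho$. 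Notice that this first part needs neither the purity of $\ket{\Psi_{ABC}}$ nor the projector property of $Q_C$.

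For the remaining two claims, with $Q_C$ now a projector, the plan is first to establish the stronger pointwise inequality $\rho'_{AB} \leq \rho_{AB}$. Setting $P_C := \id_C - Q_C$, so that $P_C Q_C = Q_C P_C = 0$, and inserting $\id_C = Q_C + P_C$ on both sides of $\rho_{ABC}$ before taking the partial trace over $C$, one obtains
\begin{equation*}
\rho_{AB} = \tr_C(Q_C \rho_{ABC} Q_C) + \tr_C(P_C \rho_{ABC} P_C) + \tr_C(Q_C \rho_{ABC} P_C) + \tr_C(P_C \rho_{ABC} Q_C).
\end{equation*}
The crucial observation is that the cross terms vanish: by cyclicity of the partial trace for operators acting only on the traced-out system, $\tr_C(Q_C \rho_{ABC} P_C) = \tr_C(P_C Q_C\, \rho_{ABC}) = 0$. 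Therefore $\rho_{AB} - \rho'_{AB} = \tr_C(P_C \rho_{ABC} P_C) \geq 0$, as desired.

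With $\rho'_{AB} \leq \rho_{AB}$ in hand, the remaining claims follow immediately. Combining with the defining inequality $\rho_{AB} \leq 2^{-H(A|B)_{\fr{\rho}{\sigma}}}\, \id_A \otimes \sigma$ yields $\rho'_{AB} \leq 2^{-H(A|B)_{\fr{\rho}{\sigma}}}\, \id_A \otimes \sigma$, and tracing out $A$ in $\rho'_{AB} \leq \rho_{AB}$ gives $\rho'_B \leq \rho_B \leq 2^{-\hcond{B}{\rho}{\sigma}}\, \sigma$; reading off the operator inequalities then gives the two stated bounds. The whole argument reduces to elementary manipulations of operator inequalities; the only subtle step, which I view as the main point, is the vanishing of the cross term via the cyclicity identity $\tr_C(Q_C \rho P_C) = 0$, which depends crucially on $P_C Q_C = 0$ and explains why the projector hypothesis is needed for the last two inequalities but not for the first.
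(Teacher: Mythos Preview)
Your proof is correct and follows essentially the same approach as the paper. Both arguments sandwich the defining operator inequality by $Q_C$ for the first claim, and for the projector case both establish $\rho'_{AB}\leq\rho_{AB}$ and then read off the two remaining bounds; the only cosmetic difference is that the paper verifies $\rho_{AB}-\rho'_{AB}\geq 0$ by testing against arbitrary $\proj{\varphi_{AB}}$ and using full-trace cyclicity together with $Q_C^2=Q_C$, whereas you obtain the same conclusion via the four-term expansion and the partial-trace identity $\tr_C(Q_C\rho P_C)=\tr_C(P_CQ_C\rho)=0$.
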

\begin{proof}
To prove the first inequality, let $Q_C$ be arbitrary. Applying $Q_C$
from the left and $Q_C^\dagger$ from the right to both sides of the inequality
\begin{align*}
\rho_{ABC}\leq 2^{-H(A|BC)_{\rho}}\rho_{BC}\ 
\end{align*}
gives
\begin{align*}
\rho'_{ABC}\leq  2^{-H(A|BC)_{\rho}}\rho'_{BC}
\end{align*}
by definition of $\ket{\Psi'_{ABC}}$ and the properties of the partial
trace. This proves the first inequality.

Let now $Q_C$ be a projector, and let $\ket{\varphi_{AB}}\in AB$ be arbitrary. Then 
\begin{align*}
\tr(\proj{\varphi_{AB}}\rho'_{AB})&=\tr\left((\proj{\varphi_{AB}}\otimes\id_C)\rho'_{ABC}\right)\\
&=\tr\left((\proj{\varphi_{AB}}\otimes Q_C)\rho_{ABC}\right)\ 
\end{align*}
by the cyclicity of the trace and the fact that $Q_C$ is a projector.
In particular, with $Q_C^\bot=\id_C-Q_C$ denoting the projector onto the
orthogonal complement of the image of $Q_C$, we have
\begin{align*}
\tr\left(\proj{\varphi_{AB}}(\rho_{AB}-\rho'_{AB})\right)&=\tr\left((\proj{\varphi_{AB}}\otimes
  Q_C^\bot)\rho_{ABC}\right)\geq 0\ .
\end{align*}
We conclude that $\rho_{AB}'\leq \rho_{AB}$. In particular, 
\begin{align*}
\rho_{AB}'\leq \rho_{AB}\leq 2^{-H(A|B)_{\fr{\rho}{\sigma}}}\sigma_B\qquad \textrm{ and }\qquad \rho'_B\leq\rho_B\leq 2^{-\hcond{B}{\rho}{\sigma}}\sigma_B
\end{align*}
which implies the claim.
\end{proof}

\subsection{Entropy-splitting: A splitting-chain-rule for min-entropy\label{sec:entropysplitting}}
To introduce our splitting-chain-rule, we proceed  in two steps: In Section~\ref{sec:warmup}, we show a simplified version which does not restrict the number of states the original state is split into. As this is irrelevant for the remainder of our proof, this section can be skipped; however, it nicely illustrates the relevant features. The case of interest, where we split a given state into a fixed number $m$ of states, can be seen as a coarse-graining of the former. It will be the topic of Section~\ref{sec:coarsegraining}.

\subsubsection{A warm-up\label{sec:warmup}}
The chain-rule we will prove in this section concerns a tripartite state $\rho_{ABC}$ with purification $\rho_{ABCD}=\proj{\Psi_{ABCD}}$ and an operator $\sigma_C$. We will show that we can split $\ket{\Psi_{ABCD}}$ into a sum of states $\{\ket{\Psi^\alpha_{ABCD}}\}_{\alpha}$ as in~\eqref{eq:entropysplitstatesdecomposition}, in a way that
\begin{align}
H(A|BC)_{\rho^{\alpha}}+H(B|C)_{\fr{\rho^\alpha}{\sigma}}\geq H(AB|C)_{\fr{\rho}{\sigma}}\qquad\textrm{ for all }\alpha\ ,\label{eq:exactsumcondoperators}
\end{align}
where $\rho^\alpha_{ABCD}=\proj{\Psi^\alpha_{ABCD}}$. Note that by taking the supremum over $\sigma_B$, we immediately obtain the inequality
\begin{align*}
\hmin(A|BC)_{\rho^{\alpha}}+\hmin(B|C)_{\rho^\alpha}\geq \hmin(AB|C)_{\rho}\qquad\textrm{ for all }\alpha\ 
\end{align*}
from~\eqref{eq:exactsumcondoperators}. However,~\eqref{eq:exactsumcondoperators} makes a stronger assertion, and we will generally deal with statements of this form.

For the proof of~\eqref{eq:exactsumcondoperators}, consider the eigendecomposition
\begin{align*}
\fr{\rho_{BC}}{\sigma_C}=\sum_{\alpha} \alpha P^\alpha_{BC}\ 
\end{align*}of the conditional operator,
where $P^\alpha_{BC}$ is the projector onto the eigenspace corresponding to the eigenvalue $\alpha$. 

We will use the operators $P^\alpha_{BC}$ to define our split states,
which will be labeled by the spectrum of
$\fr{\rho_{BC}}{\sigma_C}$. Clearly, if we apply $P^\alpha_{BC}$ on
both sides of the operator $\fr{\rho_{BC}}{\sigma_C}$, we end up with
an operator which has a single non-zero eigenvalue $\alpha$. While
$P^\alpha\fr{\rho_{BC}}{\sigma_C}P^\alpha$ thus has a very simple
form, it is very different in nature from the original
``unconditional'' operator $\rho_{BC}$. Intuitively, it therefore
makes sense to multiply by $\sigma_C$. The appropriate definition of
$\ket{\Psi^\alpha_{ABCD}}$ turns out to be just the result of this,
i.e., we can define\footnote{As above, we assume that the support of
  $\sigma_C$ contains the support of $\rho_C$, hence, $\sigma_C$ is
  invertible on the relevant subspace.}
\begin{align*}
\ket{\Psi^\alpha_{ABCD}}=\sigma_C^\half P^\alpha_{BC}\sigma_C^\mhalf\ket{\Psi_{ABCD}}\ .
\end{align*}
It is easy to check that these states decompose $\ket{\Psi_{ABCD}}$ as in~\eqref{eq:entropysplitstatesdecomposition}. They also satisfy~\eqref{eq:exactsumcondoperators}, as we will show now. First observe that
$\rho^\alpha_{BC}=\sigma_C^\half P^\alpha_{BC}\fr{\rho_{BC}}{\sigma_C}P^\alpha_{BC}\sigma_C^\half$ by their very definition, and thus 
\begin{align*}
\rho^\alpha_{BC}=\alpha\cdot \sigma_C^\half P^\alpha_{BC}\sigma_C^\half\ .
\end{align*} 
In particular,
we have 
\begin{align}\label{eq:rhoalphabcbnd}
\fr{\rho^\alpha_{BC}}{\sigma_C}=\alpha P^\alpha_{BC}\ ,
\end{align}
which implies that
\begin{align}\label{eq:hminbcbnd}
\hmin(B|C)_{\fr{\rho^\alpha}{\sigma}}=-\log\alpha\ .
\end{align}
Combining~\eqref{eq:rhoalphabcbnd} and~\eqref{eq:hminbcbnd} gives the statement 
\begin{align}\label{eq:PalphaBCbound}
P^\alpha_{BC}\leq 2^{\hmin(B|C)_{\fr{\rho^\alpha}{\sigma}}} \fr{\rho^\alpha_{BC}}{\sigma_C}\ .
\end{align}

By definition of the quantity $\h{AB}{C}{}{\rho}{\sigma}$, we also have
$\fr{\rho_{ABC}}{\sigma_C}\leq 2^{-\h{AB}{C}{}{\rho}{\sigma}}$. Applying the projector $P^\alpha_{BC}$ on both sides of this inequality leads to
\begin{align*}
P^\alpha_{BC}\fr{\rho_{ABC}}{\sigma_C}P^\alpha_{BC}\leq 2^{-\h{AB}{C}{}{\rho}{\sigma}} P^\alpha_{BC}\leq 2^{\hmin(B|C)_{\fr{\rho^\alpha}{\sigma}}-\h{AB}{C}{}{\rho}{\sigma}}\fr{\rho^\alpha_{BC}}{\sigma_C}\ .
\end{align*}
Multiplying this inequality from both sides by $\sigma_C^\half$ immediately gives the desired statement~\eqref{eq:exactsumcondoperators}, since 
$\rho^\alpha_{ABC}=\sigma_C^\half P^\alpha_{BC}\fr{\rho_{ABC}}{\sigma_C}P^\alpha_{BC}\sigma_C^\half$.

This concludes the proof of our simplified statement, where a state $\ket{\Psi_{ABCD}}$ is split into a family $\{\ket{\Psi^\alpha_{ABCD}}\}_{\alpha}$, each of which obeys the splitting-chain-rule inequality~\eqref{eq:exactsumcondoperators}. The number of states is determined by the number of different eigenvalues of the operator $\fr{\rho_{BC}}{\sigma_C}$; indeed, each state $\ket{\Psi^\alpha_{ABCD}}$ corresponds to an eigenvalue $\alpha$.

Before continuing, let us show the following useful properties of the states $\{\ket{\Psi^\alpha_{ABCD}}\}_{\alpha}$: They are mutually orthogonal, and each state $\ket{\Psi^\alpha_{ABCD}}=Q^\alpha_{AD}\ket{\Psi}$ is the result of applying a projection $Q^\alpha_{AD}$ (which acts non-trivially only on $A$ and $D$) to $\ket{\Psi_{ABCD}}$. This statement is the result of using the complementarity property that is inherent in quantum states.

First observe that $\sigma_C^\mhalf\ket{\Psi_{ABCD}}$ is a purification of the conditional operator $\fr{\rho_{BC}}{\sigma_C}$. Using the Schmidt-decomposition, we can write
\begin{align*}
\sigma_C^\mhalf\ket{\Psi_{ABCD}}=\sum_\alpha \sqrt{\alpha}\ket{\alpha}_{AD}\ket{\alpha}_{BC}\ ,
\end{align*}
where $\{\ket{\alpha}_{AD}\}$ and $\{\ket{\alpha}_{BC}\}$ are eigenvectors with eigenvalue $\alpha$ of $\fr{\rho_{AD}}{\sigma_C}$ and $\fr{\rho_{BC}}{\sigma_C}$, respectively (slightly abusing notation, we omit multiplicities). We can define $Q^\alpha_{AD}$ as the projector onto the eigenspace of $\fr{\rho_{AD}}{\sigma_C}$ corresponding to the eigenvalue $\alpha$. We then clearly have
\begin{align*}
P^\alpha_{BC}\sigma_C^\mhalf\ket{\Psi_{ABCD}}=Q^\alpha_{AD}\sigma_C^\mhalf\ket{\Psi_{ABCD}}
\end{align*}
for every $\alpha$. Since $\sigma_C^\half$ and $Q^\alpha_{AD}$ act on different systems, they commute, and we obtain
\begin{align*}
\ket{\Psi^\alpha_{ABCD}}=\sigma_C^\half Q^\alpha_{AD}\sigma_C^\mhalf\ket{\Psi_{ABCD}}=Q^\alpha_{AD}\ket{\Psi_{ABCD}}\ ,
\end{align*}
as claimed. The orthogonality of these states is now immediate.

\subsubsection{Splitting into a fixed number of states\label{sec:coarsegraining}}
In this section, we show that the construction discussed in Section~\ref{sec:warmup} can be adapted to yield a fixed number $m$~of states. This is quite straightforward: We simply divide the spectrum of the conditional operator $\fr{\rho_{BC}}{\sigma_C}$ into $m$ different intervals $]\mu_{\alpha-1},\mu_{\alpha}]$, for $\alpha\in [m]$. Instead of projecting onto the eigenspace corresponding to a single eigenvalue, we use projectors $P^\alpha_{BC}$ onto the direct sum of eigenspaces associated with eigenvalues in the corresponding interval.
\begin{lemma}[Entropy splitting] \label{lem:splitfirst}
  Let $\rho_{ABCD}=\proj{\Psi_{ABCD}}$ be a pure state and  let
  $\sigma_C$ be a nonnegative operator. Let   $h_0\leq h_1\leq \cdots \leq h_m$ be an $(m+1)$-tuple of   monotonically increasing real values with minimum and maximum given  by
  \begin{align*}
    h_0 & := H(B|C)_{\frac{\rho}{\sigma}} \\
    h_m & := H(A B| C)_{\frac{\rho}{\sigma}} - H(A | B C)_{\rho} \ .
  \end{align*}
Then there are mutually orthogonal projectors
$\{Q^\alpha_{AD}\}_{\alpha\in [m]}$ with the property that
  \begin{align}
    H(A|B C)_{\rho^\alpha} 
  & \geq  H(A B|C)_{\frac{\rho}{\sigma}} - h_{\alpha} \label{eq:splitABC} 
  \\
    H(B|C)_{\frac{\rho^{\alpha}}{\sigma}} 
  & \geq 
    h_{\alpha-1} \label{eq:splitBC}
  \end{align}
  where $\rho^{\alpha}_{ABCD}=\proj{\Psi^\alpha_{ABCD}}$ is defined as 
  \begin{align}\label{eq:sigmadefinition}
\ket{\Psi_{ABCD}^\alpha}:=Q^\alpha_{AD}\ket{\Psi_{ABCD}}\  .
  \end{align}
An alternative expression for these states is
\begin{align}
\ket{\Psi_{ABCD}^\alpha}:=\sigma_C^\half P^\alpha_{BC}\sigma_C^\mhalf\ket{\Psi_{ABCD}}\  ,\label{eq:sigmadefinitionalternative}
\end{align}
where $\{P^\alpha_{BC}\}_{\alpha\in [m]}$ are mutually orthogonal
projectors. They satisfy 
\begin{align}\label{eq:identityresolutionpsi}
\sum_{\alpha\in [m]}\ket{\Psi^{\alpha}}=\ket{\Psi}
\end{align}
\end{lemma}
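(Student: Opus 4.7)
The plan is to mimic the warm-up construction from Section~4.4.1, but instead of using the full eigendecomposition of the conditional operator $\fr{\rho_{BC}}{\sigma_C}$, I will coarse-grain its spectrum into $m$ intervals whose endpoints are dictated by $h_0,\ldots,h_m$. Specifically, consider the eigendecomposition $\fr{\rho_{BC}}{\sigma_C}=\sum_\lambda \lambda\,\Pi_\lambda$, and for $\alpha\in[m-1]$ define $P^\alpha_{BC}$ to be the projector onto $\bigoplus_{\lambda\in(2^{-h_\alpha},\,2^{-h_{\alpha-1}}]}\supp(\Pi_\lambda)$, while $P^m_{BC}$ projects onto the remaining eigenspaces (eigenvalues in $[0,2^{-h_{m-1}}]$, including the kernel within the support of $\rho_{BC}$). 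These projectors are manifestly mutually orthogonal and, restricted to the support of $\rho_{BC}$, sum to the identity; the accompanying $Q^\alpha_{AD}$ are defined via the Schmidt decomposition of the purification $\sigma_C^\mhalf\ket{\Psi_{ABCD}}$ of $\fr{\rho_{BC}}{\sigma_C}$, exactly as in the warm-up, which both yields mutual orthogonality of the $Q^\alpha_{AD}$ and reconciles the two expressions (4.8) and (4.9) because $Q^\alpha_{AD}$ commutes with $\sigma_C^{1/2}$.

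Once the $P^\alpha_{BC}$ are in place, the inequality $H(B|C)_{\fr{\rho^\alpha}{\sigma}}\geq h_{\alpha-1}$ is immediate: the computation $\fr{\rho^\alpha_{BC}}{\sigma_C}=P^\alpha_{BC}\,\fr{\rho_{BC}}{\sigma_C}\,P^\alpha_{BC}$ has maximal eigenvalue at most $2^{-h_{\alpha-1}}$ by construction. For the inequality $H(A|BC)_{\rho^\alpha}\geq H(AB|C)_{\fr{\rho}{\sigma}}-h_\alpha$ with $\alpha<m$, I would sandwich the defining inequality $\fr{\rho_{ABC}}{\sigma_C}\leq 2^{-H(AB|C)_{\fr{\rho}{\sigma}}}\id$ between $P^\alpha_{BC}$'s, and then use the key lower bound $P^\alpha_{BC}\leq 2^{h_\alpha}\,\fr{\rho^\alpha_{BC}}{\sigma_C}$, which holds because every eigenvalue of $\fr{\rho_{BC}}{\sigma_C}$ supported by $P^\alpha_{BC}$ exceeds $2^{-h_\alpha}$. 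Conjugating back by $\sigma_C^{1/2}$ converts this to the desired operator inequality $\rho^\alpha_{ABC}\leq 2^{h_\alpha-H(AB|C)_{\fr{\rho}{\sigma}}}\rho^\alpha_{BC}$.

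The main obstacle, and the step that requires the slack built into the statement, is the endpoint case $\alpha=m$, where $P^m_{BC}$ may include arbitrarily small (even zero) eigenvalues, so the lower-bound trick above is unavailable. Here I would instead invoke Lemma~\ref{lem:projectionmonotony}: since $\ket{\Psi^m_{ABCD}}=Q^m_{AD}\ket{\Psi_{ABCD}}$ with $Q^m_{AD}$ a projector acting only on $AD$, we get $H(A|BC)_{\rho^m}\geq H(A|BC)_\rho$, and the right-hand side equals $H(AB|C)_{\fr{\rho}{\sigma}}-h_m$ by the very choice of $h_m$. Finally, the resolution of identity $\sum_\alpha\ket{\Psi^\alpha}=\ket{\Psi}$ follows from $\sum_\alpha P^\alpha_{BC}$ being the projector onto $\supp(\rho_{BC})\supseteq\supp(\sigma_C^{-1/2}\proj{\Psi}\sigma_C^{-1/2})_{BC}$, so that $\sum_\alpha P^\alpha_{BC}\sigma_C^\mhalf\ket{\Psi}=\sigma_C^\mhalf\ket{\Psi}$ and multiplication by $\sigma_C^{1/2}$ closes the argument.
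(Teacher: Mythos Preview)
Your proposal is correct and follows essentially the same route as the paper: define $P^\alpha_{BC}$ as the spectral projectors of $\fr{\rho_{BC}}{\sigma_C}$ onto the intervals $(2^{-h_\alpha},2^{-h_{\alpha-1}}]$, transfer to the $AD$-side via the Schmidt decomposition of $\sigma_C^\mhalf\ket{\Psi}$, and handle the endpoint $\alpha=m$ separately using the definition of $h_m$. One small wording issue: your appeal to Lemma~\ref{lem:projectionmonotony} for the $\alpha=m$ case should be routed through the alternative representation $\ket{\Psi^m}=\sigma_C^\half P^m_{BC}\sigma_C^\mhalf\ket{\Psi}$, since that operator acts on the conditioning system $BC$ (which is what Lemma~\ref{lem:projectionmonotony} requires), whereas $Q^m_{AD}$ acts on $A$ and the purifying system; the paper does exactly this computation directly, conjugating $\rho_{ABC}\leq 2^{-H(A|BC)_\rho}\rho_{BC}$ by $\sigma_C^\half P^m_{BC}\sigma_C^\mhalf$.
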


  Note that, according to the chain rule (Lemma~\ref{lem:hminprop}~\eqref{it:chainrule}), $h_0 \leq h_m$,
  i.e., there always exists a tuple of reals as defined in the lemma.

\begin{proof}
  The proof of this statement is almost identical to the proof given
  in Section~\ref{sec:warmup}, but given here for completeness.  For
  any $\alpha \in [m-1]$, define $\mu_\alpha := 2^{-h_\alpha}$, and
  let $\mu_{0} := \infty$, $\mu_{m} := -\infty$.  Note that
  $(\mu_{\alpha})_{\alpha=0}^m$ is a monotonically decreasing sequence
  of values.

Consider the  Schmidt-decomposition
\begin{align}\label{eq:schmidtconditionalstate}
\sigma_C^\mhalf\ket{\Psi_{ABCD}}=\sum_\lambda
\sqrt{\lambda}\ket{\lambda}_{BC}\ket{\lambda}_{AD}\ 
\end{align}
of the ``conditional'' state $\sigma_C^\mhalf\ket{\Psi_{ABCD}}$ (the sum may include multiplicities). 
For
every $\alpha\in [m]$, we
define the projectors $P^\alpha_{BC}$ and $Q^\alpha_{AD}$ as
\begin{align*}
P^\alpha_{BC}&=\sum_{\lambda\in
  ]\mu_\alpha,\mu_{\alpha-1}]}\proj{\lambda}_{BC}\\
Q^\alpha_{AD}&=\sum_{\lambda\in
  ]\mu_\alpha,\mu_{\alpha-1}]}\proj{\lambda}_{AD}\ .
\end{align*}
By definition, these operators satisfy~\eqref{eq:sigmadefinition} and~\eqref{eq:sigmadefinitionalternative}. Moreover, using the fact that $Q^\alpha_{AD}$ commutes with
$\sigma_C^\mhalf$, we conclude that~\eqref{eq:sigmadefinition}
and~\eqref{eq:sigmadefinitionalternative} define the same state $\ket{\Psi^\alpha_{ABCD}}$.

Since $P^{\alpha}_{B C}$ is the
  projector onto the eigenspaces of $\fr{\rho_{BC}}{\sigma_C}$ which
  belong to the eigenvalues in $]\mu_\alpha,\mu_{\alpha-1}]$ for every
  $\alpha\in [m]$, we have
  \begin{align}
  \mu_{\alpha}  P^\alpha_{B C}\leq     P^\alpha_{B C} \fr{\rho_{BC}}{\sigma_C} P^\alpha_{B C}\leq
  \mu_{\alpha-1}  P^\alpha_{B C}\ . \label{eq:halphapfirst}
  \end{align}

We show that 
  \begin{align}
P^\alpha_{B C}\fr{\rho_{BC}}{\sigma_C} P^\alpha_{B C}\leq  2^{-h_{\alpha-1}}  P^\alpha_{B C}\qquad\textrm{for all }\alpha\in [m]\ .\label{eq:halphap}
  \end{align}
This follows directly from~\eqref{eq:halphapfirst} for $\alpha\geq 2$ since $2^{-h_{\alpha-1}}=\mu_{\alpha-1}$; for $\alpha=1$, it is a consequence of the fact that the eigenvalues of $\fr{\rho_{BC}}{\sigma_C}$ are upper bounded by $2^{-h_0}$ by definition of $h_0$.

 Claim~\eqref{eq:splitBC} now
directly follows from~\eqref{eq:halphap} and
the fact that
$\fr{\rho^\alpha_{BC}}{\sigma_C}=P^\alpha_{BC}\fr{\rho_{BC}}{\sigma_C}P^\alpha_{BC}$.

Next we show that
 \begin{align}
    P^\alpha_{B C} \frac{\rho_{A B C}}{\sigma_C} P^\alpha_{B C}
  \leq
    2^{-H(A B | C)_{\frac{\rho}{\sigma}}+h_\alpha}  P^\alpha_{B C} \fr{\rho_{B C}}{\sigma_C}  P^\alpha_{B C} \qquad\textrm{ for all }\alpha\in [m]\  .\label{eq:halphapsecond}
  \end{align}
We distinguish two cases: For $\alpha=m$, identity~\eqref{eq:halphapsecond} is equivalent to 
\begin{align*}
    P^\alpha_{B C} \frac{\rho_{A B C}}{\sigma_C} P^\alpha_{B C}
  \leq
    2^{-H(A|BC)_\rho}  P^\alpha_{B C} \fr{\rho_{B C}}{\sigma_C}  P^\alpha_{B C} 
\end{align*}
because of the definition of $h_m$. But this directly follows from $\rho_{ABC}\leq 2^{-H(A|BC)_\rho}\rho_{BC}$ by multiplication from both sides with $P^\alpha_{BC}\sigma_C^{\mhalf}$ and its adjoint.

For $1\leq \alpha< m$, we use the fact that the first inequality of~\eqref{eq:halphapfirst} is equivalent to
\begin{align*}
2^{-h_\alpha}P^\alpha_{BC}\leq     P^\alpha_{B C} \frac{\rho_{B C}}{\sigma_C} P^\alpha_{B C}
\end{align*}
since $2^{-h_\alpha}=\mu_\alpha$. Substituting this into the inequality
\begin{align*}
    P^\alpha_{B C} \frac{\rho_{A B C}}{\sigma_C} P^\alpha_{B C}
  \leq
    2^{-H(A B | C)_{\frac{\rho}{\sigma}}} P^\alpha_{B C} \ .
\end{align*}
(which directly follows from $\frac{\rho_{A B C}}{\sigma_C}  \leq
    2^{-H(A B | C)_{\frac{\rho}{\sigma}}}$) immediately gives~\eqref{eq:halphapsecond} for all  $1\leq \alpha< m$. This concludes the proof of the auxiliary statement~\eqref{eq:halphapsecond}.

The proof of~\eqref{eq:splitABC}  is now straightforward, based on~\eqref{eq:halphapsecond}. Multiplying the latter inequality by $\sigma_C^\half$ from both the left and
  the right yields
  \[
    \rho^\alpha_{A B C} 
  \leq 
    2^{-H(A B|C)_{\frac{\rho}{\sigma}}+h_\alpha} \rho^\alpha_{B C} \ .
  \]
which implies~\eqref{eq:splitABC}.
\end{proof}

In the previous lemma, we did not specify the intervals $]h_{\alpha-1},h_\alpha]$ that are used to partition the spectrum of the conditional operator $\fr{\rho_{BC}}{\sigma_C}$. A simple choice is to partition the spectrum into $m$ intervals of equal length. This results in the following splitting-chain-rule, which will be our basic tool in what follows.

\begin{corollary} \label{cor:split}
Let $\rho_{ABCD}=\proj{\Psi_{ABCD}}$ be a pure state and let  $\sigma_C$ be a nonnegative operator.  Then for any $m\in\bbN$
  \[
    H(A|B C)_{\rho^\alpha} + H(B|C)_{\frac{\rho^{\alpha}}{\sigma}}   
  \geq
    H(A B|C)_{\frac{\rho}{\sigma}} - \frac{\Delta}{m}
  \]
  where
  \[
    \Delta := H(A B| C)_{\frac{\rho}{\sigma}} - H(A | B C)_{\rho} -
    H(B|C)_{\frac{\rho}{\sigma}} \ . 
  \]
and where   $\rho^{\alpha}_{ABCD}=\proj{\Psi^\alpha_{ABCD}}$ is defined by~\eqref{eq:sigmadefinition} or~\eqref{eq:sigmadefinitionalternative} in terms of families of mutually orthogonal projectors $\{Q^\alpha_{AD}\}_{\alpha\in [m]}$ and $\{P^\alpha_{BC}\}_{\alpha\in [m]}$, as in Lemma~\ref{lem:splitfirst}.
\end{corollary}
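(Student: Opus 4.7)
The plan is to derive Corollary~\ref{cor:split} as a direct instantiation of Lemma~\ref{lem:splitfirst} with a specific arithmetic progression for the $(m+1)$-tuple $h_0 \leq h_1 \leq \cdots \leq h_m$. The endpoints are forced by the hypothesis of the lemma, namely $h_0 = H(B|C)_{\frac{\rho}{\sigma}}$ and $h_m = H(AB|C)_{\frac{\rho}{\sigma}} - H(A|BC)_\rho$, and their difference is exactly the quantity $\Delta$ appearing in the corollary. The recombination-chain-rule (Lemma~\ref{lem:hminprop}\eqref{it:chainrule}) guarantees that $\Delta \geq 0$, so linear interpolation between $h_0$ and $h_m$ yields a valid monotonically increasing sequence. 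Concretely, I would set $h_\alpha := h_0 + \frac{\alpha}{m}\Delta$ for $\alpha \in \{0,1,\ldots,m\}$, so that every step has the constant size $h_\alpha - h_{\alpha-1} = \Delta/m$.

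With this choice, Lemma~\ref{lem:splitfirst} immediately yields mutually orthogonal projectors $\{Q^\alpha_{AD}\}_{\alpha \in [m]}$ and $\{P^\alpha_{BC}\}_{\alpha \in [m]}$ together with the corresponding split states $\ket{\Psi^\alpha_{ABCD}}$ defined by~\eqref{eq:sigmadefinition} (equivalently~\eqref{eq:sigmadefinitionalternative}), and these satisfy
\[
H(A|BC)_{\rho^\alpha} \geq H(AB|C)_{\frac{\rho}{\sigma}} - h_\alpha \qquad \text{and} \qquad H(B|C)_{\frac{\rho^\alpha}{\sigma}} \geq h_{\alpha-1}
\]
for every $\alpha \in [m]$. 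Adding these two bounds, the $\alpha$-dependent contribution is $h_{\alpha-1} - h_\alpha = -\Delta/m$, which is uniform in $\alpha$, so the resulting combined lower bound $H(AB|C)_{\frac{\rho}{\sigma}} - \Delta/m$ is precisely the claim of the corollary.

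Since Lemma~\ref{lem:splitfirst} already supplies both the projectors and the two operator-inequalities that drive the argument, there is no substantive obstacle to overcome; the corollary is essentially a cosmetic repackaging that replaces the $\alpha$-dependent bounds by a uniform bound via an equal-step partition of the spectrum of the conditional operator $\frac{\rho_{BC}}{\sigma_C}$. The only point to verify is that the prescribed sequence is indeed monotonically increasing, which reduces to $\Delta \geq 0$, a consequence of the recombination-chain-rule cited above.
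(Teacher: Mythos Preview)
Your proposal is correct and follows essentially the same approach as the paper: the paper's proof consists of the single line ``Here we choose $h_\alpha=h_0+\alpha\frac{\Delta}{m}$ for all $\alpha\in\{0,\ldots,m\}$,'' which is exactly your equal-step interpolation, and the observation that $\Delta\geq 0$ (i.e., $h_0\leq h_m$) via the recombination-chain-rule is noted immediately after the statement of Lemma~\ref{lem:splitfirst}.
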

We are usually able to obtain a bound on $\Delta$; for a comparatively
large value of $m$, we therefore get an approximation
of~\eqref{eq:exactsumcondoperators}, which is a converse to the
recombination-chain-rule (Item~\eqref{it:chainrule} of
Lemma~\ref{lem:hminprop}).
\begin{proof}
Here we choose $h_\alpha=h_0+\alpha\frac{\Delta}{m}$ for all $\alpha\in\{0,\ldots,m\}$.
\end{proof}
We point out that the statement of Corollary~\ref{cor:split} is also valid with $B$ removed from all expressions. This is because we can always adjoin a trivial system~$B$ with Hilbert space $\cH_B\cong\mathbb{C}$.

\begin{figure}
\begin{center}
\centerline{\includegraphics[scale=1.0]{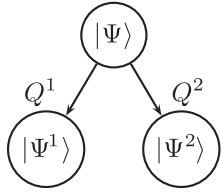}}
\end{center}
\caption{This figure illustrates the basic building block for our arguments (cf.~Corollary~\ref{cor:split}). A state $\ket{\Psi}$ can be decomposed
  into a sum of $m$~orthogonal states; in the figure, $m=2$. We
  illustrate this by a tree; the original state sits at the root,
  whereas the split states sit at nodes labeled by $\alpha\in
  [m]$. The state at the root is the sum of its descendants, which
  are  identical to the leaves in this case. Going from a node
  to its descendants is achieved by applying corresponding projection
  operators. \label{fig:basicsplitting}}
\end{figure}

For later use, we establish a few additional properties of the states $\ket{\Psi^\alpha_{ABCD}}$. We first show that the states $\ket{\Psi^\alpha_{ABCD}}$ have the same classicality properties as the original state $\ket{\Psi_{ABCD}}$.
\begin{remark}[Preservation of classicality properties]\label{rem:classicality}
Suppose that $D=D_1D_2$ is bipartite, and that $\rho_{ABCD_1}$ is classical on $A$, $B$, and $D_1$ (relative to some orthonormal bases of these subsystems). Then $\rho^\alpha_{ABCD_1}$ is classical on $A$, $B$, and $D_1$ (relative to the same bases), for any $\alpha\in [m]$.
\end{remark}
\begin{proof}
  According to the discussion at the end of
  Section~\ref{sec:preliminaries} about classical states
  and~\eqref{eq:sigmadefinitionalternative}, it suffices to show that
  the operator $\sigma_C^\half P^\alpha_{BC}\sigma_C^\mhalf$ has the
  form
\begin{align}
\sigma_C^\half P^\alpha_{BC}\sigma_C^\mhalf=\sum_b \proj{b}\otimes O^b_C\ ,\label{eq:sigmaCpBC}
\end{align}
for some operators $\{O^b_C\}_b$ on $C$, where $\{\ket{b}\}_b$ is the eigenbasis of $\rho_B$. 

Because $\sigma_C^\mhalf$ acts only on $C$, the state 
$\sigma_C^\mhalf\ket{\Psi_{ABCD}}$ is classical on $B$ when tracing out $A$ and $D$, i.e., 
\begin{align*}
\fr{\rho_{BC}}{\sigma_C}=\sum_b\proj{b}_B\otimes\theta^b_C\ ,
\end{align*}
for some nonnegative operators $\{\theta^b_C\}_b$ on $C$, because this
is true for the original state $\ket{\Psi_{ABCD}}$ by assumption. In
particular, the eigenvectors of $\fr{\rho_{BC}}{\sigma_C}$ are of the
form $\ket{b}_B\ket{\varphi}_C$. Since $P^\alpha_{BC}$ is a projector
onto an eigenspace of this operator, this proves that $P^\alpha_{BC}$
has the form $P^\alpha_{BC}=\sum_b \proj{b}_B\otimes T^b_C$ for some
operators $\{T^b_C\}_b$ on $C$. This immediately gives the
claim~\eqref{eq:sigmaCpBC}.
\end{proof}

As explained in Section~\ref{sec:splittingsamplingrecombining}, we will later apply the splitting-chain-rule recursively. In particular, we will further split up split states. Conveniently, orthogonality properties are preserved under such successive splitting operators, as we now explain.

For concreteness, suppose that we split a state $\ket{\Psi_{A_1B_1CD_1}}$ into states $\{\ket{\Psi^{\alpha_1}_{A_1B_1CD_1}}\}_{\alpha_1}$ satisfying
\begin{align*}
H(A_1|B_1C)_{\rho^{\alpha_1}}+\h{B_1}{C}{}{\rho^{\alpha_1}}{\sigma}\gtrsim \h{A_1B_1}{C}{}{\rho^{\alpha_1}}{\sigma}\ .
\end{align*}
Assume further that $B_1=A_2B_2$ is bipartite. We can then split each $\ket{\Psi^{\alpha_1}_{A_1B_1C_1D_1}}$ further into a family of states $\{\ket{\Psi^{\alpha_1\alpha_2}_{A_1B_1C_1D_1}}\}_{\alpha_2}$ such that
\begin{align*}
H(A_2|B_2C)_{\rho^{\alpha_1\alpha_2}}+\h{B_2}{C}{}{\rho^{\alpha_1\alpha_2}}{\sigma}\gtrsim \h{A_2B_2}{C}{}{\rho^{\alpha_1}}{\sigma}=\h{B_1}{C}{}{\rho^{\alpha_1}}{\sigma}\ 
\end{align*}
for all $(\alpha_1,\alpha_2)$. Diagrammatically, the grouping/splitting of systems can be drawn as
\begin{align*}
B_1&\ \Big\{ \begin{matrix}A_2\\ B_2\end{matrix}\\
C &\  \ \ C\\
\begin{matrix}
A_1\\
D_1
\end{matrix}&\ \Big\} D_2\ .
\end{align*}
Clearly, a desirable property is that these states are orthogonal, such that 
\begin{align*}
\ket{\Psi_{A_1B_1CD_1}}=\sum_{(\alpha_1,\alpha_2)} \ket{\Psi^{\alpha_1\alpha_2}_{A_1B_1CD_1}}
\end{align*}
is a decomposition of $\ket{\Psi_{A_1B_1CD_1}}$ into mutually orthogonal states.

We will prove this statement by considering the corresponding projection operators $\{Q^{\alpha_1}_{A_1D_1}\}_{\alpha_1}$ and $\{Q^{\alpha_2\alpha_2}_{A_2D_2}\}_{(\alpha_1,\alpha_2)}$ (where $D_2=D_1A_1$) defined by the splitting-chain-rule; i.e., these are operators satisfying
\begin{align*}
\ket{\Psi^{\alpha_1}_{A_1B_1CD_1}}&=Q^{\alpha_1}_{A_1D_1}\ket{\Psi_{A_1B_1CD_1}}\\
\ket{\Psi^{\alpha_1\alpha_2}_{A_2B_2CD_2}}&=Q^{\alpha_1\alpha_2}_{A_2D_2}\ket{\Psi^{\alpha_1}_{A_2B_2CD_2}}=Q^{\alpha_1\alpha_2}_{A_2D_2}Q^{\alpha_1}_{A_1D_1}\ket{\Psi_{A_1B_1CD_1}}\ .
\end{align*}
By definition, for every $\alpha_1$, the operators $\{Q^{\alpha_2\alpha_2}_{A_2D_2}\}_{(\alpha_1,\alpha_2)}$ are mutually orthogonal for different $\alpha_2$. We will now show that these operators satisfy the inequality
\begin{align}\label{eq:operatorinequalityrefinement}
Q^{\alpha_1\alpha_2}_{A_2D_2}\leq Q^{\alpha_1}_{A_1D_1}\ .
\end{align}
for all $(\alpha_1,\alpha_2)$. This expresses the fact that the operators $Q^{\alpha_1\alpha_2}_{A_2D_2}$ are a ``refinement'' of $Q^{\alpha_1}_{A_1D_1}$. In particular, their images are orthogonal for different values of $\alpha_1$, and we have $Q^{\alpha_1\alpha_2}_{A_2D_2}Q^{\alpha_1}_{A_1D_1}=Q^{\alpha_1\alpha_2}_{A_2D_2}$ (cf. Lemma~\ref{lem:orderingofprojectors}~\eqref{eq:orderingprojectorssub}). In other words, each of the states $\ket{\Psi^{\alpha_1\alpha_2}_{A_1B_1CD_1}}$ can be obtained by applying a single projection to $\ket{\Psi_{A_1B_1CD_1}}$.

The proof involves the following property of the projection operators.
\begin{remark}[Operator inequalities]\label{rem:orthogonalitypreservation}
Let  $\rho_{ABCD}=\proj{\Psi_{ABCD}}$, $Q^\alpha_{AD}$, and $\rho^{\alpha}_{ABCD}=\proj{\Psi^\alpha_{ABCD}}$ be defined as  in
Lemma~\ref{lem:splitfirst}.  Let $\id_{\supp(\rho)}$ denote the
projector onto the support of the operator $\rho$. Then\footnote{Recall that, according to our convention, the first inequality is an abbreviation for the operator inequality $\id_{\supp(\rho^\alpha_{ADF})}\leq Q^\alpha_{AD} \otimes \id_F$ (see Section~\ref{sec:preliminaries} for more details).}
\begin{align*}
\id_{\supp(\rho^\alpha_{ADF})}\leq Q^\alpha_{AD}\leq
\id_{\supp(\rho_{AD})} 
\end{align*}
for any subsystem $F\subseteq BC$. (By that, we mean that 
$\cH_{BC}$ is the  product $\cH_{BC}\cong\cH_F\otimes\cH_G$ of two systems $F$ and $G$, such   that $BC=FG$.)
\end{remark}
Indeed, the second inequality of this remark gives 
\begin{align*}
Q^{\alpha_1\alpha_2}_{A_2D_2}\leq \id_{\supp(\rho^{\alpha_1}_{A_2D_2})}=\id_{\supp(\rho^{\alpha_1}_{A_1A_2D_1})}\ 
\end{align*}
because $D_2=D_1A_1$,
whereas the first inequality with $F=A_2$ (recall that $B_1=A_2B_2$) leads to
\begin{align*}
\id_{\supp(\rho^{\alpha_1}_{A_1D_1A_2})}\leq Q^{\alpha_1}_{A_1D_1}\ .
\end{align*}
This proves the fundamental property~\eqref{eq:operatorinequalityrefinement}.

It remains to give a proof of the statement made in the remark.
\begin{proof}
According to
Lemma~\ref{lem:orderingofprojectors}~\eqref{eq:supportorderingprojectors},
it suffices to show that
\begin{align*}
\supp(\rho^\alpha_{ADF})\subseteq \supp(Q^\alpha_{AD} \otimes \id_F)
\end{align*}
and 
\begin{align}\label{eq:toprovesupportinclusion}
\supp(Q^\alpha_{AD})\subseteq \supp(\rho_{AD})\ .
\end{align}
The first of these inequalities is a direct consequence of the fact
that $\rho^\alpha_{ADF}=(\id_F\otimes
Q^\alpha_{AD})\rho^\alpha_{ADF}(\id_F\otimes Q^\alpha_{AD})$. To prove
the second inequality, observe that $Q^\alpha_{AD}$ projects onto an
eigenspace of the conditional operator $\tr_{\overline{BC}}
(\sigma_C^\mhalf\proj{\Psi_{ABCD}}\sigma_C^\mhalf)$, and thus
\begin{align*}
\supp (Q^\alpha_{AD})\subseteq\supp\left(
\tr_{\overline{BC}}(\sigma_C^\mhalf\proj{\Psi_{ABCD}}\sigma_C^\mhalf)\right)\ .
\end{align*}
The inclusion~\eqref{eq:toprovesupportinclusion} then follows because
the latter set is contained in $\supp(\rho_{AD})$. This can be
verified for example by using a Schmidt decomposition
$\ket{\Psi_{ABCD}}=\sum_\mu \sqrt{\mu}\ket{\mu_{BC}}\ket{\mu_{AD}}$ of
$\ket{\Psi_{ABCD}}$.  In terms of this decomposition, we have
\begin{align*}
\tr_{\overline{BC}}(\sigma_C^\mhalf\proj{\Psi_{ABCD}}\sigma_C^\mhalf)=\sum_{\mu,\mu'}
\tr(\sigma_C^\mhalf\ket{\mu_{BC}}\bra{\mu'_{BC}}\sigma_C^\mhalf)
\ket{\mu_{AD}}\bra{\mu'_{AD}}\ ,
\end{align*}
and the support of this operator is clearly contained in $\myspan\{\ket{\mu_{AD}}\}=\supp(\rho_{AD})$.
\end{proof}

\subsection{Recombination-rules for split states}
As discussed in Section~\ref{sec:partialrecombination}, we will need a converse to the splitting rule which shows that the entropy of the original state is large if it is large for each split state. Here we show how this works in detail in the most simple case. Again, this section may be omitted, but it is instructive for the slightly more intricate case we will need below (cf.~Lemma~\ref{lem:alphamerge}).

Remarkably, the statement we will prove is generally true for any system $F$ which we do not condition on.
\begin{lemma}\label{lem:recombinationbasic}
Let $\ket{\Psi_{ABCD}}$, $\{\ket{\Psi^\alpha_{ABCD}}\}_{\alpha\in [m]}$ and $\{Q^\alpha_{AD}\}_{\alpha\in [m]}$ be as in Corollary~\ref{cor:split}. Let $F\subseteq ABD$ be an arbitrary subsystem. Then
\begin{align*}
 \min_{\alpha\in [m]}\h{F}{C}{}{\rho^\alpha}{\sigma}-2\log m\leq \h{F}{C}{}{\rho}{\sigma}\ .
\end{align*}
\end{lemma}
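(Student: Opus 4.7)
The plan is to convert the hypothesis into operator inequalities, aggregate them via a Cauchy--Schwarz bound that exploits the decomposition~\eqref{eq:identityresolutionpsi}, and read off the conclusion from the operator characterisation~\eqref{eq:operatorinequalityhmin} of min-entropy. Set $\lambda:=\min_{\alpha\in[m]}\h{F}{C}{}{\rho^\alpha}{\sigma}$; by~\eqref{eq:operatorinequalityhmin} the hypothesis is equivalent to $\rho^\alpha_{FC}\leq 2^{-\lambda}\sigma_C$ for every $\alpha\in[m]$.

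The crux of the argument, and the only nontrivial step, is the operator inequality
\begin{align*}
\proj{\Psi}\ \leq\ m\sum_{\alpha\in[m]}\proj{\Psi^\alpha}
\end{align*}
on $\cH_{ABCD}$. Using the resolution~\eqref{eq:identityresolutionpsi}, this is just a scalar Cauchy--Schwarz bound: for every $\ket{\phi}\in\cH_{ABCD}$,
\begin{align*}
|\spr{\phi}{\Psi}|^2=\Bigl|\sum_{\alpha\in[m]}\spr{\phi}{\Psi^\alpha}\Bigr|^2\leq m\sum_{\alpha\in[m]}|\spr{\phi}{\Psi^\alpha}|^2.
\end{align*}
Equivalently, one may introduce an auxiliary $m$-dimensional register with basis $\{\ket{\alpha}\}_{\alpha\in[m]}$, set $\ket{+}:=\sum_\alpha\ket{\alpha}$ and $M:=\sum_\alpha\ket{\Psi^\alpha}\bra{\alpha}$, observe that $\ket{\Psi}=M\ket{+}$, and conjugate the trivial bound $\proj{+}\leq m\cdot\id$ by $M$.

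Taking the partial trace over the complement of $FC$ (which preserves operator ordering) and plugging in the per-$\alpha$ bounds then yields
\begin{align*}
\rho_{FC}\ \leq\ m\sum_{\alpha\in[m]}\rho^\alpha_{FC}\ \leq\ m^2\cdot 2^{-\lambda}\sigma_C\ =\ 2^{-(\lambda-2\log m)}\sigma_C.
\end{align*}
A second application of~\eqref{eq:operatorinequalityhmin} gives $\h{F}{C}{}{\rho}{\sigma}\geq\lambda-2\log m$, which is the desired bound.

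The proof is accordingly short. The main conceptual point is the Cauchy--Schwarz step above: this is where the additive $2\log m$ loss originates (one factor of $m$ from Cauchy--Schwarz, a second from summing $m$ per-$\alpha$ bounds), and it is the only place that transforms information about the pieces $\rho^\alpha$ back into a statement about $\rho$. Notably, orthogonality of the states $\ket{\Psi^\alpha}$, guaranteed here by the construction of Lemma~\ref{lem:splitfirst}, plays no role, so the lemma in fact holds for any $m$-term decomposition of the form~\eqref{eq:identityresolutionpsi}. The $2\log m$ penalty is moreover sharp without further structure, as the toy case $\ket{\Psi^\alpha}:=\tfrac{1}{m}\ket{\psi}$ (for a fixed unit vector $\ket{\psi}$) saturates $\h{F}{C}{}{\rho^\alpha}{\sigma}=\h{F}{C}{}{\rho}{\sigma}+2\log m$ exactly.
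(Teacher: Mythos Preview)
Your proof is correct. Both your argument and the paper's hinge on the same Cauchy--Schwarz mechanism and incur the same $m^2$ loss, but you apply it at a different level. The paper rewrites each $\rho^\alpha_{FC}$ as $\tr_{\overline{FC}}\bigl(Q^\alpha_{AD}\,\fr{\rho_{ABCD}}{\sigma_C}\,Q^\alpha_{AD}\bigr)$, invokes the operator Cauchy--Schwarz Lemma~\ref{lem:alphabetmodifiedlem} to bound $\tr_{\overline{FC}}\bigl(Q\,\fr{\rho_{ABCD}}{\sigma_C}\,Q\bigr)$ with $Q=\sum_\alpha Q^\alpha_{AD}$, and then uses that $Q$ acts as the identity on the relevant support to recover $\fr{\rho_{FC}}{\sigma_C}$. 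You instead apply scalar Cauchy--Schwarz directly to the resolution $\ket{\Psi}=\sum_\alpha\ket{\Psi^\alpha}$ to get $\proj{\Psi}\leq m\sum_\alpha\proj{\Psi^\alpha}$, then trace out and sum. Your route is shorter: it bypasses both Lemma~\ref{lem:alphabetmodifiedlem} and the support argument, and as you note it does not use the projector structure or orthogonality of the $Q^\alpha_{AD}$ at all, so it proves the statement for any $m$-term decomposition~\eqref{eq:identityresolutionpsi}. The paper's formulation via Lemma~\ref{lem:alphabetmodifiedlem} is reused later (Lemma~\ref{lem:alphamerge}) in a setting where one genuinely works with projectors rather than a state decomposition, which is presumably why it is packaged that way.
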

\begin{proof}
Let $\lambda:=2^{- \min_{\alpha\in [m]}\h{F}{C}{}{\rho^\alpha}{\sigma}}$.  We then have $\rho^\alpha_{FC}\leq \lambda\sigma_C$ for all $\alpha\in[m]$, or
\begin{align*}
\fr{\rho^\alpha_{FC}}{\sigma_C}\leq \lambda \id_{FC}\ .
\end{align*}
Using the commutativity of $Q^\alpha_{AD}$ and $\sigma_C$, we can rewrite this as 
\begin{align*}
\tr_{\overline{FC}}(Q^\alpha_{AD}\fr{\rho_{ABCD}}{\sigma_C}Q^\alpha_{AD})\leq \lambda\id_{FC}\qquad\textrm{ for all }\alpha\in [m]\ .
\end{align*}
  At this point, we use a statement about operators which we state as Lemma~\ref{lem:alphabetmodifiedlem} in the appendix. It tells us that the previous inequalities imply that 
\begin{align*}
\tr_{\overline{FC}}(Q_{AD}\fr{\rho_{ABCD}}{\sigma_C}Q_{AD})\leq \lambda m^2 \id_{FC}\ ,
\end{align*}
where $Q_{AD}=\sum_{\alpha\in [m]}Q^\alpha_{AD}$. Recall that the operators $Q^{\alpha}_{AD}$ are defined in terms of the eigenspaces of $\tr_{\overline{B C}}(\fr{\rho_{ABCD}}{\sigma_C})$. Their definition implies that $Q_{AD}$ restricted to the support of $\tr_{\overline{B C}}\fr{\rho_{ABCD}}{\sigma_C}$ is equal to the identity. Thus the last inequality simply says
\begin{align*}
\fr{\rho_{FC}}{\sigma_C}\leq \lambda m^2\id_{FC}\ .
\end{align*}
Multiplying from the left and the right by $\sigma_C^\half$ gives the claim.
\end{proof}

\section{Entropy sampling\label{sec:entropysampling}}
We now return to our main problem, i.e., the analysis of a state
$\rho_{X^n E}$ with classical part $X^n=(X_1,\ldots,X_n)$, and the
relation of the entropy $\hmin^\varepsilon(X_{\cS}|E)_{\rho}$ of a
randomly chosen subset $\cS\subset [n]$ to the entropy
$\hmin(X^n|E)_{\rho}$ of all classical parts. We proceed as sketched
in Section~\ref{sec:splittingsamplingrecombining}: In
Section~\ref{sec:recursivesplitting}, we describe the recursive
splitting of the joint min-entropy $\hmin(X^n|E)_\rho$ into a sum of
individual contributions of each random variable. We then discuss how
high-entropy components can be recombined to a state with high
min-entropy (Section~\ref{sec:recombiningev}). In particular, we
relate the smooth min-entropy $\hmin^{\varepsilon}(X_\cS|E)_\rho$ to the probability
weight $\omega(\Gamma)$ of a certain set $\Gamma$ under a given
distribution $\omega$. We then study the behavior of a sampler with
respect to this quantity. For this purpose, we introduce the concept
of a \matrixsampler\ in Section~\ref{sec:averagingsamplersmatrix}. We
then show that with high probability over the choice of $\cS$, the
probability $\omega(\Gamma)$ of interest is large
(Section~\ref{sec:highprobabilityweightsampling}).

We finally combine these components in Section~\ref{sec:samplingandrecombining}, where we state our main result, i.e., the preservation of (smooth) min-entropy rates under sampling. 

\subsection{Splitting\label{sec:recursivesplitting}}

We apply the splitting-chain-rule recursively to a state
$\rho_{X^nE}$, where $X_1,\ldots,X_n$ are random variables on an
alphabet $\cX$. Let $\ket{\Psi_{X^nER}}$ be a purification of
$\rho_{X^nE}$ (for simplicity, we will henceforth often omit
subscripts denoting systems, where there is no potential for
confusion).  Furthermore, let $\sigma_E$ be a nonnegative operator on
$E$. In Figure~\ref{fig:splitstates}, we visualise the set of states
introduced in the following definition by a tree.
\begin{definition}[``Split states'']\label{def:splitstates}
Let $\rho_{X^nER}=\proj{\Psi}$ and let
$\rho^{\alpha^j}_{X^nER}=\proj{\Psi^{\alpha^j}}$ be pure states
recursively defined as follows. Set $\ket{\Psi^{\alpha^0}}:=
\ket{\Psi}$. To obtain $\ket{\Psi^{\alpha^j}}$ for $j\in [n]$ and $\alpha^{j}=(\alpha_j,\alpha^{j-1})\in [m]^j=[m]\times [m]^{j-1}$, apply
Corollary~\ref{cor:split} to the state $\ket{\Psi^{\alpha^{j-1}}}$ with
$A=X_j$, $B=\inc{X}{j}$, $C=E$, $D=\dec{X}{j-1}R$. This gives projectors 
$P_{\inc{X}{j}E}^{\alpha^{j}}$ and $Q_{\dec{X}{j}R}^{\alpha^{j}}$; we define $\ket{\Psi^{\alpha^j}}$ as in
Corollary~\ref{cor:split} as 
$\ket{\Psi^{\alpha^j}}=Q_{\dec{X}{j}R}^{\alpha^j}\ket{\Psi^{\alpha^{j-1}}}$.
\end{definition}

\begin{figure}
\begin{center}
\centerline{\includegraphics[scale=1.0]{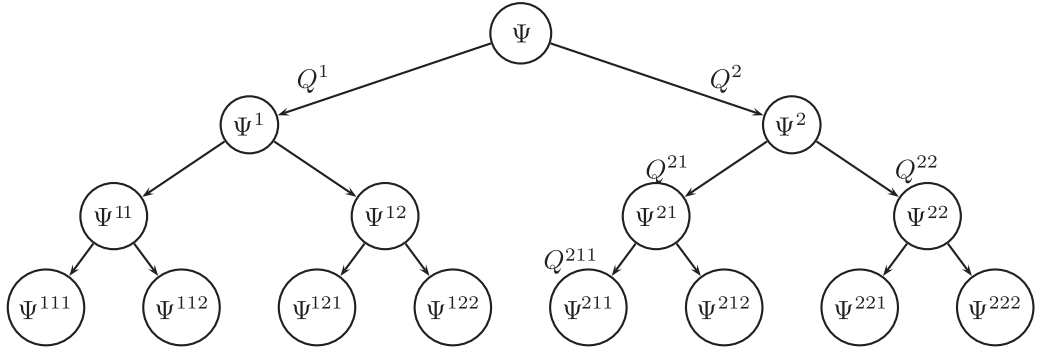}}
\end{center}
\caption{A schematic picture of the states introduced in
  Definition~\ref{def:splitstates}, for $n=3$ and $m=2$. As in
  Figure~\ref{fig:basicsplitting}, the immediate descendants of every
  node give an orthogonal decomposition of the state associated with
  it, and are obtained by applying corresponding projection
  operators. In Lemma~\ref{lem:rhoalphajexpressions}, we will show
  that  the states at level $j$ are orthogonal, for every level $j\in
  [n]$. In particular,  this  means that the leaves form  an orthogonal
  decomposition of the original state. Observe that we label each vertex by the corresponding sequence of splitting operators; in particular, the leaves carry labels $\alpha^n\in [m]^n$.\label{fig:splitstates}}
\end{figure}

Spelling out this recursive definition, we have
\begin{align}
\ket{\Psi^{\alpha^j}}&=Q^{\alpha^j}_{\dec{X}{j}R}\cdots Q^{\alpha^1}_{\dec{X}{1}R}\ket{\Psi}\label{eq:qrecursion}\\
&=\Pt^{\alpha^j}_{\inc{X}{j}E}\cdots
\Pt^{\alpha^1}_{\inc{X}{1}E}\ket{\Psi}\ ,\label{eq:precursion}
\end{align}
where  $\Pt^{\alpha^j}_{\inc{X}{j}E}=\sigma_E^\half
P^{\alpha^j}_{\inc{X}{j}{E}}\sigma_E^\mhalf$. The following auxiliary
result will prove useful. We will apply it to show that the states on each level of the tree in Figure~\ref{fig:splitstates} are mutually orthogonal (by {\em level}, we mean all vertices at a fixed depth of the tree, i.e., distance from the root). In fact, any two states in different subtrees are mutually orthogonal, but we will not need this statement here. The proof of the following lemma relies on the fact that splitting preserves orthogonality. It is analogous to the derivation of~\eqref{eq:operatorinequalityrefinement} in Section~\ref{sec:entropysplitting}.
\begin{lemma}\label{lem:qoperatorsorthogonality}
For all $j\geq k$ and $\alpha^j\in [m]^j$ we have 
\begin{align*}
Q^{\alpha^j}_{\dec{X}{j}R}Q^{\alpha^k}_{\dec{X}{k}R}=Q^{\alpha^k}_{\dec{X}{k}R}Q^{\alpha^j}_{\dec{X}{j}R}=Q^{\alpha^j}_{\dec{X}{j}R}\ .
\end{align*}
Moreover, the operators $\{Q^{\alpha^j}_{\dec{X}{j}R}\}_{\alpha^j\in [m]^j}$
are pairwise orthogonal for a fixed $j\in [n]$.
\end{lemma}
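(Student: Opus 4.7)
The plan is to first establish the containment
\[
Q^{\alpha^j}_{\dec{X}{j}R} \leq Q^{\alpha^{j-1}}_{\dec{X}{j-1}R}
\]
for every $j\in[n]$ and every $\alpha^j=(\alpha_j,\alpha^{j-1})\in[m]^j$, where as usual this abbreviates an operator inequality with an implicit $\id_{X_j}$ tensored on the left-hand side. This follows by applying Remark~\ref{rem:orthogonalitypreservation} twice: once to the splitting that defines $Q^{\alpha^j}_{\dec{X}{j}R}$, giving $Q^{\alpha^j}_{\dec{X}{j}R}\leq \id_{\supp(\rho^{\alpha^{j-1}}_{\dec{X}{j}R})}$; and once to the previous splitting that defines $Q^{\alpha^{j-1}}_{\dec{X}{j-1}R}$, this time taking the ``extra'' subsystem $F$ to be $X_j$ (which is a subsystem of the $B=\inc{X}{j-1}$ used at stage $j-1$), giving $\id_{\supp(\rho^{\alpha^{j-1}}_{\dec{X}{j-1}RX_j})}\leq Q^{\alpha^{j-1}}_{\dec{X}{j-1}R}$. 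Because $\dec{X}{j-1}RX_j=\dec{X}{j}R$, concatenation of the two inequalities yields the claim.

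The first assertion of the lemma then follows by iterating this containment: for any $k\leq j$,
\[
Q^{\alpha^j}_{\dec{X}{j}R}\leq Q^{\alpha^{j-1}}_{\dec{X}{j-1}R}\leq\cdots\leq Q^{\alpha^k}_{\dec{X}{k}R}.
\]
Invoking Lemma~\ref{lem:orderingofprojectors}\eqref{eq:orderingprojectorssub} (which states that for projectors $P\leq Q$ one has $PQ=QP=P$) gives immediately
\[
Q^{\alpha^j}_{\dec{X}{j}R}Q^{\alpha^k}_{\dec{X}{k}R}=Q^{\alpha^k}_{\dec{X}{k}R}Q^{\alpha^j}_{\dec{X}{j}R}=Q^{\alpha^j}_{\dec{X}{j}R}.
\]

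For the pairwise orthogonality at a fixed level $j$, the plan is to induct on $j$. The base case $j=1$ is immediate because Corollary~\ref{cor:split} produces a family of mutually orthogonal projectors $\{Q^{\alpha^1}_{\dec{X}{1}R}\}_{\alpha_1\in[m]}$. For the inductive step, consider distinct $\alpha^j\neq\beta^j$ in $[m]^j$ and split into two cases. If the prefixes agree, $\alpha^{j-1}=\beta^{j-1}$, then $Q^{\alpha^j}$ and $Q^{\beta^j}$ are two distinct members of the orthogonal family of projectors produced by Corollary~\ref{cor:split} applied to $\ket{\Psi^{\alpha^{j-1}}}$, so their product vanishes. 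If the prefixes differ, then by the containment shown above the images of $Q^{\alpha^j}$ and $Q^{\beta^j}$ lie inside the images of $\id_{X_j}\otimes Q^{\alpha^{j-1}}$ and $\id_{X_j}\otimes Q^{\beta^{j-1}}$ respectively, and these are orthogonal by the inductive hypothesis.

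The main obstacle I anticipate is bookkeeping: making sure that the subsystem $F$ passed into Remark~\ref{rem:orthogonalitypreservation} at each stage is indeed a subsystem of the $BC$ used at that stage, and that the tensored identities implicit in our abbreviated operator inequalities compose correctly across two different splittings (which use different $A,B,C,D$ assignments). Once this dictionary is fixed, the argument is essentially the same as the derivation of \eqref{eq:operatorinequalityrefinement} given just after Remark~\ref{rem:orthogonalitypreservation}, now applied recursively along the path from the root to $\alpha^j$ in the splitting tree.
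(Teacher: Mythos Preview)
Your proposal is correct and follows essentially the same approach as the paper: the containment $Q^{\alpha^j}\leq Q^{\alpha^{j-1}}$ via two applications of Remark~\ref{rem:orthogonalitypreservation} (with $F=X_j$) and then Lemma~\ref{lem:orderingofprojectors}\eqref{eq:orderingprojectorssub} is exactly how the paper proceeds. For orthogonality the paper avoids explicit induction by letting $k$ be the minimal index at which $\alpha^j$ and $\beta^j$ differ and writing $Q^{\alpha^j}Q^{\beta^j}=Q^{\alpha^j}Q^{\alpha^k}Q^{\beta^k}Q^{\beta^j}=0$, which is just an unrolled version of your inductive case split.
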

\begin{proof}
Note that the first claim trivially holds for  $j=k$ since
the operators are projectors. Observe that for any $j>1$, we have
\begin{align*}
Q^{\alpha^j}_{\dec{X}{j}R}\leq
\id_{\supp(\rho^{\alpha^{j-1}}_{\dec{X}{j}R})}=
\id_{\supp(\rho^{\alpha^{j-1}}_{X_j\dec{X}{j-1}R})}\leq
Q^{\alpha^{j-1}}_{\dec{X}{j-1}R}\ ,
\end{align*}
where we used Remark~\ref{rem:orthogonalitypreservation} twice (with
$F=X_j$). Inductively, we obtain
\begin{align*}
Q^{\alpha^j}_{\dec{X}{j}R}\leq Q^{\alpha^k}_{\dec{X}{k}R}
\end{align*}
for any $k\leq j$. The first claim therefore follows from
Lemma~\ref{lem:orderingofprojectors}~\eqref{eq:orderingprojectorssub}.

The orthogonality of the operators
$\{Q^{\alpha^j}_{\dec{X}{j}R}\}_{\alpha^j\in [m]^j}$ immediately
follows from the first claim: For $\alpha^j\neq \beta^j\in [m]^j$, let
$k\leq j$ be the minimal index in which they differ, i.e.,
$\alpha_k\neq \beta_k$ and $\alpha^{k-1}=\beta^{k-1}$. We then have by
the first claim
\begin{align*}
Q^{\alpha^j}_{\dec{X}{j}R}Q^{\beta^j}_{\dec{X}{j}R}=Q^{\alpha^j}_{\dec{X}{j}R}Q^{\alpha^k}_{\dec{X}{k}R}Q^{\beta^k}_{\dec{X}{k}R}Q^{\beta^j}_{\dec{X}{j}R}=0\ ,
\end{align*}
since the operators
$Q^{\alpha^k}_{\dec{X}{k}R}=Q^{(\alpha_k,\alpha^{k-1})}_{\dec{X}{k}R}$
and
$Q^{\beta^k}_{\dec{X}{k}R}=Q^{(\beta_k,\alpha^{k-1})}_{\dec{X}{k}R}$
are orthogonal for $\alpha_k\neq \beta_k$.
\end{proof}
As promised, we now establish a few properties of the split states such as their orthogonality and the fact that they are partly classical as the original state.
\begin{lemma}[Properties of the split states]\label{lem:rhoalphajexpressions}
The states introduced in Definition~\ref{def:splitstates} have the
following properties.
\begin{enumerate}[(i)]
\item\label{it:lempropfirst}
The states $\{\ket{\Psi^{\alpha^j}}\}_{\alpha^j\in [m]^j}$ are pairwise orthogonal for a
fixed~$j\in [n]$.
\item\label{it:Psialphasum}
The states $\{\ket{\Psi^{\alpha^n}}\}_{\alpha^n\in [m]^n}$ form an
orthogonal resolution of $\ket{\Psi}$, i.e., 
$\sum_{\alpha^n\in[m]^n}\ket{\Psi^{\alpha^n}}=\ket{\Psi}$.
In particular, $\omega(\alpha^n):=\tr\proj{\Psi^{\alpha^n}}$ defines
a probability distribution on $[m]^n$.
\item\label{it:singleprojectorpsidef}
The state $\ket{\Psi^{\alpha^j}}$ can be obtained by a single
projection on $\dec{X}{j}R$, i.e., $\ket{\Psi^{\alpha^j}}=Q^{\alpha^j}_{\dec{X}{j}{R}}\ket{\Psi}$.
\item\label{it:classicalitysplit}
For every $j\in [n]$ and $\alpha^j\in [m]^j$, the state $\rho^{\alpha^j}_{X^nE}$ is classical on $X^n$.
\item\label{it:hnrhosigma}
For all $\sigma=\sigma_E$, we have $\hcond{E}{\rho^{\alpha^j}}{\sigma}\geq \hcond{E}{\rho}{\sigma}$.
\end{enumerate}
\end{lemma}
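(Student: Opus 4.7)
The plan is to reduce all five items to the single-projection identity in~\eqref{it:singleprojectorpsidef}; once that is in hand, the remaining claims become short corollaries of tools already in place (Lemma~\ref{lem:qoperatorsorthogonality}, Remark~\ref{rem:classicality}, Lemma~\ref{lem:projectionmonotony}, and the level-by-level resolution~\eqref{eq:identityresolutionpsi} from Lemma~\ref{lem:splitfirst}). No new entropy manipulation is needed beyond what has already been developed.

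To prove~\eqref{it:singleprojectorpsidef}, I start from the unrolled recursion~\eqref{eq:qrecursion}, $\ket{\Psi^{\alpha^j}} = Q^{\alpha^j}_{\dec{X}{j}R}\cdots Q^{\alpha^1}_{\dec{X}{1}R}\ket{\Psi}$, and telescope using the first claim of Lemma~\ref{lem:qoperatorsorthogonality}: whenever $j\geq k$, $Q^{\alpha^j}_{\dec{X}{j}R} Q^{\alpha^k}_{\dec{X}{k}R} = Q^{\alpha^j}_{\dec{X}{j}R}$. Absorbing $Q^{\alpha^{j-1}}_{\dec{X}{j-1}R}$ into $Q^{\alpha^j}_{\dec{X}{j}R}$, then $Q^{\alpha^{j-2}}_{\dec{X}{j-2}R}$, and so on down to $k=1$, collapses the product to $Q^{\alpha^j}_{\dec{X}{j}R}$. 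Claim~\eqref{it:lempropfirst} then follows immediately: by~\eqref{it:singleprojectorpsidef} one has $\spr{\Psi^{\alpha^j}}{\Psi^{\beta^j}} = \bra{\Psi} Q^{\alpha^j}_{\dec{X}{j}R} Q^{\beta^j}_{\dec{X}{j}R} \ket{\Psi} = 0$ for $\alpha^j\neq\beta^j$, using the second (pairwise orthogonality) part of Lemma~\ref{lem:qoperatorsorthogonality}.

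For~\eqref{it:Psialphasum}, I iterate the one-step resolution~\eqref{eq:identityresolutionpsi}: summing over $\alpha_j$ at level $j$ recovers the parent $\ket{\Psi^{\alpha^{j-1}}}$, so telescoping from $j=n$ up to the root yields $\sum_{\alpha^n\in[m]^n}\ket{\Psi^{\alpha^n}} = \ket{\Psi}$. That $\omega(\alpha^n) = \tr\proj{\Psi^{\alpha^n}}$ defines a probability distribution is then immediate from the orthogonality in~\eqref{it:lempropfirst} together with the normalisation of $\ket{\Psi}$, since $1 = \|\ket{\Psi}\|^2 = \sum_{\alpha^n}\|\ket{\Psi^{\alpha^n}}\|^2 = \sum_{\alpha^n}\omega(\alpha^n)$.

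Claims~\eqref{it:classicalitysplit} and~\eqref{it:hnrhosigma} are handled by exploiting the single-projection form~\eqref{it:singleprojectorpsidef}. For~\eqref{it:classicalitysplit}, I induct on $j$: the base state $\rho_{X^n E}$ is classical on $X^n$ by hypothesis, and at step $j$ the splitting of $\ket{\Psi^{\alpha^{j-1}}}$ is performed with $A = X_j$, $B = \inc{X}{j}$, $C = E$, $D = \dec{X}{j-1}R$, so Remark~\ref{rem:classicality} applied with $D_1 = \dec{X}{j-1}$ transfers classicality on $ABD_1 = X^n$ to the child state $\rho^{\alpha^j}_{X^n E}$. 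For~\eqref{it:hnrhosigma}, I apply Lemma~\ref{lem:projectionmonotony} to the partition $A = \inc{X}{j}$, $B = E$, $C = \dec{X}{j}R$ of $\ket{\Psi}$, with projector $Q^{\alpha^j}_{\dec{X}{j}R}$ acting only on $C$; by~\eqref{it:singleprojectorpsidef} the image is exactly $\ket{\Psi^{\alpha^j}}$, and the projector part of the lemma yields $\hcond{E}{\rho^{\alpha^j}}{\sigma}\geq\hcond{E}{\rho}{\sigma}$ for every nonnegative $\sigma=\sigma_E$. The only real obstacle is bookkeeping: in each invocation one must verify that the system partition matches the hypotheses of the cited result, particularly that the $D_1$ in Remark~\ref{rem:classicality} covers all of $X^n$ in the inductive step, and that the projector $Q^{\alpha^j}_{\dec{X}{j}R}$ acts on the conditioning side $C$ rather than on the system $B = E$ we are conditioning on when invoking Lemma~\ref{lem:projectionmonotony}.
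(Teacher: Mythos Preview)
Your proof is correct and follows essentially the same approach as the paper: establish~\eqref{it:singleprojectorpsidef} first via Lemma~\ref{lem:qoperatorsorthogonality} and~\eqref{eq:qrecursion}, then derive~\eqref{it:lempropfirst} from the orthogonality part of that same lemma, obtain~\eqref{it:Psialphasum} by iterating~\eqref{eq:identityresolutionpsi}, and handle~\eqref{it:classicalitysplit} and~\eqref{it:hnrhosigma} via Remark~\ref{rem:classicality} (with $D_1=\dec{X}{j-1}$, $D_2=R$) and Lemma~\ref{lem:projectionmonotony} respectively. Your bookkeeping for~\eqref{it:hnrhosigma} is in fact slightly more careful than the paper's own statement of the partition, since the projector $Q^{\alpha^j}_{\dec{X}{j}R}$ acts on $\dec{X}{j}R$ rather than $\dec{X}{j-1}$.
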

The probability distribution $\omega$ (introduced in~\eqref{it:Psialphasum}) on the leaves $[m]^n$ of the tree in Figure~\ref{fig:splitstates} will play an important role in our recombination step. Inequality~\eqref{it:hnrhosigma} can be seen as an expression of the fact that splitting does not affect the part we condition on. 
\begin{proof}
First observe that~\eqref{it:singleprojectorpsidef}
follows inductively from Lemma~\ref{lem:qoperatorsorthogonality} and
expression~\eqref{eq:qrecursion}.  Similarly,  the
orthogonality~\eqref{it:lempropfirst} follows from this lemma
and~\eqref{it:singleprojectorpsidef}. Statement~\eqref{it:Psialphasum} follows by induction over $j$ from~\eqref{eq:identityresolutionpsi}. Statement~\eqref{it:classicalitysplit} follows inductively from Remark~\ref{rem:classicality} applied with $D_1=\dec{X}{j-1}$ and $D_2=R$.  Finally, the claim~\eqref{it:hnrhosigma} directly follows
from~\eqref{it:singleprojectorpsidef} and Lemma~\ref{lem:projectionmonotony} (with $C=\dec{X}{j-1}$ and $B=E$).
\end{proof}

The main reason for introducing the split states $\{\ket{\Psi^{\alpha^j}}\}$
 is the fact that they allow us to split the joint entropy $H(X^n|E)_{\rho}$ into individual contributions according to the splitting-chain-rule (Corollary~\ref{cor:split}). We express this central result as follows.
\begin{theorem}[``Splitting'']\label{thm:splittoffmain}
The split states satisfy 
  \begin{align}
\hcond{E}{\rho^{\alpha^n}}{\sigma}
    + \sum_{j=1}^{n} H(X_j|\inc{X}{j}E)_{\rho^{\alpha^{j}}}
  \geq
    H(\inc{X}{0}|E)_{\fr{\rho}{\sigma}} - \frac{n\log |\cX|}{m} \ .\label{eq:splitoffrecursivesum}
  \end{align}
for any $\alpha^n \in [m]^n$.
\end{theorem}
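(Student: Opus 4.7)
The plan is to apply the splitting-chain-rule (Corollary~\ref{cor:split}) recursively along the tree of split states and then telescope the resulting inequalities. At step $j \in [n]$, the state $\ket{\Psi^{\alpha^{j-1}}}$ is split according to Definition~\ref{def:splitstates} with $A = X_j$, $B = \inc{X}{j}$, $C = E$ and $D = \dec{X}{j-1}R$, producing $\ket{\Psi^{\alpha^j}}$. Corollary~\ref{cor:split} then yields
\begin{align*}
H(X_j\,|\,\inc{X}{j}E)_{\rho^{\alpha^j}} + H(\inc{X}{j}\,|\,E)_{\fr{\rho^{\alpha^j}}{\sigma}}
\geq H(\inc{X}{j-1}\,|\,E)_{\fr{\rho^{\alpha^{j-1}}}{\sigma}} - \frac{\Delta_j}{m},
\end{align*}
where $\Delta_j = H(\inc{X}{j-1}|E)_{\fr{\rho^{\alpha^{j-1}}}{\sigma}} - H(X_j|\inc{X}{j}E)_{\rho^{\alpha^{j-1}}} - H(\inc{X}{j}|E)_{\fr{\rho^{\alpha^{j-1}}}{\sigma}}$.

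The main obstacle, and the point requiring all three previously established ingredients, is to bound $\Delta_j$ by a quantity independent of $\rho^{\alpha^{j-1}}$. I would argue as follows: by the dimension bound (Lemma~\ref{lem:hminprop}~\eqref{it:hzerobound}) applied to the bipartition $\inc{X}{j-1} = X_j \cdot \inc{X}{j}$, we have $H(\inc{X}{j}|E)_{\fr{\rho^{\alpha^{j-1}}}{\sigma}} \geq H(\inc{X}{j-1}|E)_{\fr{\rho^{\alpha^{j-1}}}{\sigma}} - \log|\cX|$. Substituting this gives $\Delta_j \leq \log|\cX| - H(X_j|\inc{X}{j}E)_{\rho^{\alpha^{j-1}}}$. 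Now the classicality of $\rho^{\alpha^{j-1}}_{X^nE}$ on $X^n$ (Lemma~\ref{lem:rhoalphajexpressions}~\eqref{it:classicalitysplit}), together with positivity for classical systems (Lemma~\ref{lem:hminprop}~\eqref{it:classicalcond}), ensures $H(X_j|\inc{X}{j}E)_{\rho^{\alpha^{j-1}}} \geq 0$, so $\Delta_j \leq \log|\cX|$ uniformly in $j$ and $\alpha^{j-1}$.

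With this uniform bound in hand, the proof is finished by telescoping. Setting $g_j := H(\inc{X}{j}|E)_{\fr{\rho^{\alpha^j}}{\sigma}}$ for $j \in \{0,1,\ldots,n\}$, with the conventions $g_0 = H(\inc{X}{0}|E)_{\fr{\rho}{\sigma}}$ (since $\rho^{\alpha^0} = \rho$) and $g_n = \hcond{E}{\rho^{\alpha^n}}{\sigma}$ (since $\inc{X}{n} = \emptyset$), the above yields
\begin{align*}
H(X_j\,|\,\inc{X}{j}E)_{\rho^{\alpha^j}} + g_j \geq g_{j-1} - \frac{\log|\cX|}{m} \qquad \text{for every } j \in [n].
\end{align*}
Summing over $j = 1,\ldots,n$ and cancelling the telescoping terms $g_1,\ldots,g_{n-1}$ produces exactly the claimed inequality~\eqref{eq:splitoffrecursivesum}. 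I expect no subtleties beyond the $\Delta_j$ bound: the recursion commutes with the intermediate conditional operators because each application of Corollary~\ref{cor:split} treats $\rho^{\alpha^{j-1}}$ as the ``input'' state and $\rho^{\alpha^j}$ as one of its split descendants, which is precisely the regime of Definition~\ref{def:splitstates}.
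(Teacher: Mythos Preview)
Your proposal is correct and essentially identical to the paper's proof: both apply Corollary~\ref{cor:split} at each step $j$, bound the defect $\Delta_j$ by $\log|\cX|$ via the dimension bound (Lemma~\ref{lem:hminprop}~\eqref{it:hzerobound}) together with positivity for classical systems (Lemma~\ref{lem:hminprop}~\eqref{it:classicalcond}, invoking Lemma~\ref{lem:rhoalphajexpressions}~\eqref{it:classicalitysplit}), and then telescope. The only cosmetic difference is that the paper sums first and then bounds $\sum_j \Delta_j$, whereas you bound each $\Delta_j$ before summing; the content is the same.
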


\begin{proof}
In the following, we sometimes refer to the empty set as $\inc{X}{n}$. By construction and Corollary~\ref{cor:split}, the split states
satisfy the inequalities
\begin{equation} \label{eq:indsplitting}
 H(X_j|\inc{X}{j}E)_{\rho^{\alpha^{j}}} + H(\inc{X}{j}|E)_{\fr{\rho^{\alpha^{j}}}{\sigma}} 
 \geq 
  H(\inc{X}{j-1}|E)_{\fr{\rho^{\alpha^{j-1}}}{\sigma}} - \frac{\Delta_j}{m}\qquad\textrm{for all }j\in [n]\ ,
\end{equation}
 where $\Delta_j=H(\inc{X}{j-1}|E)_{\fr{\rho^{\alpha^{j-1}}}{\sigma}}-H(X_j|\inc{X}{j}E)_{\rho^{\alpha^{j-1}}}-H(\inc{X}{j}|E)_{\fr{\rho^{\alpha^{j-1}}}{\sigma}}$.
Summing these inequalities over all $j\in
[n]$, we get
  \[
    \sum_{j\in [n]} H(X_j|\inc{X}{j}E)_{\rho^{\alpha^{j}}}
  \geq 
    \sum_{j\in [n]}\Bigl( H(\inc{X}{j-1}|E)_{\fr{\rho^{\alpha^{j-1}}}{\sigma}} 
    - H(\inc{X}{j}|E)_{\fr{\rho^{\alpha^{j}}}{\sigma}} \Bigr) - \frac{1}{m}\sum_{j\in [n]} \Delta_j\ .
  \]
Because the rhs is a
  telescoping sum, i.e.,
  \[
    \sum_{j\in [n]}\Bigl( H(\inc{X}{j-1}|E)_{\fr{\rho^{\alpha^{j-1}}}{\sigma}}
      - H(\inc{X}{j}|E)_{\fr{\rho^{\alpha^{j}}}{\sigma}} \Bigr)
  = 
    H(\inc{X}{0}|E)_{\fr{\rho}{\sigma}} - H(\inc{X}{n}|E)_{\fr{\rho^{\alpha^n}}{\sigma}}   \ ,
  \]
this gives
\begin{align}
    H(\inc{X}{n}|E)_{\fr{\rho^{\alpha^n}}{\sigma}} 
    + \sum_{j\in [n]} H(X_j|\inc{X}{j}E)_{\rho^{\alpha^{j}}}
  \geq
    H(\inc{X}{0}|E)_{\fr{\rho}{\sigma}} - \frac{1}{m}\sum_{j\in [n]}\Delta_j \ .\label{eq:vdvein}
\end{align}

Note that $\rho^{\alpha^j}_{X^nE}$ is classical on~$X^n$, according to
Lemma~\ref{lem:rhoalphajexpressions}~\eqref{it:classicalitysplit}.
We can therefore use the  dimension bound~\eqref{it:hzerobound}  of
Lemma~\ref{lem:hminprop} and the positivity of the min-entropy (Lemma~\ref{lem:hminprop}~\eqref{it:classicalcond}) for classical systems to get
\begin{align*}
  \log |\cX| 
& \geq 
  H(X_j \inc{X}{j} | E)_{\fr{\rho^{\alpha^{j-1}}}{\sigma}} 
  - H(\inc{X}{j} | E)_{\fr{\rho^{\alpha^{j-1}}}{\sigma}} \\
& \geq
  H(\inc{X}{j-1} | E)_{\fr{\rho^{\alpha^{j-1}}}{\sigma}} 
  - H(\inc{X}{j} | E)_{\fr{\rho^{\alpha^{j-1}}}{\sigma}} 
  - H(X_j | \inc{X}{j} E)_{\rho^{\alpha^{j-1}}}=\Delta_j\qquad\textrm{ for all }j\in [n]\ .
\end{align*}
The claim follows from this and~\eqref{eq:vdvein}.
\end{proof}

To put the statement of Theorem~\ref{thm:splittoffmain} into a more concise form, it is useful to think of the entropic quantities appearing on the lhs of the inequality~\eqref{eq:splitoffrecursivesum} as attached to the tree given in Figure~\ref{fig:splitstates}. For convenience, we use a slightly modified tree $\mytree{n}$ which has spades attached to the leaves of the original tree (see Figure~\ref{fig:originaltree}). 
\begin{figure}
\begin{center}
\centerline{\includegraphics[scale=1.0]{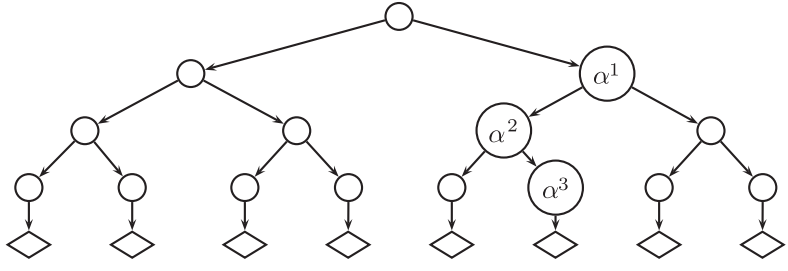}}
\end{center}
\caption{The  tree $\mytree{n}=\mytree{3}$, for $m=2$ and $n=3$.
 Every path from the root to a leaf/spade is specified by an $n$-tuple $\alpha^n\in \{1,2\}^3$. We will attach a weight corresponding to an entropy to every edge in the graph).\label{fig:originaltree}
}
\end{figure}
\begin{figure}
\begin{center}
\centerline{\includegraphics[scale=1.0]{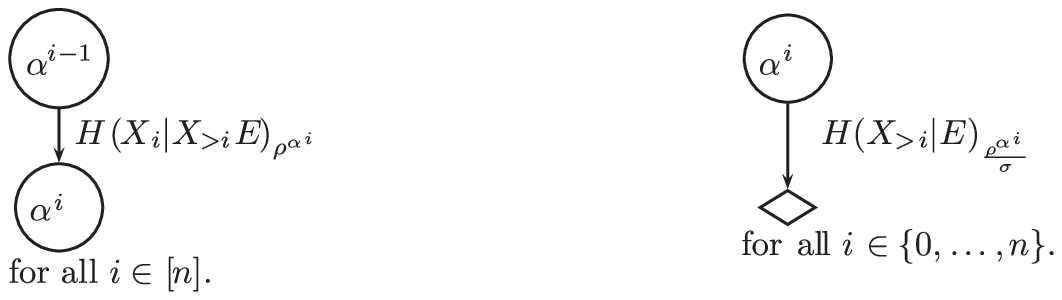}}
\end{center}
\caption{The weighting $\treevsimple{\rho}$ of the edges.\label{fig:treevsimple}}
\end{figure}

We can then attach weights to the edges of $\mytree{n}$ according the rule $\treevsimple{\rho}$ given in Figure~\ref{fig:treevsimple}. For a path $\alpha^n\in [m]^n$ from the root to a spade (i.e., leaf), we define the weight $\treevsimple{\rho}(\alpha^n)$ of the path $\alpha^n$ as the sum of the values on the edges along this path. In particular, for the weighting $\treevsimple{\rho}$ specified by Figure~\ref{fig:originaltree}, the weight $\treevsimple{\rho}(\alpha^n)$ coincides with the lhs of~\eqref{eq:splitoffrecursivesum} in Theorem~\ref{thm:splittoffmain}. 
 
More generally, we slightly abuse notation and define the {\em value} $\mathbf{w}(\mathbb{T})$ of a tree $\mathbb{T}$ with weighting $\mathbf{w}$ as the minimal value of a path from the root to a leaf. Theorem~\ref{thm:splittoffmain} can then be reformulated as follows.

\newtheorem*{varthmsplit}{Theorem~\ref{thm:splittoffmain}$^\prime$}
\begin{varthmsplit}
Let $\mytree{n}$ be the tree introduced in Figure~\ref{fig:originaltree}, and let $\treevsimple{\rho}$ be the weighting specified by Figure~\ref{fig:originaltree}. Then $\treevsimple{\rho}(\mytree{n})\geq    H(\inc{X}{0}|E)_{\fr{\rho}{\sigma}} - \frac{n}{m}\log|\cX|$.
\end{varthmsplit}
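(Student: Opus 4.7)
The plan is to recognize that Theorem~\ref{thm:splittoffmain}$'$ is simply a graph-theoretic repackaging of Theorem~\ref{thm:splittoffmain}, so the proof should consist almost entirely of matching definitions, followed by a single invocation of the already-established Theorem~\ref{thm:splittoffmain}.

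First I would carefully unpack the tree $\mytree{n}$ from Figure~\ref{fig:originaltree} and the edge weighting $\treevsimple{\rho}$ from Figure~\ref{fig:treevsimple}. Every path from the root to a spade corresponds to an $n$-tuple $\alpha^n=(\alpha_1,\ldots,\alpha_n)\in [m]^n$, with the edge at depth $j\in [n]$ (i.e.\ the one connecting the vertex labeled $\alpha^{j-1}$ to the vertex labeled $\alpha^{j}=(\alpha_j,\alpha^{j-1})$) carrying the weight $H(X_j|\inc{X}{j}E)_{\rho^{\alpha^j}}$, and the terminal edge leading into the spade carrying the weight $\hcond{E}{\rho^{\alpha^n}}{\sigma}$. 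Summing these contributions along the path yields
\[
\treevsimple{\rho}(\alpha^n) \;=\; \hcond{E}{\rho^{\alpha^n}}{\sigma} \,+\, \sum_{j=1}^{n} H(X_j|\inc{X}{j}E)_{\rho^{\alpha^{j}}}\ ,
\]
which is exactly the expression appearing on the left-hand side of the inequality in Theorem~\ref{thm:splittoffmain}.

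Next I would apply Theorem~\ref{thm:splittoffmain} directly to obtain
\[
\treevsimple{\rho}(\alpha^n) \;\geq\; H(\inc{X}{0}|E)_{\fr{\rho}{\sigma}} - \frac{n\log|\cX|}{m} \qquad\text{for every }\alpha^n\in [m]^n\ .
\]
Since, by the definition of the value of a weighted tree recalled just before the reformulation, $\treevsimple{\rho}(\mytree{n})$ equals the minimum of $\treevsimple{\rho}(\alpha^n)$ over all root-to-leaf paths $\alpha^n$, taking the minimum of the left-hand side preserves this lower bound and yields the claim.

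No genuine obstacle arises: the entire content is a change of notation. The only thing to be careful about is the bookkeeping between the recursive indexing $\alpha^j=(\alpha_j,\alpha^{j-1})$ used in Definition~\ref{def:splitstates} and the depth-$j$ edges of $\mytree{n}$, ensuring in particular that the terminal edge (which carries the unconditional-part entropy $\hcond{E}{\rho^{\alpha^n}}{\sigma}$) is correctly identified rather than conflated with a level-$n$ splitting edge. Once this identification is made, the reformulated theorem requires no further argument.
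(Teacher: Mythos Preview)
Your proposal is correct and matches the paper's approach exactly: the paper states Theorem~\ref{thm:splittoffmain}$'$ as a straightforward reformulation of Theorem~\ref{thm:splittoffmain}, noting that $\treevsimple{\rho}(\alpha^n)$ coincides with the left-hand side of~\eqref{eq:splitoffrecursivesum} and that the tree value is the minimum over all root-to-leaf paths, without supplying any additional argument beyond this identification.
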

We will later be interested in different weightings. We will also show a converse to this statement: If the value of a tree is large, then so is the corresponding entropy.

\subsection{Recombining\label{sec:recombiningev}}
To show that the original state $\rho_{X^nE}$ has a large smooth min-entropy $H^\varepsilon_{\min}(X_\cS|E)$ for a randomly selected subset $\cS\subset [n]$, we will now study how the split states can be recombined. More precisely, we are interested in properties of states $\ket{\bPsi}$ that are obtained by summing up states $\ket{\Psi^{\alpha^n}}$ corresponding to a subset $\Gamma\subset [m]^n$ of leaves of the tree in Figure~\ref{fig:splitstates}.

In Section~\ref{sec:partiallyrecombinedstatesproperties}, we discuss how such a recombined state can be defined recursively, starting from the bottom of the tree. We then use the corresponding intermediate states in Section~\ref{sec:recombininghighentropycomponents} to analyse how a judicious choice of $\Gamma$ yields a recombined state $\ket{\bPsi}$ with a large min-entropy $\hmin(X_{\cS}|E)_{\rhob}$.

\subsubsection{Partially recombined states and properties\label{sec:partiallyrecombinedstatesproperties}}

We are interested in properties of the state
\begin{align}
\ket{\bPsi}=\sum_{\alpha^n\in \Gamma}\ket{\Psi^{\alpha^n}}\label{eq:partiallyrecombinedstateorig}
\end{align}
obtained by summing over a certain subset $\Gamma\subset [m]^n$ of paths. To analyse such a ``partially recombined'' state, we will consider intermediate states attached to a tree. The state $\ket{\bPsi}$ will sit at the root of the tree. We will refer to it as $\ket{\bPsi}=\ket{\bPsi^{\alpha^0}}$ in the following definition, which we  illustrate in Figure~\ref{fig:recombinedstates}.

\begin{definition}[``Recombined states'']\label{def:recombinedstates}
Let $\Gamma\subseteq [m]^n$ be arbitrary, and let $\ket{\Psi^{\alpha^n}}$
for $\alpha^n\in [m]^n$ be the split states introduced in
Definition~\ref{def:splitstates}. We define the recombined states 
\begin{align*}
\ket{\bPsi^{\alpha^j}}=\sum_{\substack{\gamma^n\in\Gamma\\ \gamma^j=\alpha^j} } \ket{\Psi^{\gamma^n}}\ 
\end{align*}
and let $\rhob^{\alpha^j}_{X^nER}=\proj{\bPsi^{\alpha^j}}$ for all
$\alpha^j\in [m]^j$. For simplicity, we omit $\Gamma$ in the notation.
\end{definition}

\begin{figure}
\begin{center}
\centerline{\includegraphics[scale=1.0]{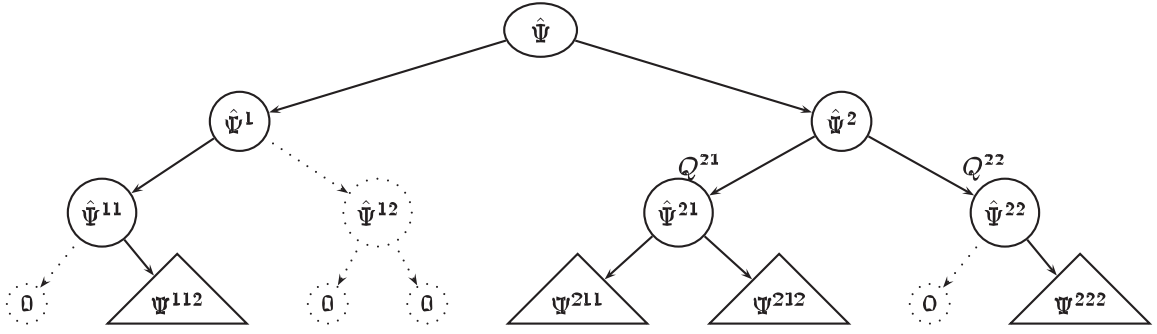}}
\end{center}
\caption{Here we illustrate the partially recombined states of
  Definition~\ref{def:recombinedstates}, 
for $n=3$, $m=2$ and $\Gamma=\{112,211,212,222\}$. We again
 associate every state $\ket{\widehat{\Psi}^{\alpha^j}}$ with the node carrying the label $\alpha^j\in [m]^j$. We start by defining the leaves, i.e., the
states $\ket{\widehat{\Psi}^{\alpha^n}}$ for $\alpha^n\in [m]^n$: For
$\alpha^n\in \Gamma$ (illustrated by triangles), we use the same
leaves as in Figure~\ref{fig:splitstates}, i.e., we set
$\ket{\widehat{\Psi}^{\alpha^n}}=\ket{\Psi^{\alpha^n}}$. On the other
hand, we set $\ket{\widehat{\Psi}^{\alpha^n}}=0$ for
$\alpha^n\not\in\Gamma$. We then work our way up the tree, defining
the state in each node as the sum of its immediate
descendants. The elements at the dotted nodes are equal to zero,
whereas for example $\ket{\widehat{\Psi}^2}=\ket{\widehat{\Psi}^{21}}+\ket{\widehat{\Psi}^{22}}=\ket{\Psi^{211}}+\ket{\Psi^{212}}+\ket{\Psi^{222}}$.
Clearly, the state at the root is equal to the sum of the leaves in
$\Gamma$, i.e.,
$\ket{\widehat{\Psi}}=\sum_{\alpha^n\in\Gamma}\ket{\Psi^{\alpha^n}}$.
We will show in Lemma~\ref{lem:subsetgammastates} that the states at
any given level are orthogonal, and that movement in this diagram is
achieved by the same projection operators as in the tree of
Figure~\ref{fig:splitstates}. Moreover, the entropies of interest
corresponding to this modified tree are at least as large as those
corresponding to Figure~\ref{fig:splitstates}. \label{fig:recombinedstates}
}
\end{figure}
Not surprisingly, the recombined states inherit many properties of the split states. The following lemma summarises these, and is the analog of Lemma~\ref{lem:rhoalphajexpressions}. 

\begin{lemma}\label{lem:subsetgammastates}
 The recombined states
 have the following properties.
\begin{enumerate}[(i)]
\item\label{it:lemrecombinedorthogonality}
The states $\{\ket{\bPsi^{\alpha^j}}\}_{\alpha^j\in [m]^j}$ are
orthogonal for a fixed $j\in [n]$.
\item\label{eq:psibsumid} 
The states $\{\ket{\bPsi^{\alpha^j}}\}_{\alpha^j\in [m]^j}$ form
  a resolution of $\ket{\bPsi^{\alpha^{j-1}}}$, i.e., 
$\sum_{\alpha_j\in [m]}\ket{\bPsi^{\alpha^j}}=\ket{\bPsi^{\alpha^{j-1}}}$.
\item\label{eq:rhobrecursion}
The states satisfy the recursion relation $\ket{\bPsi^{\alpha^j}}=Q^{\alpha^j}_{\dec{X}{j}R}\ket{\bPsi^{\alpha^{j-1}}}$  for all $j\in [n]$ and $\alpha^j\in [m]^j$.
\item\label{it:directbPsiprojections}
For every $j=0,\ldots,n$, there is a projector  $T^{\alpha^j}_{\dec{X}{n}R}$ such that
$\ket{\bPsi^{\alpha^j}}=T^{\alpha^j}_{\dec{X}{n}R}\ket{\Psi}$. In particular, for $\sigma=\sigma_E$ arbitrary, we have $\hcond{E}{\rhob^{\alpha^j}}{\sigma}\geq \hcond{E}{\rho}{\sigma}$.
\item We have \label{eq:rhobentropy}
$H(X_j|\inc{X}{j}E)_{\rhob^{\alpha^j}}\geq
H(X_j|\inc{X}{j}E)_{\rho^{\alpha^j}}$ for all $j\in [n]$ and $\alpha^j\in [m]^j$.

\item For all $\alpha^n\in [m]^n$, we have $\hcond{E}{\rhob^{\alpha^n}}{\sigma}\geq \hcond{E}{\rho^{\alpha^n}}{\sigma}$.\label{it:equalityalphan}
\end{enumerate}
\end{lemma}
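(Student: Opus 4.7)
The plan is to establish items \eqref{it:lemrecombinedorthogonality}--\eqref{it:equalityalphan} in turn, relying on three pillars: the orthogonality relations of the $Q^{\alpha^j}$'s (Lemma~\ref{lem:qoperatorsorthogonality}), the properties of the split states (Lemma~\ref{lem:rhoalphajexpressions}), and the monotony of min-entropy under projections (Lemma~\ref{lem:projectionmonotony}). The central object is the operator
\[
  T^{\alpha^j}_{\dec{X}{n}R}:=\sum_{\substack{\gamma^n\in\Gamma\\ \gamma^j=\alpha^j}} Q^{\gamma^n}_{\dec{X}{n}R}\,,
\]
which, by pairwise orthogonality of the $Q^{\gamma^n}$'s at level $n$ (Lemma~\ref{lem:qoperatorsorthogonality}), is a sum of mutually orthogonal projectors, hence itself a projector on $\dec{X}{n}R$.

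Items \eqref{it:lemrecombinedorthogonality}--\eqref{it:directbPsiprojections} are essentially combinatorial. For \eqref{it:lemrecombinedorthogonality}, the vector $\ket{\bPsi^{\alpha^j}}$ is a sum of $\ket{\Psi^{\gamma^n}}$'s over the index set $I(\alpha^j):=\{\gamma^n\in\Gamma:\gamma^j=\alpha^j\}$; these sets are disjoint for different $\alpha^j$, and pairwise orthogonality of the $\ket{\Psi^{\gamma^n}}$'s (Lemma~\ref{lem:rhoalphajexpressions}\eqref{it:lempropfirst}) finishes the argument. Item \eqref{eq:psibsumid} is a direct partition of the defining sum by the value of the $j$-th coordinate. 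For \eqref{eq:rhobrecursion}, applying $Q^{\alpha^j}_{\dec{X}{j}R}$ to the sum defining $\ket{\bPsi^{\alpha^{j-1}}}$ acts as the identity on summands with $\gamma^j=\alpha^j$ and annihilates the rest, by Lemma~\ref{lem:qoperatorsorthogonality}. For \eqref{it:directbPsiprojections}, substituting $\ket{\Psi^{\gamma^n}}=Q^{\gamma^n}\ket{\Psi}$ (Lemma~\ref{lem:rhoalphajexpressions}\eqref{it:singleprojectorpsidef}) into the definition of $\ket{\bPsi^{\alpha^j}}$ yields $\ket{\bPsi^{\alpha^j}}=T^{\alpha^j}\ket{\Psi}$ immediately; the entropy inequality $\hcond{E}{\rhob^{\alpha^j}}{\sigma}\geq\hcond{E}{\rho}{\sigma}$ then follows from Lemma~\ref{lem:projectionmonotony} applied with $B=E$ and $C=\dec{X}{n}R$, since $T^{\alpha^j}$ is a projector on $C$ and $C$ is disjoint from $E$.

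Item \eqref{it:equalityalphan} is immediate: at the bottom level, $I(\alpha^n)$ is either $\{\alpha^n\}$ or empty, so $\ket{\bPsi^{\alpha^n}}$ is either $\ket{\Psi^{\alpha^n}}$ or $0$; in both cases the inequality holds under the convention $\hcond{E}{0}{\sigma}=\infty$.

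The main obstacle will be item \eqref{eq:rhobentropy}. The difficulty is that $T^{\alpha^j}$ acts on $\dec{X}{n}R$, which contains $X_j$ and $\inc{X}{j}$---precisely the systems appearing in $H(X_j|\inc{X}{j}E)$---so Lemma~\ref{lem:projectionmonotony} cannot be applied directly with $T^{\alpha^j}$ as the projection on a system complementary to the entropy's argument. The plan to overcome this is to exploit the classicality of $\rho^{\alpha^j}_{X^nE}$ on $X^n$ (Lemma~\ref{lem:rhoalphajexpressions}\eqref{it:classicalitysplit}), which allows the operator inequality underlying $H(X_j|\inc{X}{j}E)_{\rho^{\alpha^j}}$ to be read fibrewise over classical values of $\inc{X}{j}$. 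Using the identity $\ket{\bPsi^{\alpha^j}}=T^{\alpha^j}\ket{\Psi^{\alpha^j}}$ (combining \eqref{it:directbPsiprojections} with Lemma~\ref{lem:rhoalphajexpressions}\eqref{it:singleprojectorpsidef}), the effective action of $T^{\alpha^j}$ on each fibre reduces to a projection on the purifying register $\dec{X}{j-1}R$, to which Lemma~\ref{lem:projectionmonotony} applies to yield a fibrewise version of the required inequality; reassembling these bounds gives $\rhob^{\alpha^j}_{X_j\inc{X}{j}E}\leq 2^{-H}\rhob^{\alpha^j}_{\inc{X}{j}E}$ with $H=H(X_j|\inc{X}{j}E)_{\rho^{\alpha^j}}$. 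Should the fibrewise reformulation prove too delicate, the fallback is an inductive argument which peels off one layer of the recombination at a time using the recursion \eqref{eq:psibsumid} together with the recombination-chain-rule (Lemma~\ref{lem:hminprop}\eqref{it:chainrule}).
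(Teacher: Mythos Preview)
Your treatment of items \eqref{it:lemrecombinedorthogonality}--\eqref{it:directbPsiprojections} and \eqref{it:equalityalphan} is correct and coincides with the paper's proof. The divergence is in item~\eqref{eq:rhobentropy}, where you correctly identify the obstacle but then propose a workaround that is both unnecessarily complicated and not clearly sound.

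Your fibrewise plan requires that $T^{\alpha^j}_{\dec{X}{n}R}$, restricted to each classical value of $X_j\inc{X}{j}$, act as a projection on $\dec{X}{j-1}R$ alone. But $T^{\alpha^j}$ is a sum of the $Q^{\gamma^n}_{\dec{X}{n}R}$, which are eigenprojectors of conditional operators on the full system $\dec{X}{n}R$; classicality of $\rho^{\alpha^j}_{X^nE}$ on $X^n$ (Lemma~\ref{lem:rhoalphajexpressions}\eqref{it:classicalitysplit}) does not by itself force the block-diagonal structure you need on the $Q$-side. Your fallback (``inductive argument \ldots\ using the recombination-chain-rule'') is too vague to count as a proof.

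The paper avoids the obstacle entirely by switching to the dual representation~\eqref{eq:precursion} of the split states: for every $\gamma^n$ with $\gamma^j=\alpha^j$ one has $\ket{\Psi^{\gamma^n}}=\Pt^{\gamma^n}_{\inc{X}{n}E}\cdots\Pt^{\gamma^{j+1}}_{\inc{X}{j+1}E}\ket{\Psi^{\alpha^j}}$. Summing over $\gamma^n\in\Gamma$ with $\gamma^j=\alpha^j$ yields
\[
\ket{\bPsi^{\alpha^j}}=R_{\inc{X}{j+1}E}\,\ket{\Psi^{\alpha^j}}\,,\qquad R_{\inc{X}{j+1}E}:=\sum_{\substack{\gamma^n\in\Gamma\\ \gamma^j=\alpha^j}}\Pt^{\gamma^n}_{\inc{X}{n}E}\cdots\Pt^{\gamma^{j+1}}_{\inc{X}{j+1}E}\,.
\]
The operator $R$ acts on $\inc{X}{j+1}E\subseteq\inc{X}{j}E$, i.e., entirely inside the conditioning system of $H(X_j|\inc{X}{j}E)$, and the first inequality of Lemma~\ref{lem:projectionmonotony} (which needs neither that the operator be a projector nor that it act on the whole conditioning system) gives $H(X_j|\inc{X}{j}E)_{\rhob^{\alpha^j}}\geq H(X_j|\inc{X}{j}E)_{\rho^{\alpha^j}}$ in one line. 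This is the idea you are missing: do not fight the $Q$-side representation with classicality; switch to the $\Pt$-side, where the relevant operator already lives on the correct subsystem.
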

The recursion relation~\eqref{eq:rhobrecursion} will be most important in our analysis. It provides a means of studying properties of the corresponding states in a recursive manner, moving up the tree in Figure~\ref{fig:recombinedstates} to the root.

\begin{proof}
The orthogonality~\eqref{it:lemrecombinedorthogonality} of the states
$\{\ket{\bPsi^{\alpha^j}}\}_{\alpha^j}$ is a direct consequence of the
orthogonality of the states $\{\ket{\Psi^{\gamma^n}}\}_{\gamma^n}$
  (cf.\ Lemma~\ref{lem:rhoalphajexpressions}~\eqref{it:lempropfirst}). Identity~\eqref{eq:psibsumid} also follows from the definition
of $\ket{\bPsi^{\alpha^j}}$. 

For the proof of~\eqref{eq:rhobrecursion}, observe that
\begin{align*}
Q^{\alpha^j}_{\dec{X}{j}R}\ket{\Psi^{\gamma^n}}=Q^{\alpha^j}_{\dec{X}{j}R}
Q^{\gamma^n}_{\dec{X}{n}R}\ket{\Psi}=Q^{\alpha^j}_{\dec{X}{j}R}Q^{\gamma^j}_{\dec{X}{n}R}Q^{\gamma^n}_{\dec{X}{n}R}\ket{\Psi}=\begin{cases}
\ket{\Psi^{\gamma^n}}\qquad&\textrm{if } \gamma^j=\alpha^j\\
0 & \textrm{otherwise}\ ,
\end{cases}
\end{align*}
by Lemma~\ref{lem:qoperatorsorthogonality}. Applying this to compute
$Q^{\alpha^j}_{\dec{X}{j}R}\ket{\bPsi^{\alpha^{j-1}}}$ immediately
gives the claim~\eqref{eq:rhobrecursion}.

Defining $T^{\alpha^j}_{\dec{X}{n}R}=\sum_{\substack{\gamma^n\in\Gamma\\
 \gamma^j=\alpha^j}}Q^{\gamma^n}_{\dec{X}{n}R}$ and using the fact
 that  $\ket{\bPsi^{\alpha^j}}=\sum_{\substack{\gamma^n\in\Gamma\\ \gamma^j=\alpha^j}}\ket{\Psi^{\gamma^n}}$
  and 
$\ket{\Psi}=\sum_{\alpha^n} \ket{\Psi^{\alpha^n}}$ proves
 the first part of~\eqref{it:directbPsiprojections} because of
 Lemma~\ref{lem:qoperatorsorthogonality}
and Lemma~\ref{lem:rhoalphajexpressions}. The second part
 of~\eqref{it:directbPsiprojections} follows from Lemma~\ref{lem:projectionmonotony}.

Next we prove~\eqref{eq:rhobentropy}. Note that the statement holds
trivially for $j=n$ and $\alpha^n\in\Gamma$, since in this case
$\ket{\bPsi^{\alpha^n}}=\ket{\Psi^{\alpha^n}}$ by Definition~\ref{def:recombinedstates}. If $j=n$ and $\alpha^n\not\in\Gamma$, then
$\ket{\bPsi^{\alpha^n}}=0$ and $H(X_j|\inc{X}{j}E)_{\rhob^{\alpha^j}}=\infty$ by definition, hence~\eqref{eq:rhobentropy} also holds in this case. Assume  now that $j<n$. We have
$\ket{\bPsi^{\alpha^j}}=R_{\inc{X}{j+1}E}\ket{\Psi^{\alpha^j}}$ for
the operator $R_{\inc{X}{j+1}E}=\sum_{\substack{\gamma^n\in\Gamma\\ \gamma^j=\alpha^j} }
\Pt^{\gamma^n}_{\inc{X}{n}E}\cdots
\Pt^{\gamma^{j+1}}_{\inc{X}{j+1}E}$, by Definition~\ref{def:recombinedstates}
and~\eqref{eq:precursion}. The claim~\eqref{eq:rhobentropy} therefore
follows from Lemma~\ref{lem:projectionmonotony}.

For the proof of statement~\eqref{it:equalityalphan}, we again use the fact that $\ket{\bPsi^{\alpha^n}}=\ket{\Psi^{\alpha^n}}$ if $\alpha^n\in\Gamma$ and $\ket{\bPsi^{\alpha^n}}=0$ otherwise. In particular, we have $\hcond{E}{\rhob^{\alpha^n}}{\sigma}=\hcond{E}{\rho^{\alpha^n}}{\sigma}$ in the former and $\hcond{E}{\rhob^{\alpha^n}}{\sigma}=\infty$ in the latter case. Hence the claim~\eqref{it:equalityalphan} follows.
\end{proof}

We next prove an analog of the basic recombination lemma (Lemma~\ref{lem:recombinationbasic}) for the partially recombined states $\ket{\bPsi^{\alpha^i}}$. In terms of the position of the corresponding states in the described tree, it expresses the fact that the entropies of interest do not decrease significantly when we move from one level up to another level closer to the root.
\begin{lemma} \label{lem:alphamerge}
  For all $\cA \subseteq [n]$ (possibly empty), $i\in [n]$ and $\alpha^{i-1}\in [m]^{i-1}$
\[
\min_{\alpha_i\in [m]}    H(X_\cA|E)_{\fr{\rhob^{\alpha^{i}}}{\sigma}}-2\log m\leq H(X_{\cA}|E)_{\fr{\rhob^{\alpha^{i-1}}}{\sigma}}\ .
\]
\end{lemma}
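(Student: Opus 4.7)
The plan is to mimic the argument of Lemma~\ref{lem:recombinationbasic}, exploiting that the partially recombined states $\ket{\bPsi^{\alpha^i}}$ are obtained from $\ket{\bPsi^{\alpha^{i-1}}}$ by applying the same projection operators $Q^{\alpha^i}_{\dec{X}{i}R}$ used for the split states. By Lemma~\ref{lem:subsetgammastates}~\eqref{eq:rhobrecursion}, we have $\ket{\bPsi^{\alpha^i}}=Q^{\alpha^i}_{\dec{X}{i}R}\ket{\bPsi^{\alpha^{i-1}}}$, and by Lemma~\ref{lem:qoperatorsorthogonality} (with $\alpha^{i-1}$ fixed) the family $\{Q^{\alpha^i}_{\dec{X}{i}R}\}_{\alpha_i\in[m]}$ is pairwise orthogonal. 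Moreover, combining Lemma~\ref{lem:subsetgammastates}~\eqref{eq:psibsumid} and~\eqref{eq:rhobrecursion} yields $Q_{\dec{X}{i}R}\ket{\bPsi^{\alpha^{i-1}}}=\ket{\bPsi^{\alpha^{i-1}}}$ for the sum $Q_{\dec{X}{i}R}:=\sum_{\alpha_i}Q^{\alpha^i}_{\dec{X}{i}R}$.

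The first step is to convert the entropic assumption into operator inequalities. Set $\lambda:=2^{-\min_{\alpha_i}H(X_\cA|E)_{\fr{\rhob^{\alpha^i}}{\sigma}}}$, so that for every $\alpha_i\in[m]$
\[
\fr{\rhob^{\alpha^i}_{X_\cA E}}{\sigma_E}\leq\lambda\,\id_{X_\cA E}\ .
\]
Because $Q^{\alpha^i}_{\dec{X}{i}R}$ acts only on $\dec{X}{i}R$ and $\sigma_E^{\mhalf}$ acts only on $E$, they commute, and $\fr{\rhob^{\alpha^i}_{X_\cA E}}{\sigma_E}=\tr_{\overline{X_\cA E}}\bigl(Q^{\alpha^i}_{\dec{X}{i}R}\fr{\rhob^{\alpha^{i-1}}}{\sigma_E}Q^{\alpha^i}_{\dec{X}{i}R}\bigr)$. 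Thus the above is equivalent to
\[
\tr_{\overline{X_\cA E}}\bigl(Q^{\alpha^i}_{\dec{X}{i}R}\fr{\rhob^{\alpha^{i-1}}}{\sigma_E}Q^{\alpha^i}_{\dec{X}{i}R}\bigr)\leq\lambda\,\id_{X_\cA E}\qquad\text{for every }\alpha_i\in[m]\ .
\]

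The second step is to combine these $m$ inequalities into one. Exactly as in the proof of Lemma~\ref{lem:recombinationbasic}, the orthogonality of the projectors $\{Q^{\alpha^i}_{\dec{X}{i}R}\}_{\alpha_i}$ allows us to apply Lemma~\ref{lem:alphabetmodifiedlem} to obtain
\[
\tr_{\overline{X_\cA E}}\bigl(Q_{\dec{X}{i}R}\fr{\rhob^{\alpha^{i-1}}}{\sigma_E}Q_{\dec{X}{i}R}\bigr)\leq\lambda\, m^2\,\id_{X_\cA E}\ .
\]

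The final step is to observe that $Q_{\dec{X}{i}R}$ acts as the identity on the support of $\rhob^{\alpha^{i-1}}$. Indeed, $Q_{\dec{X}{i}R}\ket{\bPsi^{\alpha^{i-1}}}=\ket{\bPsi^{\alpha^{i-1}}}$, so $Q_{\dec{X}{i}R}\rhob^{\alpha^{i-1}}Q_{\dec{X}{i}R}=\rhob^{\alpha^{i-1}}$, and using commutativity with $\sigma_E$ also $Q_{\dec{X}{i}R}\fr{\rhob^{\alpha^{i-1}}}{\sigma_E}Q_{\dec{X}{i}R}=\fr{\rhob^{\alpha^{i-1}}}{\sigma_E}$. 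Taking the partial trace therefore gives $\fr{\rhob^{\alpha^{i-1}}_{X_\cA E}}{\sigma_E}\leq\lambda m^2\id_{X_\cA E}$, and taking $-\log$ yields the claimed bound. The only potential pitfall is the commutativity bookkeeping and the verification that $\sum_{\alpha_i}Q^{\alpha^i}_{\dec{X}{i}R}$ really behaves as the identity on the relevant subspace, but both follow directly from the facts already collected in Lemma~\ref{lem:subsetgammastates} and Lemma~\ref{lem:qoperatorsorthogonality}.
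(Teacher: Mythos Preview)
Your proof is correct and follows essentially the same route as the paper's: set $\lambda=2^{-\min_{\alpha_i}H(X_\cA|E)_{\fr{\rhob^{\alpha^i}}{\sigma}}}$, rewrite each inequality using the recursion $\ket{\bPsi^{\alpha^i}}=Q^{\alpha^i}_{\dec{X}{i}R}\ket{\bPsi^{\alpha^{i-1}}}$, apply Lemma~\ref{lem:alphabetmodifiedlem} to pass to $Q=\sum_{\alpha_i}Q^{\alpha^i}_{\dec{X}{i}R}$, and then use $Q\ket{\bPsi^{\alpha^{i-1}}}=\ket{\bPsi^{\alpha^{i-1}}}$ (from Lemma~\ref{lem:subsetgammastates}\eqref{eq:psibsumid}--\eqref{eq:rhobrecursion}) to conclude. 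One small remark: the orthogonality of the $Q^{\alpha^i}$ is not actually needed to invoke Lemma~\ref{lem:alphabetmodifiedlem}, which only requires Hermiticity; orthogonality is implicitly used only to ensure that $Q$ is a projection, but even that is unnecessary here since $\rhob^{\alpha^{i-1}}$ is pure and $Q\ket{\bPsi^{\alpha^{i-1}}}=\ket{\bPsi^{\alpha^{i-1}}}$ already gives $Q\rhob^{\alpha^{i-1}}Q=\rhob^{\alpha^{i-1}}$.
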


\begin{proof}
  Let $\alpha^{i-1}\in [m]^{i-1}$ be fixed and let $\lambda:=2^{-\min_{\alpha_{i}\in [m]}
    H(X_\cA|E)_{\frac{\rhob^{\alpha^{i}}}{\sigma}}}$, where $\alpha^i=(\alpha_i,\alpha^{i-1})$.  By definition
  \begin{align}\label{eq:rhobineqsi}
    \frac{\rhob_{X_{\cA}E}^{\alpha^{i}}}{\sigma_E} \leq \lambda
    \id_{X_\cA E} \qquad\textrm{ for all }\alpha_i\in [m]\ .
  \end{align}

To relate this to $\rhob_{X_{\cA}E}^{\alpha^{i-1}}$, we use the
recursion relation~\eqref{eq:rhobrecursion} of Lemma~\ref{lem:subsetgammastates} to rewrite~\eqref{eq:rhobineqsi} as 
  \begin{align*}
    \tr_{\overline{X_{\cA}E}}\bigl(Q^{\alpha^i}_{\dec{X}{i}R}\fr{\rhob^{\alpha^{i-1}}_{X^nER}}{\sigma_E}
Q^{\alpha^i}_{\dec{X}{i}R}\bigr)\leq \lambda \id_{X_{\cA} E} \qquad\textrm{for all }\alpha_i\in [m]\  .
  \end{align*}
  Lemma~\ref{lem:alphabetmodifiedlem} thus implies
  \begin{align}\label{eq:bqineqtr}
    \tr_{\overline{X_{\cA}E}}\bigl(
    Q  \fr{\rhob^{\alpha^{i-1}}_{X^nER}}{\sigma_E}Q\bigr)\leq \lambda m^2\id_{X_{\cA} E} \ ,
  \end{align}
where $Q=\sum_{\alpha_i\in [m]}Q^{\alpha^i}_{\dec{X}{i}R}$.
But
\begin{align*}
    Q  \fr{\rhob^{\alpha^{i-1}}_{X^nER}}{\sigma_E}Q&=
\sigma_E^\mhalf\left(\sum_{\alpha_i}Q^{\alpha^i}_{\dec{X}{i}R}\ket{\bPsi^{\alpha^{i-1}}}\right)\left(\bra{\bPsi^{\alpha^{i-1}}}\sum_{\alpha_i}Q^{\alpha^i}_{\dec{X}{i}R}\right)\sigma_E^\mhalf\\
&=\sigma_E^\mhalf\left(\sum_{\alpha_i}\ket{\bPsi^{\alpha^i}}\right)\left(\sum_{\alpha_i}\bra{\bPsi^{\alpha^i}}\right)\sigma_E^\mhalf\\
&=\sigma_E^\mhalf\ket{\bPsi^{\alpha^{i-1}}}\bra{\bPsi^{\alpha^{i-1}}}\sigma_E^\mhalf\ ,
\end{align*} 
where we used~\eqref{eq:rhobrecursion}
and~\eqref{eq:psibsumid} of Lemma~\ref{lem:subsetgammastates}. Inserting this into~\eqref{eq:bqineqtr} gives
  \begin{align*}
  \fr{\rhob^{\alpha^{i-1}}_{X_\cA E}}{\sigma_E}\leq \lambda m^2\id_{X_{\cA} E} \ ,
  \end{align*}
which concludes the proof.
\end{proof}

\subsubsection{Recombining high-entropy components\label{sec:recombininghighentropycomponents}}

We now study the entropies associated with recombined states, in the
special case where $\Gamma\subset [m]^n$ is chosen as the set of
``high-entropy paths'' for a subset $\cS$. Our main result of this
section is Theorem~\ref{thm:recombining}, which expresses the fact
that the corresponding entropy $\hmin(X_\cS|E)_{\widehat{\rho}}$ is
large.

\begin{figure}
\begin{center}
\centerline{\includegraphics[scale=1.0]{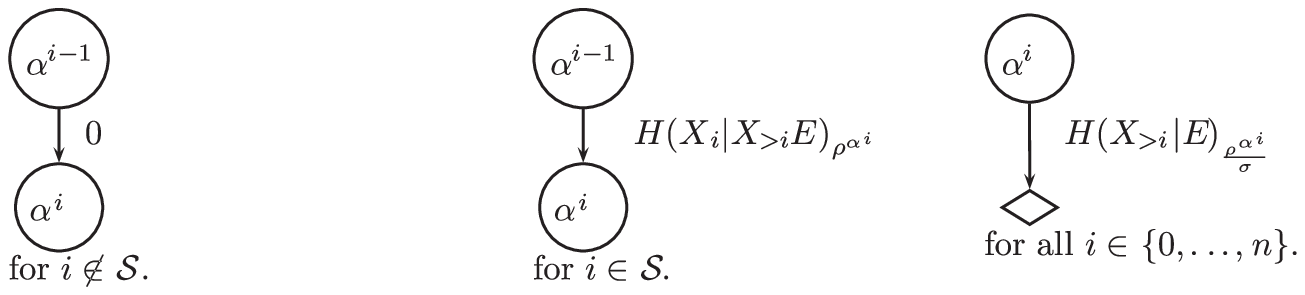}}
\end{center}
\caption{The weighting $\treev{\cS}{\rho}$ of the edges of $\mytree{n}$. The weighting $\treev{\cS}{\hat{\rho}}$ is defined analogously, with $\rho^{\alpha^i}$ replaced by $\hat{\rho}^{\alpha^i}$.\label{fig:simplevsweighting}}
\end{figure}

Let us fix a subset $\cS\subset [n]$. We will be interested in the entropies of variables $X_i$ with $i\in\cS$. That is, we consider the weighting $\treev{\cS}{\rho}$ defined by Figure~\ref{fig:simplevsweighting} of the tree $\mytree{n}$ introduced after Theorem~\ref{thm:splittoffmain}. A given path $\alpha^n\in [m]^n$ in $\mytree{n}$ then has weight
\begin{align*}
\treev{\cS}{\rho}(\mytree{n},\alpha^n)=
\hcond{E}{\rho^{\alpha^n}}{\sigma}  + \sum_{j \in \cS} H(X_j|\inc{X}{j}E)_{\rho^{\alpha^{j}}}
\end{align*}
by definition\footnote{Observe that we now explicitly mention the dependence on the tree $\mathbb{T}_n$ in $\treev{\cS}{\rho}(\mytree{n},\alpha^n)$, as we will be dealing with several different (sub)trees.}. We cannot expect this to be large for all $\alpha^n\in [m]^n$; in particular, the value $\treev{\cS}{\rho}(\mytree{n})$ will in general be small. We therefore introduce the following sets.

\begin{definition}[``$\lambda$-good paths'']\label{def:lambdagoodset}
For $\lambda>0$ and $\cS\subset [n]$, let $\Gamma(\lambda,\cS)\subset [m]^n$ be the set of
$n$-tuples $\alpha^n\in [m]^n$ with
\begin{align}\label{eq:sampleineqv}
\frac{\treev{\cS}{\rho}(\mytree{n},\alpha^n)}{|\cS|\log|\cX|}\geq\lambda\ .
\end{align}
We call $\Gamma(\lambda,\cS)\subset [m]^n$ the set of {\em $\lambda$-good paths for $\cS$}.
\end{definition}
The choice of the normalisation factor $|\cS|\log|\cX|$ will become clearer in the sequel when we relate  the quantity $\frac{\treev{\cS}{\rho}(\mytree{n},\alpha^n)}{|\cS|\log|\cX|}$ to the {\em entropy-rate } $\fr{H(X_{\cS}|E)}{H_0(X_\cS)}=\fr{H(X_{\cS}|E)}{|\cS|\log|\cX|}$.

Let us consider states that arise when recombining only $\lambda$-good
paths. That is, we fix $\lambda>0$, a subset $\cS\subset [n]$ of size $|\cS|=r$, and
let $\Gamma=\Gamma(\lambda,\cS)$ be the set of $n$-tuples specified by Definition~\ref{def:lambdagoodset}. We then define the partially recombined states $\{\rhob^{\alpha^j}\}$ as in Definition~\ref{def:recombinedstates}. 

Note that the recombined states give rise to a weighting $\treev{\cS}{\hat{\rho}}$ of the tree $\mytree{n}$ as in  Figure~\ref{fig:simplevsweighting}. Contrary to the original weighting $\treev{\cS}{\rho}$, this weighting assigns a large weight to every path. That is, we have the statement
\begin{lemma}\label{lem:treevlabelinglowerbound}
$\treev{\cS}{\rhob}(\mytree{n})\geq \lambda|\cS|\log|\cX|$ .
\end{lemma}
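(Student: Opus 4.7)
The plan is to verify the inequality path-by-path and then take the minimum. Fix an arbitrary $\alpha^n \in [m]^n$; the weight of this path under $\treev{\cS}{\rhob}$ is, by the definition of the weighting in Figure~\ref{fig:simplevsweighting},
\[
\treev{\cS}{\rhob}(\mytree{n},\alpha^n) = \hcond{E}{\rhob^{\alpha^n}}{\sigma} + \sum_{j\in\cS} H(X_j|\inc{X}{j}E)_{\rhob^{\alpha^j}}\ .
\]
I will split into two cases according to whether $\alpha^n$ lies in $\Gamma(\lambda,\cS)$.

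If $\alpha^n \in \Gamma(\lambda,\cS)$, I apply the entropy-monotonicity properties of the partially recombined states established in Lemma~\ref{lem:subsetgammastates}. Specifically, part~\eqref{eq:rhobentropy} gives $H(X_j|\inc{X}{j}E)_{\rhob^{\alpha^j}} \geq H(X_j|\inc{X}{j}E)_{\rho^{\alpha^j}}$ for every $j \in [n]$, and part~\eqref{it:equalityalphan} gives $\hcond{E}{\rhob^{\alpha^n}}{\sigma} \geq \hcond{E}{\rho^{\alpha^n}}{\sigma}$. Summing the terms indexed by $j\in\cS$ together with the $\hcond{E}{\cdot}{\sigma}$ contribution, this yields $\treev{\cS}{\rhob}(\mytree{n},\alpha^n) \geq \treev{\cS}{\rho}(\mytree{n},\alpha^n)$. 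The assumption $\alpha^n \in \Gamma(\lambda,\cS)$ together with the defining inequality~\eqref{eq:sampleineqv} of $\lambda$-good paths then gives $\treev{\cS}{\rho}(\mytree{n},\alpha^n) \geq \lambda|\cS|\log|\cX|$, which is the desired bound.

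If $\alpha^n \notin \Gamma(\lambda,\cS)$, then by Definition~\ref{def:recombinedstates} the leaf state vanishes, $\ket{\bPsi^{\alpha^n}}=\sum_{\gamma^n\in\Gamma,\,\gamma^n=\alpha^n}\ket{\Psi^{\gamma^n}} = 0$, so $\rhob^{\alpha^n}=0$. By the convention adopted in Section~\ref{sec:basicdefinitionsh} for vanishing operators, every entropic quantity evaluated on $\rhob^{\alpha^n}$ is $+\infty$; in particular $\hcond{E}{\rhob^{\alpha^n}}{\sigma}=\infty$ and $\treev{\cS}{\rhob}(\mytree{n},\alpha^n)=\infty$, so the bound holds vacuously.

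Combining the two cases, $\treev{\cS}{\rhob}(\mytree{n},\alpha^n)\geq \lambda|\cS|\log|\cX|$ for every path $\alpha^n\in [m]^n$. Taking the minimum over all such paths -- which by the definition immediately after Theorem~\ref{thm:splittoffmain} is precisely the value $\treev{\cS}{\rhob}(\mytree{n})$ -- yields the claim. No new obstacle is encountered; the lemma is essentially a bookkeeping consequence of the entropy-monotonicity already proven in Lemma~\ref{lem:subsetgammastates} together with the very definition of $\Gamma(\lambda,\cS)$.
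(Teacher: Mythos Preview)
Your proof is correct and follows essentially the same approach as the paper: a path-by-path case split according to whether $\alpha^n\in\Gamma(\lambda,\cS)$, invoking Lemma~\ref{lem:subsetgammastates}\eqref{eq:rhobentropy} and~\eqref{it:equalityalphan} in the first case and the $\rhob^{\alpha^n}=0$ convention in the second, then taking the minimum over paths.
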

In other words, when considering the recombined states, all paths are $\lambda$-good. This is not the case for the original split states.

\begin{proof}
Suppose first that $\alpha^n\in \Gamma(\lambda,\cS)\subset [m]^n$.
Then
\begin{align*}
H(X_j|\inc{X}{j}E)_{\rhob^{\alpha^{j}}} &\geq H(X_j|\inc{X}{j}E)_{\rho^{\alpha^{j}}}\qquad & \textrm{for all $j\in [n]$ by Lemma~\ref{lem:subsetgammastates}~\eqref{eq:rhobentropy} and }\\
 \hcond{E}{\rhob^{\alpha^n}}{\sigma}&\geq \hcond{E}{\rho^{\alpha^n}}{\sigma} \qquad & \textrm{by Lemma~\ref{lem:subsetgammastates}~\eqref{it:equalityalphan}}\ .
\end{align*}
This directly gives $\treev{\cS}{\rhob}(\mytree{n},\alpha^n)\geq \treev{\cS}{\rho}(\mytree{n},\alpha^n)\geq \lambda |\cS|\log |\cX|$ for $\alpha^n\in\Gamma(\lambda,\cS)$.
On the other hand, if $\alpha^n\not\in\Gamma(\lambda,\cS)$, then we have $\rhob^{\alpha^n}=0$ which implies that $\hcond{E}{\rhob^{\alpha^n}}{\sigma}=\infty$ and thus $\treev{\cS}{\rhob}(\mytree{n},\alpha^n)=\infty$.

The claim follows by taking the minimum over $\alpha^n\in [m]^n$.
\end{proof}

Next we apply subadditivity, to go from the weighting $\treev{\cS}{\rhob}$ defined by Figure~\ref{fig:simplevsweighting} to the weighting $\treew{\cS}{\rhob}$ introduced in Figure~\ref{fig:subadditivityweighting}. This weighting assigns the weight
\begin{align*}
\treew{\cS}{\rhob}(\mytree{n},\alpha^n)=  \hcond{E}{\rhob^{\alpha^n}}{\sigma}  + \sum_{j \in \cS} H(X_j|X_{>j \cap \cS} E)_{\rhob^{\alpha^{j}}} 
\end{align*}
to a path $\alpha^n$ in the tree $\mytree{n}$. We then have the inequality
\begin{lemma}\label{lem:subadditivitytrees}
$\treew{\cS}{\rhob}(\mytree{n})\geq \treev{\cS}{\rhob}(\mytree{n})$.
\end{lemma}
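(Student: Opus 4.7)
The plan is to compare the two weightings $\treev{\cS}{\rhob}$ and $\treew{\cS}{\rhob}$ edge by edge along every root-to-leaf path, and then minimize. Looking at Figure~\ref{fig:simplevsweighting} and Figure~\ref{fig:subadditivityweighting}, the two weightings agree on all edges with index $j \notin \cS$ (both contribute $0$) and on the spade-edges at the leaves (both contribute $\hcond{E}{\rhob^{\alpha^n}}{\sigma}$). The only edges on which they differ are the edges indexed by $j \in \cS$: there $\treev{\cS}{\rhob}$ carries the weight $H(X_j|\inc{X}{j}E)_{\rhob^{\alpha^j}}$ while $\treew{\cS}{\rhob}$ carries $H(X_j|X_{>j\cap \cS}E)_{\rhob^{\alpha^j}}$.

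The key observation is that, as a subsystem, $X_{>j\cap \cS}\subseteq \inc{X}{j}$, since $X_{>j\cap \cS}$ is the concatenation of only those $X_i$ with $i>j$ that also lie in $\cS$. Thus $\inc{X}{j}$ is obtained from $X_{>j\cap \cS}$ by adjoining the remaining variables $\{X_i : i>j,\ i\notin \cS\}$. Invoking subadditivity in the form of Lemma~\ref{lem:hminprop}~\eqref{it:simplesubadditivity} (with $A=X_j$, $B=X_{>j\cap \cS}E$, and $C$ the adjoined variables), we obtain
\[
H(X_j|X_{>j\cap \cS}E)_{\rhob^{\alpha^j}} \geq H(X_j|\inc{X}{j}E)_{\rhob^{\alpha^j}}
\]
for every $j\in \cS$ and every $\alpha^j\in [m]^j$.

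Summing along an arbitrary root-to-leaf path $\alpha^n\in [m]^n$ and adding the common leaf-spade contribution $\hcond{E}{\rhob^{\alpha^n}}{\sigma}$, this yields
\[
\treew{\cS}{\rhob}(\mytree{n},\alpha^n) \;\geq\; \treev{\cS}{\rhob}(\mytree{n},\alpha^n) \qquad \text{for all } \alpha^n\in [m]^n.
\]
Taking the minimum of each side over $\alpha^n\in [m]^n$ preserves the inequality and gives $\treew{\cS}{\rhob}(\mytree{n})\geq \treev{\cS}{\rhob}(\mytree{n})$, as claimed. I do not expect any technical obstacle here: the proof is essentially one application of the subadditivity rule (``further conditioning can only reduce entropy'') performed uniformly on all relevant edges, which is exactly why this intermediate step is labeled as the subadditivity step in the broader argument outlined in Section~\ref{sec:splittingsamplingrecombining}.
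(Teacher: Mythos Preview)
Your proposal is correct and is essentially the same as the paper's own proof: the paper simply invokes subadditivity (Lemma~\ref{lem:hminprop}\eqref{it:simplesubadditivity}) to obtain $H(X_j|X_{>j\cap\cS}E)_{\rhob^{\alpha^j}}\geq H(X_j|\inc{X}{j}E)_{\rhob^{\alpha^j}}$ for all $j\in\cS$ and says the statement follows immediately. Your write-up just spells out the edge-by-edge comparison and the final minimization a bit more explicitly.
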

\begin{proof}
With subadditivity (Lemma~\ref{lem:hminprop}~\eqref{it:simplesubadditivity}), it is straightforward to show that 
\begin{align*}
H(X_j|X_{>j \cap \cS} E)_{\rhob^{\alpha^{j}}}\geq H(X_j|\inc{X}{j}E)_{\hat{\rho}^{\alpha^j}}
\end{align*}
for all $j\in\cS$ and $\alpha^j\in [m]^j$. The statement follows immediately.
\end{proof}

\begin{figure}
\begin{center}
\centerline{\includegraphics[scale=1.0]{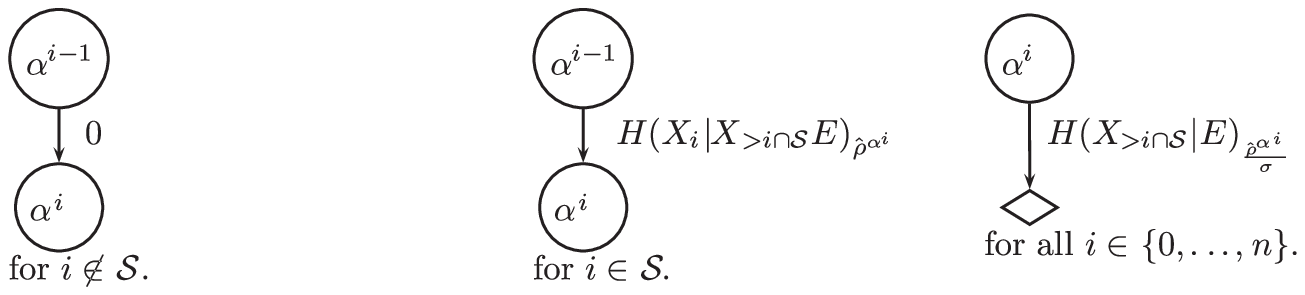}}
\end{center}
\caption{The weighting $\treew{\cS}{\hat{\rho}}$.\label{fig:subadditivityweighting}}
\end{figure}
Our aim is to show that if every path is $\lambda$-good for some $\lambda$, then the entropy $H(X_{\cS}|E)_{\fr{\rhob^{\alpha^0}}{\sigma}}$ is large for the recombined state $\rhob^{\alpha^0}$. This expression can be seen as the value  of the tree $\mytree{0}$ which is defined
in Figure~\ref{fig:zeroanktree}, i.e., we have
\begin{align}\label{eq:treezeroweight}
\treew{\cS}{\hat{\rho}}(\mytree{0})=H(X_{\cS}|E)_{\fr{\rhob^{\alpha^0}}{\sigma}}\ .
\end{align}
\begin{figure}
\begin{center}\centerline{\includegraphics[scale=1.0]{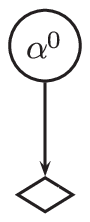}}
\end{center}
\caption{The tree~$\mytree{0}$\label{fig:zeroanktree}}
\end{figure}
To obtain an estimate on this quantity,  we use a sequence of intermediate trees and show the following:
\begin{lemma}
There is a sequence  $\mytree{n-1},\ldots,\mytree{1}$ of intermediate trees such that 
\begin{align}
\treew{\cS}{\hat{\rho}}(\mytree{j-1})\geq \treew{\cS}{\hat{\rho}}(\mytree{j})-2\log m\qquad\textrm{ for all }j\in [n]\ ,\label{eq:treerecursiontoprove}
\end{align}
where $\mytree{0}$ is the tree in Figure~\ref{fig:zeroanktree}, and $\mytree{n}$ is the original tree (see Figure~\ref{fig:originaltree}). In particular,
\begin{align}\label{eq:inductioncompleted}
\treew{\cS}{\hat{\rho}}(\mytree{0})\geq \treew{\cS}{\hat{\rho}}(\mytree{n})-2n\log m\ .
\end{align}
Here $X_{>j \cap \cS}$ denotes  $(X_i)_{i \in \cS, i>j}$ (this is equal to $\emptyset$ if $j>n$). In these expressions, the value of the tree $\mytree{j}$ is equal to 
\begin{align}
\treew{\cS}{\hat{\rho}}(\mytree{j})&=\min_{\alpha^j\in [m]^j}\treew{\cS}{\hat{\rho}}(\mytree{j},\alpha^j)\ \textrm{ where }
\nonumber\\
\treew{\cS}{\hat{\rho}}(\mytree{n},\alpha^j)&=
H(X_{>j\cap
  \cS}|E)_{\fr{\rhob^{\alpha^j}}{\sigma}}+\sum_{\substack{i\leq j\\
    i\in\cS}} H(X_i|X_{>i\cap \cS}E)_{\rhob^{\alpha^i}}\ . \label{eq:valuetreej}
\end{align}
\end{lemma}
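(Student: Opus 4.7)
The plan is to define each intermediate tree $\mytree{j}$ to have the topology of $\mytree{n}$ truncated at depth $j$: the internal edges at depths $i\leq j$ carry exactly the weights of Figure~\ref{fig:subadditivityweighting}, and the single terminal spade hanging off each depth-$j$ node $\alpha^j$ carries $H(X_{>j\cap\cS}|E)_{\fr{\rhob^{\alpha^j}}{\sigma}}$. This makes $\treew{\cS}{\hat{\rho}}(\mytree{j},\alpha^j)$ coincide with~\eqref{eq:valuetreej}, is consistent with Figure~\ref{fig:subadditivityweighting} at $j=n$ (since $X_{>n\cap\cS}=\emptyset$), and reproduces~\eqref{eq:treezeroweight} at $j=0$. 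I would deduce~\eqref{eq:treerecursiontoprove} from the pointwise inequality
\begin{align*}
\treew{\cS}{\hat{\rho}}(\mytree{j-1},\alpha^{j-1})\geq \min_{\alpha_j\in[m]}\treew{\cS}{\hat{\rho}}(\mytree{j},(\alpha_j,\alpha^{j-1}))-2\log m
\end{align*}
for every $\alpha^{j-1}\in[m]^{j-1}$, then iterate to obtain~\eqref{eq:inductioncompleted}. On both sides of this pointwise bound, the partial sum $\sum_{i\leq j-1,\,i\in\cS} H(X_i|X_{>i\cap\cS}E)_{\rhob^{\alpha^i}}$ depends only on $\alpha^{j-1}$ and cancels, so the proof reduces to comparing the terminal weight at level $j-1$ with the terminal plus depth-$j$ weights at level $j$.

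Case~1: $j\notin\cS$. The depth-$j$ edge carries weight zero and $X_{>(j-1)\cap\cS}=X_{>j\cap\cS}$, so the remaining inequality reads
\begin{align*}
H(X_{>j\cap\cS}|E)_{\fr{\rhob^{\alpha^{j-1}}}{\sigma}}\geq \min_{\alpha_j}H(X_{>j\cap\cS}|E)_{\fr{\rhob^{\alpha^j}}{\sigma}}-2\log m\ ,
\end{align*}
which is precisely Lemma~\ref{lem:alphamerge} with $\cA=\cS\cap\{j+1,\ldots,n\}$.

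Case~2: $j\in\cS$. Now $X_{>(j-1)\cap\cS}=X_j X_{>j\cap\cS}$ and the depth-$j$ edge carries $H(X_j|X_{>j\cap\cS}E)_{\rhob^{\alpha^j}}$, so the required bound is
\begin{align*}
H(X_j X_{>j\cap\cS}|E)_{\fr{\rhob^{\alpha^{j-1}}}{\sigma}}\geq \min_{\alpha_j}\Bigl[H(X_j|X_{>j\cap\cS}E)_{\rhob^{\alpha^j}}+H(X_{>j\cap\cS}|E)_{\fr{\rhob^{\alpha^j}}{\sigma}}\Bigr]-2\log m\ .
\end{align*}
I would first apply Lemma~\ref{lem:alphamerge} with $\cA=\{j\}\cup(\cS\cap\{j+1,\ldots,n\})$, which yields
\begin{align*}
H(X_j X_{>j\cap\cS}|E)_{\fr{\rhob^{\alpha^{j-1}}}{\sigma}}\geq \min_{\alpha_j} H(X_j X_{>j\cap\cS}|E)_{\fr{\rhob^{\alpha^j}}{\sigma}}-2\log m\ ,
\end{align*}
and then apply the recombination chain rule (Lemma~\ref{lem:hminprop}~\eqref{it:chainrule}) inside the minimum, with $A=X_j$, $B=X_{>j\cap\cS}$, $C=E$, to split the joint entropy into exactly the two summands we need.

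I do not expect serious obstacles. The one point worth watching is the alignment between the choice of $\cA$ fed into Lemma~\ref{lem:alphamerge} and the partition produced by the recombination chain rule, so that the edge weight $H(X_j|X_{>j\cap\cS}E)_{\rhob^{\alpha^j}}$ (with no $\sigma$ in the denominator, as in the chain-rule statement) emerges as the $H(A|BC)_\rho$ summand and the updated terminal weight $H(X_{>j\cap\cS}|E)_{\fr{\rhob^{\alpha^j}}{\sigma}}$ as the $H(B|C)_{\fr{\rho}{\sigma}}$ summand.
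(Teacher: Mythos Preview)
Your proposal is correct and follows essentially the same approach as the paper: the intermediate trees are defined by truncation with the terminal spade at depth $j$ carrying $H(X_{>j\cap\cS}|E)_{\fr{\rhob^{\alpha^j}}{\sigma}}$, the common partial sum is cancelled, and the two cases are handled via Lemma~\ref{lem:alphamerge} together with the recombination chain rule. The only cosmetic difference is that in Case~2 the paper applies the chain rule first (to merge the two summands into the joint entropy) and then Lemma~\ref{lem:alphamerge}, whereas you apply Lemma~\ref{lem:alphamerge} first and then split via the chain rule; both orderings are valid.
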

\begin{proof}
Note that~\eqref{eq:inductioncompleted} follows  immediately from~\eqref{eq:treerecursiontoprove}. 

We first define the sequence of trees $\mytree{n-1},\mytree{n-2},\ldots,\mytree{0}$. We do this inductively as shown in Figure~\ref{fig:substitutionrule}; that is, we obtain $\mytree{j-1}$ from $\mytree{j}$ by substituting  subtrees corresponding to vertices $\alpha^{j-1}\in [m]^{j-1}$. Clearly, $\mytree{j}$ is a tree characterised as follows: For every $0\leq k\leq j$, every vertex at level $k$ has $m$~immediate descendants, whereas each vertex at level $j$ has one descendant which is a spade. 

\begin{figure}
\begin{center}
\begin{minipage}[t]{0.4\linewidth}
\begin{center}
\centerline{\includegraphics[scale=1.0]{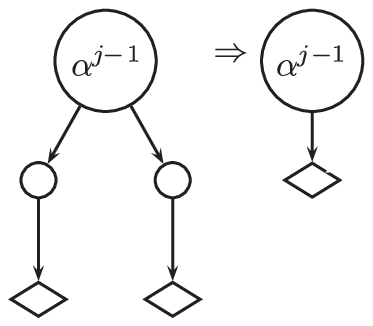}}
\end{center}\vspace{0.11in}
To obtain $\mytree{j-1}$ from $\mytree{j}$, the subtree defined by a vertex $\alpha^{j-1}$ at level $j$ is substituted as shown, for all $\alpha^{j-1}\in [m]^{j-1}$. Note that the vertex $\alpha^{j-1}$ has (in general) $m$ direct descendants; the figure corresponds to $m=2$.
\end{minipage}
\qquad
\begin{minipage}[t]{0.4\linewidth}
\begin{center}
\centerline{\includegraphics[scale=1.0]{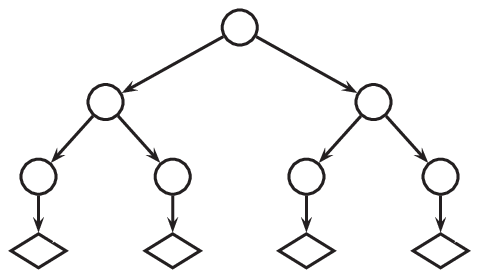}}
\end{center}\vspace{0.1in}
The tree $\mytree{2}$ obtained by applying the substitution rule to the tree $\mytree{3}$ of Figure~\ref{fig:originaltree}.
\end{minipage}
\end{center}
\caption{The substitution rule for obtaining $\mytree{j-1}$ from $\mytree{j}$, $j\in [n]$. The tree $\mytree{j}$ has depth $j+1$, with spades sitting on the $j+1$-st level. \label{fig:substitutionrule}}
\end{figure}

The tree $\mytree{0}$ defined recursively in this way coincides with the definition given above (Figure~\ref{fig:zeroanktree}). Also, it is easy to see that the value of the tree $\mytree{j}$ is given by~\eqref{eq:valuetreej}. We prove the central inequality~\eqref{eq:treerecursiontoprove}.

By definition, it suffices to prove that for all $\alpha^{j-1}\in [m]^{j-1}$, there is an $\alpha_j\in [m]$ such that 
\begin{align*}
\treew{\cS}{\hat{\rho}}(\mytree{j-1},\alpha^{j-1})\geq \treew{\cS}{\hat{\rho}}(\mytree{j},\alpha^j)-2\log m\ , 
\end{align*}
or equivalently 
\begin{align}\label{eq:toprovedeltaineq}
\delta=\min_{\alpha_j\in [m]}\treew{\cS}{\hat{\rho}}(\mytree{j},\alpha^j)-\treew{\cS}{\hat{\rho}}(\mytree{j-1},\alpha^{j-1})\leq 2\log m\ .
\end{align}
Since the two paths to the vertex $\alpha^{j-1}$ are identical in $\mytree{j}$ and $\mytree{j-1}$, the  expression on the lhs~is equal to
\begin{align}\label{eq:subtreeexpr}
\delta=\treew{\cS}{\hat{\rho}}(A)-\treew{\cS}{\hat{\rho}}(B)\ ,
\end{align}
where  $A$ and $B$ are the subtrees defined by $\alpha^{j-1}$ 
on the left in Figure~\ref{fig:substitutionrule}.

By definition, we have
\begin{align*}
\treew{\cS}{\hat{\rho}}(A)&=\begin{cases}
\min_{\alpha_j\in [m]} H(X_{>j\cap \cS}|E)_{\fr{\rhob^{\alpha^j}}{\sigma}}\qquad&\textrm{if }j\not\in\cS\\
\min_{\alpha_j\in [m]} \bigl(H(X_{>j\cap \cS}|E)_{\fr{\rhob^{\alpha^j}}{\sigma}}+H(X_j|X_{>j\cap \cS}E)_{\rhob^{\alpha^j}}\bigr)\qquad&\textrm{if }j\in\cS
\end{cases}\\
\treew{\cS}{\hat{\rho}}(B)&=H(X_{>j-1\cap \cS}|E)_{\fr{\rhob^{\alpha^{j-1}}}{\sigma}}\ .
\end{align*}

We thus have to consider two cases. 
\begin{enumerate}[(i)]
\item
If $j\not\in\cS$, then $X_{>j\cap \cS}=X_{>j-1\cap \cS}$ and~\eqref{eq:toprovedeltaineq} follows with~\eqref{eq:subtreeexpr} once we show that 
\[
\min_{\alpha_j}H(X_{>j\cap
  \cS}|E)_{\fr{\rhob^{\alpha^j}}{\sigma}}-
H(X_{>j\cap S}|E)_{\fr{\rhob^{\alpha^{j-1}}}{\sigma}}\leq 2\log m\ .
\]
This was shown in Lemma~\ref{lem:alphamerge}.
\item
If $j\in\cS$, we have
\[
\delta=\min_{\alpha_j\in [m]}\left(H(X_{>j\cap \cS}|E)_{\fr{\rhob^{\alpha^j}}{\sigma}}+H(X_j|X_{>j\cap \cS}E)_{\rhob^{\alpha^j}}-H(X_{>j-1\cap \cS}|E)_{\fr{\rhob^{\alpha^{j-1}}}{\sigma}}\right)\ .
\]
By the chain-rule (Lemma~\ref{lem:hminprop}~\eqref{it:chainrule}), we have (observe that $X_{>j-1\cap\cS}=X_jX_{>j\cap\cS}$)
\begin{align*}
H(X_{>j\cap \cS}|E)_{\fr{\rhob^{\alpha^j}}{\sigma}}+H(X_j|X_{>j\cap \cS}E)_{\rhob^{\alpha^j}}\leq H(X_{>j-1\cap \cS}|E)_{\fr{\rhob^{\alpha^j}}{\sigma}}
\end{align*}
and thus
\begin{align*}
\delta\leq \min_{\alpha_j\in [m]}\left(H(X_{>j-1\cap \cS}|E)_{\fr{\rhob^{\alpha^j}}{\sigma}}-H(X_{>j-1\cap \cS}|E)_{\fr{\rhob^{\alpha^{j-1}}}{\sigma}}\right)\ .
\end{align*}
The claim~\eqref{eq:toprovedeltaineq} again follows from Lemma~\ref{lem:alphamerge}.
\end{enumerate}
\end{proof}

In summary, we have shown the following:
\begin{lemma}\label{lem:maintreerecombination}
  Let $\cS\subset [n]$ be arbitrary and let
  $\Gamma(\lambda,\cS)\subset [m]^n$ be the set of $\lambda$-good
  paths for $\cS$ as in Definition~\ref{def:lambdagoodset}.  Let
  $\{\rhob^{\alpha^j}\}$ be the corresponding partially recombined
  states as in Definition~\ref{def:recombinedstates}.  Then
\[
\frac{H(X_{\cS}|E)_{\fr{\rhob^{\alpha^0}}{\sigma}}}{|\cS|\log|\cX|}\geq \lambda-\frac{2n\log m}{|\cS|\log|\cX|}\ .
\]
\end{lemma}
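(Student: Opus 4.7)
The plan is to chain together the three key tree inequalities that have already been established in this subsection, interpret the value of the final tree $\mytree{0}$ as the conditional entropy of interest, and then divide by the normalising factor $|\cS|\log|\cX|$. All the work is essentially done by the preceding lemmas; what remains is a straightforward assembly.

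First, I would apply Lemma~\ref{lem:treevlabelinglowerbound}, which states that under the weighting $\treev{\cS}{\rhob}$ every path in $\mytree{n}$ has weight at least $\lambda |\cS|\log |\cX|$, and hence
\[
\treev{\cS}{\rhob}(\mytree{n}) \;\geq\; \lambda\,|\cS|\log|\cX|.
\]
This is the step where the hypothesis of $\lambda$-goodness enters; the point is that paths outside $\Gamma(\lambda,\cS)$ have $\rhob^{\alpha^n}=0$, which by our conventions assigns them value $+\infty$, so only the paths in $\Gamma(\lambda,\cS)$ impose a constraint, and for those~\eqref{eq:sampleineqv} gives the bound directly.

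Next, I would move from the weighting $\treev{\cS}{\rhob}$ to $\treew{\cS}{\rhob}$ by invoking Lemma~\ref{lem:subadditivitytrees}, which is an immediate application of subadditivity of min-entropy (Lemma~\ref{lem:hminprop}~\eqref{it:simplesubadditivity}) and yields
\[
\treew{\cS}{\rhob}(\mytree{n}) \;\geq\; \treev{\cS}{\rhob}(\mytree{n}) \;\geq\; \lambda\,|\cS|\log|\cX|.
\]
Then I would walk the tree all the way down to $\mytree{0}$ using the telescoping inequality~\eqref{eq:inductioncompleted}, which was obtained by $n$-fold application of the recombination bound for recombined states (Lemma~\ref{lem:alphamerge}) together with the recombination chain rule (Lemma~\ref{lem:hminprop}~\eqref{it:chainrule}). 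This gives
\[
\treew{\cS}{\rhob}(\mytree{0}) \;\geq\; \treew{\cS}{\rhob}(\mytree{n}) - 2n\log m \;\geq\; \lambda\,|\cS|\log|\cX| - 2n\log m.
\]

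Finally, I would use the identification~\eqref{eq:treezeroweight}, namely $\treew{\cS}{\rhob}(\mytree{0}) = H(X_\cS|E)_{\fr{\rhob^{\alpha^0}}{\sigma}}$, and divide both sides by $|\cS|\log|\cX|$ to obtain the stated inequality. There is no real obstacle here: the technical difficulties were all absorbed into Lemmas~\ref{lem:alphamerge}, \ref{lem:treevlabelinglowerbound}, and~\ref{lem:subadditivitytrees}. The only thing one has to be careful about is the bookkeeping with the ``$\infty$'' values for $\alpha^n\notin\Gamma(\lambda,\cS)$ when taking minima over paths, but our convention for vanishing states (set out in Section~\ref{sec:basicdefinitionsh}) makes this harmless, as those paths simply do not contribute to the minimum value of any subtree.
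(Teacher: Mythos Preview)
Your proposal is correct and follows essentially the same approach as the paper's own proof: both chain together \eqref{eq:treezeroweight}, inequality~\eqref{eq:inductioncompleted}, Lemma~\ref{lem:subadditivitytrees}, and Lemma~\ref{lem:treevlabelinglowerbound}, with you running the chain in the opposite direction and adding helpful commentary about the $\infty$ convention. There is nothing to add.
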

\begin{proof}
We have 
\begin{align*}
H(X_{\cS}|E)_{\fr{\rhob^{\alpha^0}}{\sigma}}&= \treew{\cS}{\hat{\rho}}(\mytree{0})\qquad & \text{by \eqref{eq:treezeroweight},}\\
& \geq \treew{\cS}{\hat{\rho}}(\mytree{n})-2n\log m\qquad & \text{\eqref{eq:inductioncompleted},}\\
&\geq \treev{\cS}{\hat{\rho}}(\mytree{n})-2n\log m\qquad & \text{Lemma~\ref{lem:subadditivitytrees} and} \\
& \geq \lambda|\cS|\log|\cX|-2n\log m \qquad & \text{Lemma~\ref{lem:treevlabelinglowerbound}.}
\end{align*}
\end{proof}
We have shown that when recombining only $\lambda$-good paths, one ends up with a state with high entropy on the subset $\cS$ of systems of interest. The recombined state can, however, be far from the original state, if only a few paths are $\lambda$-good (or more precisely, if the share of the $\lambda$-good paths is small). We express this as follows.

\begin{theorem}[``Recombining''] \label{thm:recombining}
There is a probability distribution $\omega$ on $[m]^n$
such that for any subset $\cS\subset [n]$, there
is a subnormalised state $\bar{\rho}_{X^nER}$ with
\begin{align*}
\frac{H(X_{\cS}|E)_{\fr{\bar{\rho}}{\sigma}}}{|\cS|\log|\cX|}&\geq \lambda-\frac{2n\log m}{|\cS|\log|\cX|}\ ,\\
\hcond{E}{\bar{\rho}}{\sigma}&\geq \hcond{E}{\rho}{\sigma}
\end{align*}
at distance
\begin{align*}
\frac{1}{2}\|\bar{\rho}_{X^nER}-\rho_{X^nER}\|\leq \sqrt{1-\omega(\Gamma(\lambda,\cS))}\ ,
\end{align*}
from the original state $\rho_{X^nER}$, where $\Gamma(\lambda,\cS)\subset [m]^n$ is the set of
paths $\alpha^n\in [m]^n$ such that
\begin{align}
\hcond{E}{\rho^{\alpha^n}}{\sigma}  + \sum_{j \in \cS} H(X_j|\inc{X}{j}E)_{\rho^{\alpha^{j}}}
\geq 
  \lambda |\cS|\log|\cX| \qquad\textrm{ for all } \alpha = \alpha^n \in \Gamma(\lambda,\cS)\ .
\end{align}
\end{theorem}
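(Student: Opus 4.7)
The approach is to take $\omega$ and $\bar{\rho}$ to be precisely the objects already constructed in the previous subsections, and then harvest the three claimed properties from lemmas already in hand. Concretely, I would define the probability distribution $\omega$ on $[m]^n$ by
\[
\omega(\alpha^n) := \tr\proj{\Psi^{\alpha^n}} \ ,
\]
where $\{\ket{\Psi^{\alpha^n}}\}_{\alpha^n\in [m]^n}$ are the split states from Definition~\ref{def:splitstates}; this is a probability distribution by Lemma~\ref{lem:rhoalphajexpressions}\eqref{it:Psialphasum}. Given $\cS\subset[n]$, I would take $\Gamma=\Gamma(\lambda,\cS)$ as in Definition~\ref{def:lambdagoodset} and set $\bar{\rho}_{X^nER}:=\widehat{\rho}^{\alpha^0}_{X^nER}=\proj{\widehat{\Psi}^{\alpha^0}}$, the partially recombined state at the root of the tree (Definition~\ref{def:recombinedstates}).

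With these choices, the entropy-rate lower bound
\[
\frac{H(X_{\cS}|E)_{\fr{\bar{\rho}}{\sigma}}}{|\cS|\log|\cX|}\geq \lambda-\frac{2n\log m}{|\cS|\log|\cX|}
\]
is exactly the statement of Lemma~\ref{lem:maintreerecombination} applied to our choice of $\Gamma$. The second inequality $\hcond{E}{\bar{\rho}}{\sigma}\geq \hcond{E}{\rho}{\sigma}$ is precisely the second part of Lemma~\ref{lem:subsetgammastates}\eqref{it:directbPsiprojections}, which tells us that $\ket{\widehat{\Psi}^{\alpha^0}}=T^{\alpha^0}_{\dec{X}{n}R}\ket{\Psi}$ for a projector $T^{\alpha^0}_{\dec{X}{n}R}$ acting trivially on $E$, so that the projection-monotonicity of Lemma~\ref{lem:projectionmonotony} applies.

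The remaining task is the trace-distance bound. Using Lemma~\ref{lem:rhoalphajexpressions}\eqref{it:Psialphasum} together with the definition of $\ket{\widehat{\Psi}^{\alpha^0}}=\sum_{\alpha^n\in\Gamma(\lambda,\cS)}\ket{\Psi^{\alpha^n}}$, I get
\[
\ket{\Psi}-\ket{\widehat{\Psi}^{\alpha^0}}=\sum_{\alpha^n\notin\Gamma(\lambda,\cS)}\ket{\Psi^{\alpha^n}} \ .
\]
The pairwise orthogonality of $\{\ket{\Psi^{\alpha^n}}\}_{\alpha^n\in[m]^n}$ from Lemma~\ref{lem:rhoalphajexpressions}\eqref{it:lempropfirst} then yields
\[
\bigl\|\ket{\Psi}-\ket{\widehat{\Psi}^{\alpha^0}}\bigr\|^2=\sum_{\alpha^n\notin\Gamma(\lambda,\cS)}\omega(\alpha^n)=1-\omega(\Gamma(\lambda,\cS)) \ .
\]
Finally I invoke the standard inequality $\tfrac{1}{2}\|\proj{\phi}-\proj{\psi}\|_1\leq\|\ket{\phi}-\ket{\psi}\|\cdot\max(\|\ket{\phi}\|,\|\ket{\psi}\|)$ (which follows from the triangle inequality after writing $\proj{\phi}-\proj{\psi}=(\ket{\phi}-\ket{\psi})\bra{\phi}+\ket{\psi}(\bra{\phi}-\bra{\psi})$); since $\|\ket{\Psi}\|=1$ and $\|\ket{\widehat{\Psi}^{\alpha^0}}\|=\sqrt{\omega(\Gamma(\lambda,\cS))}\leq 1$, this gives $\tfrac{1}{2}\|\bar{\rho}-\rho\|\leq\sqrt{1-\omega(\Gamma(\lambda,\cS))}$ as required.

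All three components are essentially immediate consequences of lemmas already proved; there is no genuine obstacle here, only bookkeeping. The most delicate point is verifying that the same distribution $\omega$ (defined once and for all from the split states, independently of $\cS$) suffices for every choice of $\cS$---this is the content of Definition~\ref{def:splitstates} not depending on $\cS$, so the construction of $\omega$ is legitimately $\cS$-independent, and only the selection of $\Gamma\subset[m]^n$ used to form $\bar{\rho}$ depends on $\cS$.
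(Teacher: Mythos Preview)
Your proposal is correct and follows essentially the same route as the paper: same choice of $\omega$, same $\bar{\rho}=\widehat{\rho}^{\alpha^0}$, and the two entropy bounds are harvested from exactly the lemmas you cite. The only difference is in the distance estimate: the paper observes that $\widehat{\rho}^{\alpha^0}=Q\rho Q$ for the projector $Q=\sum_{\gamma^n\in\Gamma}Q^{\gamma^n}_{\dec{X}{n}R}$ and invokes the gentle measurement lemma, whereas you work directly with the pure-state difference---both arguments are valid and yield the same bound.
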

\begin{proof}
Let $\omega$ be the probability distribution introduced in
Lemma~\ref{lem:rhoalphajexpressions}. 
We set $\bar{\rho}=\rhob^{\alpha^0}$ equal to the partially recombined state~\eqref{eq:partiallyrecombinedstateorig}.

The first bound was derived in  Lemma~\ref{lem:maintreerecombination}.
The second bound is identical to the claim~\eqref{it:directbPsiprojections}
of Lemma~\ref{lem:subsetgammastates} for $j=0$. For the bound on the
distance between $\bar{\rho}$ and $\rho$, we use the fact that 
$\rhob^{\alpha^0}=Q\rho Q$, where
$Q=\sum_{\gamma^n\in\Gamma}Q^{\gamma^n}_{\dec{X}{n}R}$ is a
projector (cf. Lemma~\ref{lem:qoperatorsorthogonality}). Applying the  gentle measurement lemma~\cite{winter99,ogawanagaoka02}
\begin{align*}
\frac{1}{2}\|\rho-Q\rho Q\|\leq \sqrt{\tr(\rho)-\tr(Q^2\rho)}\qquad\textrm{ for all subnormalised } \rho\textrm{ and } 0\leq Q\leq \id\  
\end{align*}
gives the claim.
\end{proof}

\subsection{Averaging samplers and  \matrixsamplers\label{sec:averagingsamplersmatrix}}
To argue that an averaging sampler picks $\lambda$-good paths with high probability, it will be necessary to analyse the behavior of a sampler with respect to values attached to a tree. For simplicity, we consider an even simpler situation (which is more general and sufficient for our purposes): We think of values arranged in a matrix, and introduce the concept of a {\em \matrixsampler}.

Consider a modified sampler situation, where instead of a single vector $\beta=(\beta_1,\ldots,\beta_n)\in [0,1]^n$, a family $\{\beta^\alpha=(\beta_1^\alpha,\ldots,\beta_n^\alpha)\in [0,1]^n\}_{\alpha\in [M]}$ of $M$~vectors is given. We would like to approximate the values $\bar{\beta}^\alpha=\frac{1}{n}\sum_{i=1}^n\beta^\alpha_i$ simultaneously by expressions of the form~$\frac{1}{|\cS|}\sum_{i\in\cS}\beta^\alpha_i$. Clearly, a single (small) subset $\cS\subset [n]$ will generally not give a good approximation for each one of the $M$~vectors. However, it is possible to guarantee that  it does so for most vectors, in the following sense.
\begin{definition}\label{def:matrixsampler}
Let $M,n\in\mathbb{N}$.
For any subset $\cS\subset [n]$, matrix $\beta=(\beta^\alpha_i)_{\alpha\in [M],i\in [n]}\in [0,1]^{M\times n}$ and $\xi\in [0,1]$, let $\BAD(\beta,\cS,\xi)\subset [M]$ be the set of $\alpha\in [M]$ such that 
\[
\frac{1}{|\cS|}\sum_{i\in\cS}\beta^{\alpha}_i\leq\frac{1}{n}\sum_{i=1}^n\beta^{\alpha}_i -\xi\ . 
\]
A {\em $(M,n,\xi,\delta,\varepsilon)$-\matrixsampler} is a distribution $P_{\cS}$ over subsets $\cS$ of $[n]$ with the property that for every 
fixed probability distribution $\omega$ on $[M]$,
\begin{align*}
\Pr_{\cS}\left[\omega(\BAD(\beta,\cS,\xi))\geq \delta \right]\leq \varepsilon 
\textrm{ for all }\beta=(\beta^\alpha_i)_{\alpha\in [M],i\in [n]}\in [0,1]^{M\times n}\ .
\end{align*}
A {\em $(n,\xi,\delta,\varepsilon)$-\matrixsampler} is a $(M,n,\xi,\delta,\varepsilon)$-\matrixsampler\ for any $M\in\mathbb{N}$.
\end{definition}
Clearly, a ``standard'' sampler corresponds to $M=1$. In our application, the matrices $\beta\in [0,1]^{M\times n}$ will not be arbitrary, but have a lot of redundancy. This could perhaps be exploited to find better constructions; however, for our purposes, a \matrixsampler\ is sufficient.

We now use Markov's inequality to obtain the following generic construction of a \matrixsampler; again, more optimal constructions may be possible, but the following one is sufficient for our considerations.
\begin{lemma}\label{lem:matrixsampler}
A $(n,\xi,\varepsilon)$-sampler is a  $(n,\xi,\sqrt{\varepsilon},\sqrt{\varepsilon})$-\matrixsampler.
\end{lemma}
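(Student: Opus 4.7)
The plan is to fix an arbitrary probability distribution $\omega$ on $[M]$ and an arbitrary matrix $\beta \in [0,1]^{M\times n}$, and argue by a two-step probabilistic reduction: first bound the \emph{expected} $\omega$-weight of $\BAD(\beta,\cS,\xi)$ using the sampler property row by row, then convert this expectation bound into a tail bound via Markov.

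More precisely, for each fixed $\alpha \in [M]$, the vector $\beta^\alpha = (\beta^\alpha_1,\ldots,\beta^\alpha_n) \in [0,1]^n$ is just an ordinary input to the $(n,\xi,\varepsilon)$-sampler, so by Definition~\ref{def:averagingsampler},
\[
\Pr_\cS\bigl[\alpha \in \BAD(\beta,\cS,\xi)\bigr] \;=\; \Pr_\cS\!\left[\tfrac{1}{|\cS|}\sum_{i\in\cS}\beta^\alpha_i \leq \tfrac{1}{n}\sum_{i=1}^n\beta^\alpha_i - \xi\right] \;\leq\; \varepsilon.
\]
Writing $\omega(\BAD(\beta,\cS,\xi)) = \sum_{\alpha \in [M]} \omega(\alpha)\,\mathbf{1}[\alpha \in \BAD(\beta,\cS,\xi)]$ and swapping expectation with the (finite) sum over $\alpha$ yields
\[
\ExpE_\cS\bigl[\omega(\BAD(\beta,\cS,\xi))\bigr] \;=\; \sum_{\alpha\in[M]} \omega(\alpha)\,\Pr_\cS\bigl[\alpha \in \BAD(\beta,\cS,\xi)\bigr] \;\leq\; \varepsilon.
\]

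The second step is then immediate: since $\omega(\BAD(\beta,\cS,\xi))$ is a nonnegative random variable (as a function of $\cS$) with expectation at most $\varepsilon$, Markov's inequality gives
\[
\Pr_\cS\!\left[\omega(\BAD(\beta,\cS,\xi)) \geq \sqrt{\varepsilon}\,\right] \;\leq\; \frac{\ExpE_\cS[\omega(\BAD(\beta,\cS,\xi))]}{\sqrt{\varepsilon}} \;\leq\; \frac{\varepsilon}{\sqrt{\varepsilon}} \;=\; \sqrt{\varepsilon},
\]
which is exactly the defining inequality of a $(n,\xi,\sqrt{\varepsilon},\sqrt{\varepsilon})$-\matrixsampler\ from Definition~\ref{def:matrixsampler}. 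Since $\omega$ and $\beta$ were arbitrary and the bound holds for every $M \in \mathbb{N}$, this establishes the claim.

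There is essentially no obstacle here: the entire argument is a clean swap of expectations followed by Markov's inequality, and the choice of the two $\sqrt{\varepsilon}$ parameters is exactly what makes the Markov step balance. The only thing worth noting is that the distribution $\omega$ must be \emph{fixed} (i.e.\ chosen independently of $\cS$) for the swap of expectations to make sense, which is precisely the quantifier order in Definition~\ref{def:matrixsampler}.
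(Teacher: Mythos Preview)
Your proof is correct and is essentially identical to the paper's: both bound $\ExpE_{\cS}[\omega(\BAD(\beta,\cS,\xi))]\leq\varepsilon$ by applying the sampler property to each row and averaging over $\alpha\sim\omega$, then apply Markov's inequality with threshold $\sqrt{\varepsilon}$. The only cosmetic difference is that the paper writes the expectation as a joint probability $\Pr_{\cS,\alpha}[\alpha\in\BAD]$ before applying Markov, whereas you expand $\omega(\BAD)$ as a sum of indicators; these are the same computation.
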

\begin{proof}
Let $M\in\mathbb{N}$ be arbitrary. Fix a probability distribution $\omega$ on $[M]$ and let  $\beta=(\beta^\alpha_i)_{\alpha\in [M],i\in [n]}\in [0,1]^{M\times n}$ be arbitrary. Since the probability on the lhs of~\eqref{eq:averagingsamplerdef} is bounded by~$\varepsilon$ for each vector $(\beta^\alpha_1,\ldots,\beta^\alpha_n)$ with $\alpha\in [M]$, it is also bounded if we choose $\alpha$ independently according to $\omega$. That is, we have
\begin{align*}
\varepsilon\geq \Pr_{\cS,\alpha\in [M]}\left[\alpha\in\BAD(\beta,\cS,\xi)\right]=\ExpE_{\cS}\left[\Pr_{\alpha}[\alpha\in\BAD(\beta,\cS,\xi)]\right]\ .
\end{align*}
Markov's inequality $\Pr[Z\geq c]\leq
\textfrac{\ExpE[Z]}{c}$ with $c=\sqrt{\varepsilon}$ applied to
the random variable $Z(\cS)=\Pr_{\alpha}[\alpha\in \BAD(\cS)]$ immediately gives the claim.
\end{proof}

\subsection{Sampling $\lambda$-good paths\label{sec:highprobabilityweightsampling}}
We now apply the concept of a \matrixsampler\ to the situation of interest. Recall Definition~\ref{def:lambdagoodset} of the set
$\Gamma(\lambda,\cS)\subset [m]^n$ of $\lambda$-good
paths for every $\lambda>0$ and $\cS\subset [n]$.
We show that for an appropriate choice of $\lambda$, and a fixed
probability distribution $\omega$ on $[m]^n$, the weight of the $\lambda$-good
paths for $\cS$ is large with high probability if $\cS\subset [n]$ is a random subset which is a \matrixsampler.

\begin{theorem}[``Sampling'']\label{thm:sampling}
Let $\omega$ be an arbitrary probability distribution on
$[m]^n$. Let $P_\cS$ be a probability distribution over subsets of $[n]$ which is a $(n,\xi,\delta,\varepsilon)$-\matrixsampler. Then
\begin{align*}
\Pr_\cS\left[\omega(\Gamma(\lambda,\cS))\geq
  1-\delta\right]&\geq 1-\varepsilon\qquad\textrm{for }\\
\lambda&:=\frac{H(\inc{X}{0}|E)_{\fr{\rho}{\sigma}}}{n\log|\cX|}+\frac{n-|\cS|}{|\cS|n\log|\cX|}\hcond{E}{\rho}{\sigma}- \bigl(\frac{1}{m}  + \xi \bigr)\ ,
\end{align*}
where $\Gamma(\lambda,\cS)$ is the set of $\lambda$-good paths as in Definition~\ref{def:lambdagoodset}, i.e., the set of $\alpha^n\in [m]^n$ with
\begin{align*}
\hcond{E}{\rho^{\alpha^n}}{\sigma}+\sum_{j\in\cS} H(X_j|\inc{X}{j}E)_{\rho^{\alpha^j}}\geq \lambda|\cS|\log|\cX|\ .
\end{align*}
\end{theorem}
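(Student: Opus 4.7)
The plan is to reduce the theorem to a single invocation of the \matrixsampler\ property, applied to a $[0,1]$-valued matrix $\beta=(\beta^{\alpha^n}_i)_{\alpha^n\in [m]^n,\,i\in [n]}$ whose row-averages over $\cS$ and over $[n]$ respectively encode the left-hand side of the $\lambda$-goodness condition and a lower bound supplied by the splitting theorem (Theorem~\ref{thm:splittoffmain}).

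\textbf{Setting up the matrix.} For $\alpha^n\in [m]^n$ (viewed as an index in $[M]$ with $M=m^n$) and $i\in [n]$, define
\[
\beta^{\alpha^n}_i := \frac{H(X_i|\inc{X}{i}E)_{\rho^{\alpha^i}}}{\log|\cX|},
\]
where $\alpha^i$ denotes the prefix of $\alpha^n$ determined by Definition~\ref{def:splitstates}. These values lie in $[0,1]$: Lemma~\ref{lem:rhoalphajexpressions}\eqref{it:classicalitysplit} asserts that $\rho^{\alpha^i}_{X^n E}$ is classical on $X^n$, so Lemma~\ref{lem:hminprop}\eqref{it:classicalcond} provides non-negativity, while Lemma~\ref{lem:hminprop}\eqref{it:hzerobound} provides the upper bound $\log|\cX|$. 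Thus $\beta$ is a legitimate input for an $(n,\xi,\delta,\varepsilon)$-\matrixsampler.

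\textbf{Applying the \matrixsampler.} By Definition~\ref{def:matrixsampler}, with probability at least $1-\varepsilon$ over the choice of $\cS$, the set $\BAD(\beta,\cS,\xi)\subset [m]^n$ has $\omega$-mass at most $\delta$; equivalently, every $\alpha^n$ outside this set satisfies
\[
\frac{1}{|\cS|}\sum_{i\in\cS}\beta^{\alpha^n}_i \;\geq\; \frac{1}{n}\sum_{i=1}^n\beta^{\alpha^n}_i - \xi.
\]
For such an $\alpha^n$, inserting the lower bound on $\sum_{i=1}^n H(X_i|\inc{X}{i}E)_{\rho^{\alpha^i}}$ supplied by Theorem~\ref{thm:splittoffmain} and multiplying by $|\cS|\log|\cX|$ yields
\[
\sum_{i\in\cS}H(X_i|\inc{X}{i}E)_{\rho^{\alpha^i}} \;\geq\; \frac{|\cS|}{n}\bigl(H(\inc{X}{0}|E)_{\fr{\rho}{\sigma}}-\hcond{E}{\rho^{\alpha^n}}{\sigma}\bigr) - |\cS|\log|\cX|\bigl(\tfrac{1}{m}+\xi\bigr).
\]

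\textbf{Rearrangement.} Adding $\hcond{E}{\rho^{\alpha^n}}{\sigma}$ to both sides leaves a coefficient $(n-|\cS|)/n\geq 0$ in front of $\hcond{E}{\rho^{\alpha^n}}{\sigma}$, so Lemma~\ref{lem:rhoalphajexpressions}\eqref{it:hnrhosigma} permits its replacement by the $\alpha^n$-independent quantity $\hcond{E}{\rho}{\sigma}$ with only a weakening of the bound. Dividing through by $|\cS|\log|\cX|$ produces precisely the $\lambda$-good inequality of Definition~\ref{def:lambdagoodset} with $\lambda$ the value stated in the theorem. Hence the complement of $\BAD(\beta,\cS,\xi)$ is contained in $\Gamma(\lambda,\cS)$, so $\omega(\Gamma(\lambda,\cS)) \geq 1-\delta$ with probability at least $1-\varepsilon$, as required.

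\textbf{Main obstacle.} The only delicate point is aligning the two occurrences of $\hcond{E}{\rho^{\alpha^n}}{\sigma}$ --- the negative one arising from the splitting bound and the positive one contributed by the $\lambda$-good definition --- so that the residual term carries the nonnegative coefficient $(n-|\cS|)/n$ and can be safely bounded below by $\hcond{E}{\rho}{\sigma}$ via monotony under projections. Once this accounting is arranged, everything else is routine bookkeeping.
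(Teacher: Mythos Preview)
Your proposal is correct and follows essentially the same approach as the paper's own proof: define the same normalized matrix $\beta^{\alpha^n}_i$, invoke the \matrixsampler\ property, feed in the splitting bound (Theorem~\ref{thm:splittoffmain}), and use Lemma~\ref{lem:rhoalphajexpressions}\eqref{it:hnrhosigma} to replace $\hcond{E}{\rho^{\alpha^n}}{\sigma}$ by $\hcond{E}{\rho}{\sigma}$. The only cosmetic difference is that the paper packages the intermediate bound into an auxiliary $\lambda^{\alpha^n}$ before showing $\lambda^{\alpha^n}\geq\lambda$, whereas you rearrange directly; your explicit justification of $\beta^{\alpha^n}_i\geq 0$ via classicality is in fact slightly more careful than the paper's citation of the dimension bound alone.
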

\begin{proof}
For every $j\in [n]$ and $\alpha^n\in [m]^n$, we define the quantity
\[
  \beta^{\alpha^n}_j=\frac{H(X_j|\inc{X}{j}E)_{\rho^{\alpha^{j}}}}{\log |\cX|} \ .
\]
 (Note that this depends only on the first $j$~entries of $\alpha^n$.)
Observe that we have $\beta^{\alpha^n}_j\in [0,1]$ by the dimension bound (Lemma~\ref{lem:hminprop}~\eqref{it:hzerobound}). By Definition~\ref{def:matrixsampler} of a \matrixsampler, we therefore get
\begin{align}\label{eq:inequalitybadv}
\Pr_{\cS}\left[\omega(\BAD(\beta,\cS,\xi))\geq \delta\right]\leq \varepsilon\ ,
\end{align}
where 
\begin{align*}
\BAD(\beta,\cS,\xi)=\big\{\alpha^n\in [m]^n\ \big|\ \frac{1}{|\cS|}\sum_{i\in\cS}\beta^{\alpha^n}_i\leq \frac{1}{n}\sum_{i\in[n]}\beta^{\alpha^n}_i-\xi \big\}\ .
\end{align*}
Inequality~\eqref{eq:inequalitybadv} can  be rewritten as
\begin{align}
\Pr_{\cS}\left[\omega(\overline{\BAD(\beta,\cS,\xi)})\geq 1-\delta\right]\geq 1-\varepsilon\ ,\label{eq:inequalitybadvrewritten}
\end{align}
where we write $\overline{\BAD(\beta,\cS,\xi)}=[m]^n\backslash\BAD(\beta,\cS,\xi)$ for the complement of $\BAD(\beta,\cS,\xi)$.

Note that if $\alpha^n\in \overline{\BAD(\beta,\cS,\xi)}$, then 
\begin{align*}
\frac{\log|\cX|}{|\cS|}\sum_{i\in\cS}\beta^{\alpha^n}_i&\geq \frac{\log|\cX|}{n}\sum_{i\in [n]}\beta^{\alpha^n}_i -\xi\log|\cX|\\
&\geq\frac{1}{n}\left(H(\inc{X}{0}|E)_{\fr{\rho}{\sigma}}-\frac{n\log|\cX|}{m}-
\hcond{E}{\rho^{\alpha^n}}{\sigma}\right) -\xi\log|\cX|\ 
\end{align*}
by the definition of $\beta_i^{\alpha^n}$ and Theorem~\ref{thm:splittoffmain}.
This is equivalent to
\begin{align} \label{eq:sampledentropy}
\treev{\cS}{\rho}(\mytree{n},\alpha^n)=\hcond{E}{\rho^{\alpha^n}}{\sigma}  + \sum_{j \in \cS} H(X_j|\inc{X}{j}E)_{\rho^{\alpha^{j}}}
\geq
 \lambda^{\alpha^n} |\cS|\log|\cX|
\end{align}
where
\[
  \lambda^{\alpha^n}
:=
  \frac{H(\inc{X}{0}|E)_{\fr{\rho}{\sigma}}}{n\log|\cX|} -\frac{1}{m}+\frac{n-|\cS|}{|\cS| n\log|\cX|} \hcond{E}{\rho^{\alpha^n}}{\sigma} - \xi\ .
\]
We use Lemma~\ref{lem:rhoalphajexpressions}~\eqref{it:hnrhosigma} (with $j=n$) to bound the second summand
from below, getting 
$\lambda^{\alpha^n}\geq \lambda$ for all $\alpha^n\in [m]^n$.  With~\eqref{eq:sampledentropy}, we conclude that
\begin{align}
\treev{\cS}{\rho}(\mytree{n},\alpha^n)\geq\lambda|\cS|\log|\cX|\qquad\textrm{ for all }\alpha^n\in \overline{\BAD(\beta,\cS,\xi)}\ .
\end{align}
In other words, we have $\overline{\BAD(\beta,\cS,\xi)}\subset \Gamma(\lambda,\cS)$, and the claim follows from~\eqref{eq:inequalitybadvrewritten}.
\end{proof}

\subsection{Sampling and recombining: preservation of smooth entropy rate\label{sec:samplingandrecombining}}
We will now turn our attention to the smooth min-entropy, as introduced in~\cite{Ren05}. We will state and prove our main result in this section; that is, we will show that smooth min-entropy rate is preserved under sampling.

Before discussing our main result, we quickly review an important special case: We will often consider situations where a random variable $Z=f(X,Y)$ is the result of applying a function to two random variables $X$ and $Y$. An example of this is the case where $Z$ is a randomly chosen substring of $X$. To show that the uncertainty about $f(X,Y)$ is large given $Y$ and a quantum system $E$, it suffices to show that with high probability over $Y$, the uncertainty about $f(X,y)$ is large. This is expressed by the following result.
\begin{lemma}\label{lem:hminclassicalmarkov}
Let $\rho_{ZYE}$ be such that
\begin{align*}
\Pr_{y}\left[\hmin^\delta(Z|E,Y=y)\geq k\right]\geq 1-\varepsilon\ .
\end{align*}
Then $\hmin^{\delta+\varepsilon}(Z|YE)\geq k$.
\end{lemma}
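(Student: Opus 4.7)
The proof is essentially a patching argument exploiting classicality of $Y$. First I would decompose the classical-quantum state as $\rho_{ZYE} = \sum_{y}P_Y(y)\proj{y}_Y\otimes\rho_{ZE}^y$, where $\rho_{ZE}^y$ is the normalised conditional state on $ZE$ given $Y=y$. Let $\GOOD\subseteq\cY$ denote the ``good'' set of $y$ for which $\hmin^{\delta}(Z|E)_{\rho^y}\geq k$; by hypothesis $\Pr[Y\in\GOOD]\geq 1-\varepsilon$.

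The plan is to build an explicit subnormalised approximation $\bar{\rho}_{ZYE}$ of $\rho_{ZYE}$ in the $\delta+\varepsilon$-ball around $\rho_{ZYE}$ with $\hmin(Z|YE)_{\bar{\rho}}\geq k$, which by definition of the smooth min-entropy gives the claim. For each $y\in\GOOD$, choose a subnormalised operator $\bar{\rho}_{ZE}^y$ with $\|\bar{\rho}_{ZE}^y-\rho_{ZE}^y\|\leq\delta$ and $\hmin(Z|E)_{\bar{\rho}^y}\geq k$, and pick a corresponding trace-one witness $\sigma_E^y$ on $E$ satisfying $\bar{\rho}_{ZE}^y\leq 2^{-k}\,\id_Z\otimes\sigma_E^y$. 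For $y\notin\GOOD$, set $\bar{\rho}_{ZE}^y:=0$ and pick $\sigma_E^y$ to be any trace-one state on $E$. Then define
\[
\bar{\rho}_{ZYE}:=\sum_{y}P_Y(y)\proj{y}_Y\otimes\bar{\rho}_{ZE}^y,\qquad \sigma_{YE}:=\sum_{y}P_Y(y)\proj{y}_Y\otimes\sigma_E^y.
\]

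Next I would verify the two defining conditions. The trace-norm bound follows from classicality of $Y$: since $\|A\otimes\proj{y}-B\otimes\proj{y}\|=\|A-B\|$ and the blocks are orthogonal,
\[
\|\bar{\rho}_{ZYE}-\rho_{ZYE}\|=\sum_{y}P_Y(y)\,\|\bar{\rho}_{ZE}^y-\rho_{ZE}^y\|\leq \sum_{y\in\GOOD}P_Y(y)\,\delta+\sum_{y\notin\GOOD}P_Y(y)\cdot 1\leq \delta+\varepsilon,
\]
where in the bad case we used $\bar{\rho}_{ZE}^y=0$ and $\tr(\rho_{ZE}^y)=1$. For the min-entropy bound, note $\tr(\sigma_{YE})=\sum_y P_Y(y)\tr(\sigma_E^y)=1$, and blockwise the operator inequality $\bar{\rho}_{ZE}^y\leq 2^{-k}\id_Z\otimes\sigma_E^y$ (which holds trivially for $y\notin\GOOD$) assembles into
\[
\bar{\rho}_{ZYE}\leq 2^{-k}\,\id_Z\otimes\sigma_{YE},
\]
so that $\hmin(Z|YE)_{\bar{\rho}}\geq k$ by~\eqref{eq:operatorinequalityhmin} and~\eqref{eq:minentropyconvdef}.

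There is no real obstacle here; the only thing to be careful about is using the classical structure of $Y$ so that (i) the $L_1$-distance decomposes as a convex combination of block distances, and (ii) one can independently patch together trace-one witnesses $\sigma_E^y$ on the good and bad branches into a single trace-one $\sigma_{YE}$ that dominates $\bar{\rho}_{ZYE}$ blockwise. Both facts are immediate from the direct-sum structure indexed by the orthogonal basis $\{\ket{y}\}$ of $\cH_Y$.
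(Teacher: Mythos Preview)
Your proof is correct and essentially identical to the paper's: both decompose $\rho_{ZYE}$ along the classical register $Y$, patch in witnesses $(\bar\rho^y_{ZE},\sigma^y_E)$ on the good set, and assemble block-diagonal $\bar\rho_{ZYE}$ and $\sigma_{YE}$. The only cosmetic difference is that on the bad set the paper says ``choose arbitrary states satisfying $\bar\rho^y_{ZE}\le 2^{-k}\sigma^y_E$'' while you make the concrete choice $\bar\rho^y_{ZE}=0$; your choice is in fact what is needed to get the distance bound $\delta+\varepsilon$ rather than $\delta+2\varepsilon$, so your write-up is slightly more explicit on this point.
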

\noindent The proof of this lemma is deferred to Appendix~\ref{sec:entropypropertyappendix}.

Recall that the smooth min-entropy-rate $\hminrate^\varepsilon(A|B)_\rho$ is defined as in~\eqref{eq:smoothminentropyrate} as the smooth min-entropy $\hmin^{\varepsilon}(A|B)_{\rho}$ divided by the size $H_0(A)$ of $A$. Our main result is the following
\begin{theorem}\label{thm:mainresultsampling}
Let $\rho_{X^nE}$ be a quantum state where $X^n=(X_1,\ldots,X_n)$ on $\cX^n$ is classical. Let $\cS$ be a random variable over subsets of $[n]$ which is independent of $X^nE$ and a $(n,\xi,\delta,\varepsilon)$-\matrixsampler. Assume that 
 $\kappa=\frac{n}{|\cS|\log |\cX|}\leq 0.15$.
Then
\begin{align*}
\hminrate^{2\sqrt{\delta}+\varepsilon+2\theta+\tau}(X_{\cS}|\cS E)_\rho&\geq \hminrate^\tau(X^n|E)_\rho-\Delta
\qquad\textrm{where }\\
\Delta&=\xi+\frac{2\log\textfrac{1}{\theta}}{n\log|\cX|}+2\kappa\log \textfrac{1}{\kappa}\ ,
\end{align*}
for all $\theta,\tau\geq 0$.
\end{theorem}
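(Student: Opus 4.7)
The plan is to assemble the three components developed in Section~\ref{sec:entropysampling}: splitting (Theorem~\ref{thm:splittoffmain}), which distributes the global min-entropy over the tree of split states; sampling (Theorem~\ref{thm:sampling}), which shows that a random $\cS$ picks up most of this entropy via the $\lambda$-good paths; and recombining (Theorem~\ref{thm:recombining}), which reassembles a state close to the original yet with high min-entropy on $X_\cS$.

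First, I would reduce to an unsmoothed statement by invoking the definition of smooth min-entropy: choose a subnormalized $\tilde\rho_{X^n E}$, classical on $X^n$, with $\|\tilde\rho - \rho\| \leq \tau$ and $\hmin(X^n|E)_{\tilde\rho} \geq \hmin^\tau(X^n|E)_\rho$, and fix a purification $\ket{\tilde\Psi_{X^nER}}$ on which the construction of Definition~\ref{def:splitstates} is carried out. Next I would select $\sigma_E$ so that $H(X^n|E)_{\tilde\rho/\sigma}$ nearly equals $\hmin(X^n|E)_{\tilde\rho}$ while also controlling the auxiliary term $\frac{n-|\cS|}{|\cS|n\log|\cX|}\hcond{E}{\tilde\rho}{\sigma}$ that appears in Theorem~\ref{thm:sampling}; a $\theta$-scale regularization of the optimizer in the definition of $\hmin$ costs at most $\log(1/\theta)$ in $H(X^n|E)_{\tilde\rho/\sigma}$ and shifts the state by at most $2\theta$ in $L_1$-distance, accounting for both the $\frac{2\log(1/\theta)}{n\log|\cX|}$ summand of $\Delta$ and the $2\theta$ contribution to the final smoothness parameter. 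Setting $m := \lceil 1/(2\kappa)\rceil$ balances the $1/m$ term in $\lambda$ against the $2\kappa\log m$ loss from recombining, yielding the $2\kappa\log(1/\kappa)$ summand of $\Delta$.

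Applying Theorem~\ref{thm:sampling} with these choices yields that, except on an $\varepsilon$-probability set of $\cS$, the $\omega$-weight of $\Gamma(\lambda,\cS) \subset [m]^n$ is at least $1 - \delta$, where $\lambda \geq \hminrate^\tau(X^n|E)_\rho - \Delta + 2\kappa\log m$. For each such good $\cS$, Theorem~\ref{thm:recombining} produces a subnormalized $\bar\rho^\cS$ with $\hminrate(X_\cS|E)_{\bar\rho^\cS} \geq \lambda - 2\kappa\log m \geq \hminrate^\tau(X^n|E)_\rho - \Delta$ (taking the supremum over $\sigma$ in the definition of $\hmin$) and $\frac{1}{2}\|\bar\rho^\cS - \tilde\rho\| \leq \sqrt{\delta}$; combining with the $\tau$ from smoothing and the $2\theta$ from the $\sigma$-regularization, the triangle inequality gives $\hminrate^{2\sqrt{\delta}+2\theta+\tau}(X_\cS|E)_\rho \geq \hminrate^\tau(X^n|E)_\rho - \Delta$ for each good $\cS$.

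Finally, Lemma~\ref{lem:hminclassicalmarkov} applied with $Y = \cS$ upgrades this "with probability at least $1 - \varepsilon$ over $\cS$" statement to $\hminrate^{2\sqrt{\delta}+2\theta+\tau+\varepsilon}(X_\cS|\cS E)_\rho \geq \hminrate^\tau(X^n|E)_\rho - \Delta$, which is the claim. The main obstacle is the selection of $\sigma_E$: one needs $H(X^n|E)_{\tilde\rho/\sigma}$ essentially optimal and $\hcond{E}{\tilde\rho}{\sigma}$ not too negative simultaneously, since the latter enters $\lambda$ with the weight $\frac{n-|\cS|}{|\cS|n\log|\cX|}$ and could otherwise overwhelm the main rate term; the hypothesis $\kappa \leq 0.15$ together with the $\theta$-regularization is what forces the combined cost to be just the $\frac{2\log(1/\theta)}{n\log|\cX|}$ summand in $\Delta$ and the $2\theta$ in the smoothness.
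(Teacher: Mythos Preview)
Your outline matches the paper's proof in structure: splitting, sampling, recombining, then Lemma~\ref{lem:hminclassicalmarkov} to absorb the $\varepsilon$ failure probability into the smoothness. The paper packages the core step as Lemma~\ref{lem:mainresultsamplingentropy}, proved for an auxiliary quantity $\hbmin(A|B)_\rho:=\sup_{\sigma_B\geq\rho_B}H(A|B)_{\rho/\sigma}$, and then converts to $\hmin$ via the two-sided estimate~\eqref{eq:hbminhminbounds}, whose nontrivial half is Lemma~\ref{lem:hbmin}. The constraint $\sigma_E\geq\rho_E$ is exactly what forces $\hcond{E}{\rho}{\sigma}\geq 0$ and kills the auxiliary term in Theorem~\ref{thm:sampling}; the $2\theta$ and $2\log(1/\theta)$ you budget for are precisely the costs of Lemma~\ref{lem:hbmin}.

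Two clarifications. First, your phrase ``$\theta$-scale regularization of the optimizer'' suggests perturbing $\sigma_E$, but the actual construction (Lemma~\ref{lem:hbmin}) leaves $\sigma$ alone and instead projects the \emph{state} onto the spectral subspace of $\sigma_E/\rho_E$ with eigenvalues at most $1/\theta^2$; for the resulting $\bar\rho$ one then takes $\sigma_E=\rho_E\geq\bar\rho_E$. The entropy loss is $2\log(1/\theta)$, not $\log(1/\theta)$. Second, the hypothesis $\kappa\leq 0.15$ has nothing to do with the $\sigma_E$ issue; it is used only in Lemma~\ref{lem:mainresultsamplingentropy} to guarantee an integer $m$ in the interval $[m_{\min},m_{\min}/\ln 2]$ (with $m_{\min}=\ln 2/(2\kappa)$) so that $1/m+2\kappa\log m\leq 2\kappa\log(1/\kappa)$ holds exactly.
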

We will give concrete parameters below, which show that $\Delta\rightarrow 0$ (in some security parameter), in situations of interest. To put this result into a more convenient form, we choose a certain value of $\theta$, and show how this result applies to general samplers.
\begin{corollary}\label{cor:samplersminentropypreservation}
Let $\rho_{X^nE}$ be a quantum state as in Theorem~\ref{thm:mainresultsampling} and let $\cS$ be a $(n,\xi,\varepsilon)$-sampler.
Assume that $\kappa=\frac{n}{|\cS|\log |\cX|}\leq 0.15$.
Then
\begin{align*}
\hminrate^{\varepsilon'+\tau}(X_{\cS}|\cS E)_\rho&\geq \hminrate^{\tau}(X^n|E)_\rho-3\xi-2\kappa \log\textfrac{1}{\kappa}\textrm{ with}\\
\varepsilon'&=2\cdot 2^{-\xi n\log|\cX|}+3\varepsilon^{\textfrac{1}{4}}
\end{align*}
for all $\tau\geq 0$.
\end{corollary}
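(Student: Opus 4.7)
The plan is to derive the corollary from Theorem~\ref{thm:mainresultsampling} by two straightforward steps: first convert the given standard sampler into a \matrixsampler\ via Lemma~\ref{lem:matrixsampler}, then choose the free parameter $\theta$ so that the $\Delta$-term in $\Delta = \xi + \frac{2\log(1/\theta)}{n\log|\cX|} + 2\kappa\log(1/\kappa)$ absorbs into the single expression $3\xi + 2\kappa\log\textfrac{1}{\kappa}$ asserted by the corollary.

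Concretely, Lemma~\ref{lem:matrixsampler} states that the hypothesised $(n,\xi,\varepsilon)$-sampler~$\cS$ is automatically a $(n,\xi,\sqrt{\varepsilon},\sqrt{\varepsilon})$-\matrixsampler. Plugging this into Theorem~\ref{thm:mainresultsampling} (with the role of $\delta$ played by~$\sqrt{\varepsilon}$ and the role of the theorem's $\varepsilon$ also played by $\sqrt{\varepsilon}$) yields, for every $\theta,\tau\geq 0$, the bound
\[
\hminrate^{2\varepsilon^{1/4}+\sqrt{\varepsilon}+2\theta+\tau}(X_{\cS}|\cS E)_\rho \;\geq\; \hminrate^{\tau}(X^n|E)_\rho \;-\; \Bigl(\xi+\frac{2\log\textfrac{1}{\theta}}{n\log|\cX|}+2\kappa\log\textfrac{1}{\kappa}\Bigr).
\]

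Now set $\theta := 2^{-\xi n\log|\cX|}$. Then $\frac{2\log(1/\theta)}{n\log|\cX|} = 2\xi$, so the expression in parentheses becomes exactly $3\xi + 2\kappa\log\textfrac{1}{\kappa}$, matching the claimed bound on the entropy rate. It remains to bound the smoothness parameter: with this choice of $\theta$,
\[
2\varepsilon^{1/4}+\sqrt{\varepsilon}+2\theta+\tau \;=\; 2\varepsilon^{1/4} + \sqrt{\varepsilon} + 2\cdot 2^{-\xi n\log|\cX|} + \tau.
\]
Since we may assume $\varepsilon\leq 1$ (otherwise the statement is trivial because any distribution is $1$-close to any other), we have $\sqrt{\varepsilon}=(\varepsilon^{1/4})^{2}\leq \varepsilon^{1/4}$, whence $2\varepsilon^{1/4}+\sqrt{\varepsilon}\leq 3\varepsilon^{1/4}$. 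The smoothness parameter is therefore at most $3\varepsilon^{1/4} + 2\cdot 2^{-\xi n\log|\cX|} + \tau = \varepsilon' + \tau$, as required.

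There is no real obstacle here: the corollary is essentially a bookkeeping consequence of the main theorem, and the only mild point is the choice of $\theta$ that balances the two additive contributions to $\Delta$ (one from $\xi$, one from $\theta$) while keeping the induced smoothness term $2\theta = 2\cdot 2^{-\xi n\log|\cX|}$ negligible compared to the sampler-induced term $3\varepsilon^{1/4}$. All of the real work — the splitting, sampling and recombining used to establish Theorem~\ref{thm:mainresultsampling}, and the Markov-type amplification in Lemma~\ref{lem:matrixsampler} — has already been done.
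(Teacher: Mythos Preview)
Your proof is correct and follows the same approach as the paper: choose $\theta=2^{-\xi n\log|\cX|}$ so that $\Delta$ collapses to $3\xi+2\kappa\log\textfrac{1}{\kappa}$, and invoke Lemma~\ref{lem:matrixsampler} to turn the $(n,\xi,\varepsilon)$-sampler into a $(n,\xi,\sqrt{\varepsilon},\sqrt{\varepsilon})$-\matrixsampler. You even spell out the bound $2\varepsilon^{1/4}+\sqrt{\varepsilon}\leq 3\varepsilon^{1/4}$ on the smoothness parameter, which the paper leaves implicit.
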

\begin{proof}
We choose $\theta=2^{-\xi n \log|\cX|}$. We can then bound $\Delta$ in Theorem~\ref{thm:mainresultsampling} by
\begin{align*}
\Delta\leq 3\xi+2\kappa\log\textfrac{1}{\kappa}\ ,
\end{align*}
and the claim follows from the fact that a
$(n,\xi,\varepsilon)$-sampler is a
$(n,\xi,\sqrt{\varepsilon},\sqrt{\varepsilon})$-\matrixsampler (i.e.,
Lemma~\ref{lem:matrixsampler}).
\end{proof}

In the remainder of this section, we prove Theorem~\ref{thm:mainresultsampling}. We do so in two successive steps. We first show that sampling preserves the entropy rate of a modified (smooth) entropy $\hbmin^\varepsilon(A|B)$. We then use the fact that this modified entropy $\hbmin^\varepsilon$ is essentially equivalent to the smooth min-entropy.  More precisely, we introduce the quantities
\begin{align*}
\hbmin(A|B)_\rho&=\sup_{\sigma_B\geq \rho_B} H(A|B)_{\fr{\rho}{\sigma}}\\
\hbmin^\varepsilon(A|B)_\rho&=\sup_{\substack{\bar{\rho}_{AB}: \|\bar{\rho}_{AB}-\rho_{AB}\|\leq \varepsilon\\
\tr(\bar{\rho}_{AB})\leq 1}}\hbmin(A|B)_{\bar{\rho}}\ 
\end{align*}
for any bipartite state $\rho_{AB}$ and $\varepsilon\geq 0$.  The only difference to the original definition of the (smooth) min-entropy $\hmin^\varepsilon(A|B)_\rho$ (Definition~\eqref{eq:minentropyconvdef}) is that the supremum is restricted to states $\sigma_B$ which are bounded from below by $\rho_B$. These quantities give the bounds
\begin{align}\label{eq:hbminhminbounds}
\hbmin^{2\varepsilon+\delta}(A|B)_\rho+2\log\textfrac{1}{\varepsilon}\geq \hmin^\delta(A|B)_\rho\geq \hbmin^\delta(A|B)_\rho\qquad\textrm{ for all }\varepsilon,\delta\geq 0
\end{align}
on the smooth min-entropy, for all states $\rho_{AB}$. Note that the second inequality follows trivially from the definition; we give a proof of the first inequality in Appendix~\ref{sec:entropypropertyappendix} (Lemma~\ref{lem:hbmin}).

We are ready to combine the recombination
theorem (Theorem~\ref{thm:recombining}) with the sampling
theorem (Theorem~\ref{thm:sampling}). This gives the following main result,
which shows that the min-entropy rate (for the modified entropy
$\hbmin^\varepsilon(A|B)_\rho$) is preserved under sampling.

\begin{lemma}\label{lem:mainresultsamplingentropy}
Consider a quantum state of the form $\rho_{X^nE}$ where $X^n=(X_1,\ldots,X_n)$ on $\cX^n$ is classical. 
Let $P_\cS$ be a probability distribution over subsets of $[n]$ which is a $(n,\xi,\delta,\varepsilon)$-\matrixsampler. Then
\begin{align}
\Pr_{\cS}\left[\frac{\hbmin^{2\sqrt{\delta}}(X_{\cS}|E)_\rho}{|\cS|\log|\cX|}\geq \frac{\hbmin(X^n|E)_\rho}{n\log|\cX|}-c\right]&\geq 1-\varepsilon\qquad\textrm{where }\label{eq:subsetsamplerfir}\\
c& = \xi+\frac{1}{m}+\frac{2n\log m}{|\cS|\log|\cX|}\ \nonumber
\end{align}
for any $m\in\mathbb{N}$.
In particular, inequality~\eqref{eq:subsetsamplerfir} is true for the choice
\begin{align*}
c=\xi+2\kappa\log\textfrac{1}{\kappa}\ 
\end{align*}
if $\kappa=\frac{n}{|\cS|\log |\cX|}\leq 0.15$.
\end{lemma}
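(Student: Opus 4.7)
The plan is to combine Theorems~\ref{thm:sampling} and~\ref{thm:recombining} by threading a single carefully chosen conditioning operator $\sigma_E$ through both. The payoff from working with $\hbmin$ rather than $\hmin$ is that its defining supremum is restricted to $\sigma_B \geq \rho_B$: so I would begin by fixing an operator $\sigma_E$ with $\sigma_E \geq \rho_E$ that attains $H(X^n|E)_{\fr{\rho}{\sigma}} = \hbmin(X^n|E)_\rho$. The crucial observation about this particular $\sigma_E$ is that $\sigma_E \geq \rho_E$ implies $\hcond{E}{\rho}{\sigma} \geq 0$, which lets me harmlessly drop the (nonnegative) second summand in the lower bound on $\lambda$ produced by Theorem~\ref{thm:sampling}. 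With this $\sigma_E$ and any $m \in \mathbb{N}$, I then instantiate the splitting procedure of Definition~\ref{def:splitstates} to produce the split states $\{\rho^{\alpha^n}\}$ and the associated probability distribution $\omega$ on $[m]^n$ from Lemma~\ref{lem:rhoalphajexpressions}.

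Feeding this $\omega$ into Theorem~\ref{thm:sampling} yields that, with probability at least $1-\varepsilon$ over the choice of $\cS$ drawn from the \matrixsampler, the set $\Gamma(\lambda,\cS)$ of $\lambda$-good paths carries $\omega$-weight at least $1-\delta$, with $\lambda \geq \frac{\hbmin(X^n|E)_\rho}{n\log|\cX|} - \frac{1}{m} - \xi$. For any such ``good'' $\cS$, I then apply Theorem~\ref{thm:recombining} (with the same $\sigma_E$) to obtain a subnormalised state $\bar\rho_{X^nER}$ at distance $\frac{1}{2}\|\bar\rho - \rho\| \leq \sqrt{\delta}$ from $\rho_{X^nER}$, satisfying $\hcond{E}{\bar\rho}{\sigma} \geq \hcond{E}{\rho}{\sigma} \geq 0$ and the entropy lower bound $\frac{H(X_\cS|E)_{\fr{\bar\rho}{\sigma}}}{|\cS|\log|\cX|} \geq \lambda - \frac{2n\log m}{|\cS|\log|\cX|}$.

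The crux is then to convert this conditional-operator entropy bound back into a smooth $\hbmin$. The inequality $\hcond{E}{\bar\rho}{\sigma} \geq 0$ unpacks precisely to $\bar\rho_E \leq \sigma_E$, so the same $\sigma_E$ chosen at the outset remains admissible in the inner supremum defining $\hbmin(X_\cS|E)_{\bar\rho}$; together with monotonicity of the $L_1$-distance under partial trace (which propagates the $\sqrt{\delta}$-bound from $\bar\rho_{X^nER}$ down to $\bar\rho_{X^nE}$) and the subnormalisation $\tr(\bar\rho) = \omega(\Gamma(\lambda,\cS)) \leq 1$, the state $\bar\rho_{X^nE}$ certifies $\hbmin^{2\sqrt{\delta}}(X_\cS|E)_\rho \geq H(X_\cS|E)_{\fr{\bar\rho}{\sigma}}$. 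Dividing by $|\cS|\log|\cX|$ and chaining the two preceding inequalities yields the first claim with $c = \xi + \frac{1}{m} + \frac{2n\log m}{|\cS|\log|\cX|}$.

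For the second part I simply optimise the free parameter $m$ in the formula $c = \xi + \frac{1}{m} + 2\kappa\log m$. A routine calculus argument shows that the real-valued minimiser is $m^\star = \frac{\ln 2}{2\kappa}$, at which the value is $2\kappa\log(1/\kappa) - O(\kappa)$; choosing $m$ to be an appropriate small integer (for instance $m = 2$ already suffices at the boundary $\kappa = 0.15$, with better integer choices available for smaller $\kappa$) delivers $c \leq \xi + 2\kappa\log(1/\kappa)$. I expect the main subtlety of the whole argument to lie not in any single ingredient, but rather in the discipline of using the \emph{same} $\sigma_E$ throughout: it is the lower bound $\hcond{E}{\bar\rho}{\sigma} \geq \hcond{E}{\rho}{\sigma}$ supplied by Theorem~\ref{thm:recombining} that guarantees the recombined state $\bar\rho$ has not ``drifted'' outside the feasible set for $\hbmin$, and this is what makes the argument go through for $\hbmin$ rather than only for a weaker variant with an unrestricted inner supremum.
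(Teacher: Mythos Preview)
Your proposal is correct and follows essentially the same route as the paper: fix an optimal $\sigma_E \geq \rho_E$ for $\hbmin(X^n|E)_\rho$, use $\hcond{E}{\rho}{\sigma} \geq 0$ to simplify the $\lambda$-bound in Theorem~\ref{thm:sampling}, feed the resulting $\Gamma(\lambda,\cS)$ into Theorem~\ref{thm:recombining}, and use $\hcond{E}{\bar\rho}{\sigma} \geq 0$ to certify that the same $\sigma_E$ is admissible for $\hbmin(X_\cS|E)_{\bar\rho}$. The only place your write-up is looser than the paper is the integer-rounding step for~$m$: the paper makes this precise by noting that $f(m)=\tfrac{1}{m}+2\kappa\log m$ satisfies $f(\tfrac{1}{2\kappa})=2\kappa\log\tfrac{1}{\kappa}$ exactly, and that for $\kappa\leq\tfrac{1-\ln 2}{2}\approx 0.15$ the interval $[m_{\min},\tfrac{1}{2\kappa}]$ has length at least~$1$ and hence contains an integer $m_0$ with $f(m_0)\leq f(\tfrac{1}{2\kappa})$ by monotonicity of $f$ past its minimum.
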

\begin{proof}
We first show the second part, assuming that the first statement is true. It is obtained by choosing a specific value of $m\in\mathbb{N}$. Consider the function $f(m):=\frac{1}{m}+2\kappa\log m$. We are interested in the minimum value of this function for $m\in\mathbb{N}$. It is easy to see that the function is minimised for $m_{\min}=\frac{\ln 2}{2\kappa}$, where $\ln$ denotes the natural logarithm. However, since this is not necessarily an integer, we use the value $f(\frac{1}{\ln 2}\cdot m_{\min})$. Clearly, for $\kappa$ small enough ($\kappa\leq \textfrac{1}{2}(1-\ln 2)\approx 0.15$), there is an integer $m_0\in [m_{\min}, \frac{1}{\ln 2}\cdot m_{\min}]$, and this integer satisfies
\begin{align*}
f(m_0)\leq f(\frac{m_{\min}}{\ln 2})=2\kappa\log\textfrac{1}{\kappa}\ .
\end{align*}
The second claim immediately follows from this.

We rephrase the first claim more explicitly: We have to show that the following holds with probability at least $1-\varepsilon$ over the choice of $\cS$. There is a subnormalised state $\bar{\rho}_{X^nER}$ (depending on $\cS$) at distance
\[
\frac{1}{2}\|\rho_{X^nER}-\bar{\rho}_{X^nER}\|\leq \sqrt{\delta}
\]
from the original state $\rho_{X^nER}$ and a state $\sigma_E\geq \rho_E$ (which happens to be independent of $\cS$) such that
\begin{align*}
H(X_\cS|E)_{\fr{\bar{\rho}}{\sigma}}&\geq
\frac{|\cS|}{n}\hbmin(\inc{X}{0}|E)_{\rho}-|\cS|\log|\cX|(\frac{1}{m}+\xi)-2n\log m\\
\hcond{E}{\bar{\rho}}{\sigma}&\geq 0\ .
\end{align*}
Let $\sigma_E$ be a state which achieves the supremum in the definition of $\hbmin(\inc{X}{0}|E)_{\fr{\rho}{\sigma}}$, i.e., we have $\hbmin(\inc{X}{0}|E)_{\rho}=H(\inc{X}{0}|E)_{\fr{\rho}{\sigma}}$ and $\sigma_E\geq \rho_E$. Then  $\hcond{E}{\rho}{\sigma}\geq 0$ by definition, and we can bound the quantity
$\lambda$ in Theorem~\ref{thm:sampling} by
\[ \lambda
\geq
  \frac{H(\inc{X}{0}|E)_{\fr{\rho}{\sigma}}}{n\log|\cX|}  - \bigl(\frac{1}{m}
  + \xi \bigr)\ . 
\] 
According to Theorem~\ref{thm:sampling}, the set $\Gamma(\lambda,\cS)$
of $\lambda$-good paths has weight at least $1-\delta$ with respect to
the distribution $\omega$ (defined by Theorem~\ref{thm:recombining}),
with probability at least $1-\varepsilon$ over the choice of
$\cS$. The recombination theorem (Theorem~\ref{thm:recombining})
therefore guarantees the existence of a state $\bar{\rho}_{X^nER}$
with the required properties, except with probability $\varepsilon$
over the choice of $\cS$.
\end{proof}

We can now prove our main result.

\begin{proof}[Proof of Theorem~\ref{thm:mainresultsampling}]
First observe that Lemma~\ref{lem:mainresultsamplingentropy} can be adapted using the triangle inequality to include an additional parameter $\gamma\geq 0$, thus
replacing the probability in question by
\[
\Pr_{\cS}\left[\frac{\hbmin^{2\sqrt{\delta}+\gamma}(X_{\cS}|E)_\rho}{|\cS|\log|\cX|}\geq \frac{\hbmin^\gamma(X^n|E)_\rho}{n\log|\cX|}-c\right]\geq 1-\varepsilon\ .
\]
Setting $\gamma=2\theta+\tau$ and $\Delta=c+\frac{2\log\textfrac{1}{\theta}}{n\log|\cX|}$, this implies that 
\[
\Pr_{\cS}\left[\frac{\hmin^{2\sqrt{\delta}+2\theta+\tau}(X_{\cS}|E)_\rho}{|\cS|\log|\cX|}\geq \frac{\hmin^\tau(X^n|E)_\rho}{n\log|\cX|}-\Delta\right]\geq 1-\varepsilon\ 
\]
because of  the relations~\eqref{eq:hbminhminbounds} between $\hbmin$ and $\hmin$. The claim of the theorem follows from this inequality and Lemma~\ref{lem:hminclassicalmarkov}.
\end{proof}

\section{Recursive sampling and the bounded storage model\label{sec:bsmapplicationdetailed}}
We will now consider the random subset sampler and analyse recursive sampling from a string. This will result in a concrete protocol for the bounded storage model which achieves significant key expansion.

 The basic building block is the subprotocol $\textsf{Samp}$ described in Figure~\ref{prot:sample}. This protocol outputs  a random substring of a given string $Z$.  The effect of protocol $\textsf{Samp}$ is the following.
\begin{figure}
\begin{myprotocol}{\textsf{Samp}$(Z,r,\cS)$}
\item  {\bf Input}:  $L$-bit string $Z$, parameter $r$ and  $\lceil\log\binom{rL^{\textfrac{1}{4}}}{r}\rceil$ independent random bits $\cS$.
\item {\bf Output}: $L^{\textfrac{3}{4}}$-bit substring of $Z$. 
\item {\bf Procedure}: Partition $Z$ into $n=\frac{L}{t}$ blocks $Z=(X_1,\ldots,X_n)$ of
$t=\frac{L^\textfrac{3}{4}}{r}$ bits each. Use random bits to pick a subset $\cS\subset [n]$ of size $|\cS|=r$ at random. Output $X_\cS$, i.e., the concatenation of the corresponding blocks.
 \end{myprotocol}
\caption{The subprotocol $\textsf{Samp}$. We slightly abuse notation by identifying the random bits with the subset we choose at random.\label{prot:sample}}
\end{figure}
\begin{lemma}\label{lem:sampprocedure}
Let $r$ be fixed, let $\rho_{ZE}$ be a quantum state where $Z$ is an $L$-bitstring with $L\geq r^4$. Let $\cS$ be an independent, uniform $\lceil\log\binom{rL^\textfrac{1}{4}}{r}\rceil$-bit string. (In particular, these are less than $r\log L$ bits.) Let $Z'$ be the $L^{\textfrac{3}{4}}$-bitstring $Z'=\textsf{Samp}(Z,r,\cS)$. Then
\begin{align*}
\hminrate^{\varepsilon'+\tau}(Z'|E\cS)_{\rho}\geq \hminrate^{\tau}(Z|E\cS)_{\rho}-5\frac{\log r}{r^{\textfrac{1}{4}}}\ ,
\end{align*}
for all $\tau\geq 0$, where $\varepsilon'=5\cdot 2^{-\textfrac{\sqrt{r}}{8}}$.
\end{lemma}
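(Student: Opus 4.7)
The plan is to instantiate Corollary~\ref{cor:samplersminentropypreservation} with the $\binom{[n]}{r}$--subset sampler, where the blocking is the one dictated by $\textsf{Samp}$. Concretely, set
\[
n=\frac{L}{t}=r\,L^{\textfrac{1}{4}},\qquad \log|\cX|=t=\frac{L^{\textfrac{3}{4}}}{r},\qquad |\cS|=r,
\]
so that $n\log|\cX|=L=H_0(Z)$ and $|\cS|\log|\cX|=L^{\textfrac{3}{4}}=H_0(Z')$. Thus the rates on the two sides of Corollary~\ref{cor:samplersminentropypreservation} are literally the rates $\hminrate^\tau(Z|E\cS)_\rho$ and $\hminrate^{\varepsilon'+\tau}(Z'|E\cS)_\rho$, using that $\cS$ is independent of $(Z,E)$ to fold in the extra classical register. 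The parameter $\kappa$ of the corollary works out to $\kappa=n/(|\cS|\log|\cX|)=r/\sqrt{L}\leq 1/r$, using $L\geq r^4$. Assuming $r$ is large enough that $1/r\leq 0.15$ (the few small values of $r$ can be handled trivially by the dimension bound $\hmin(Z'|E\cS)\geq -H_0(Z')$, which already satisfies the claim for an appropriate choice of the constant in front), the hypothesis $\kappa\leq 0.15$ holds.

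Next I would apply Lemma~\ref{lem:subsetsampler}: the $\binom{[n]}{r}$--subset sampler is a $(n,\xi,e^{-r\xi^2/2})$-sampler. The natural choice is $\xi=(\log r)/r^{\textfrac{1}{4}}$, so that $\varepsilon=e^{-r\xi^2/2}=e^{-\sqrt{r}(\log r)^2/2}$. Plugging into the corollary, the accuracy loss bounds as
\[
3\xi+2\kappa\log\tfrac{1}{\kappa}\;\leq\;\frac{3\log r}{r^{\textfrac{1}{4}}}+\frac{2\log r}{r}\;\leq\;\frac{5\log r}{r^{\textfrac{1}{4}}},
\]
which matches the RHS of the lemma.

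For the smoothness parameter $\varepsilon'=2\cdot 2^{-\xi n\log|\cX|}+3\varepsilon^{\textfrac{1}{4}}$, I would check the two summands separately. First, $\xi\, n\log|\cX|=\xi L\geq \xi r^4=r^{15/4}\log r\geq \sqrt{r}/8$ for any $r\geq 1$, so $2\cdot 2^{-\xi L}\leq 2\cdot 2^{-\sqrt{r}/8}$. Second, $\varepsilon^{\textfrac{1}{4}}=e^{-\sqrt{r}(\log r)^2/8}$, and since $(\log r)^2\geq \ln 2$ for $r\geq 2$, this is at most $e^{-\sqrt{r}\ln 2/8}=2^{-\sqrt{r}/8}$. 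Summing gives $\varepsilon'\leq 5\cdot 2^{-\sqrt{r}/8}$, as required.

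The proof is essentially routine parameter-matching once one identifies the correct block decomposition; the only genuine (and easy) obstacle is verifying the two inequalities $\varepsilon^{\textfrac{1}{4}}\leq 2^{-\sqrt{r}/8}$ and $\xi L\geq \sqrt{r}/8$ with the concrete choice of $\xi$, and handling the finitely many small values of $r$ (for which $\kappa\leq 0.15$ fails) with a trivial dimension-bound argument so that the lemma remains valid in the stated form for all $r$.
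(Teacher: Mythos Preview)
Your proposal is correct and follows essentially the same route as the paper: instantiate Corollary~\ref{cor:samplersminentropypreservation} with the $\binom{[n]}{r}$-subset sampler for the block decomposition dictated by $\textsf{Samp}$, then bound the two error terms. The only difference is the choice of accuracy parameter: the paper takes $\xi=r^{-1/4}$ rather than your $\xi=(\log r)\,r^{-1/4}$, but both choices give $3\xi+2\kappa\log\tfrac{1}{\kappa}\leq 5(\log r)/r^{1/4}$ and $\varepsilon'\leq 5\cdot 2^{-\sqrt{r}/8}$, so this is immaterial. (The paper does not separately treat small $r$; as you observe, the bound is vacuous there anyway.)
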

Note that we could have used any $(n,\xi,\varepsilon)$-sampler in place of the subset sampler. However, for concreteness and simplicity, we restrict our attention to this sampler; in practice, more efficient constructions may be used, and the analysis is analogous.
\begin{proof}
To prove the bound $r\log L$ on the number of bits consumed, we use the inequality $\binom{p}{q}\leq \left(\frac{p e}{q}\right)^q$ on the binomial coefficients. It implies that
\begin{align*}
\lceil \log \binom{rL^{\textfrac{1}{4}}}{r}\rceil\leq \lceil\log (L^{\textfrac{1}{4}}e)^r\rceil\leq r(\frac{\log L}{4}+\log e)\leq r\log L\ .
\end{align*}

We express everything in terms of the number $r$ of subblocks we sample, and the length $L^{\textfrac{3}{4}}$ of the final string. That is, we sample $r<n$ subblocks from $n=\frac{L}{t}=L^\textfrac{1}{4}r$ blocks of size $t=\frac{L^{\textfrac{3}{4}}}{r}$ bits each, obtaining a substring of  $rt=L^{\textfrac{3}{4}}$ bits.

We know from Lemma~\ref{lem:subsetsampler} that a randomly chosen subset $\cS$ of size $r<n$ is a $(n,\xi,e^{-\textfrac{r\xi^2}{2}})$-sampler, for every $\xi\in [0,1]$. We choose 
$\xi=\frac{1}{r^\textfrac{1}{4}}$ such that $e^{-\textfrac{r\xi^2}{2}}=e^{-\textfrac{\sqrt{r}}{2}}$. The parameter $\varepsilon'$ in Corollary~\ref{cor:samplersminentropypreservation} then takes the form 
\begin{align*}
\varepsilon'&=2\cdot 2^{-\xi n\log|\cX|}+3(e^{-\textfrac{\sqrt{r}}{2}})^{\textfrac{1}{4}}\\
&=2\cdot 2^{-\textfrac{L}{r^{\textfrac{1}{4}}}}+3e^{-\textfrac{\sqrt{r}}{8}}\\
&\leq 5\cdot 2^{-\textfrac{\sqrt{r}}{8}}\ .
\end{align*}
Similarly, we have $\kappa =\frac{r}{L^{\textfrac{1}{2}}}$.  
Because the function $\kappa\mapsto 2\kappa\log\textfrac{1}{\kappa}$ is monotonically increasing for small enough $\kappa$, its value is maximised for  small values of $L$; that is, we can use our lower bound  $L\geq r^4$ on $L$ get
\begin{align*}
2\kappa\log\textfrac{1}{\kappa}&=2\frac{r}{L^{\textfrac{1}{2}}}\log \frac{L^{\textfrac{1}{2}}}{r}\leq 2\frac{\log r}{r}\ .
\end{align*}
We conclude that 
\begin{align*}
3\xi+2\kappa\log\textfrac{1}{\kappa}\leq \frac{3}{r^{\textfrac{1}{4}}}+\frac{2}{r}\log r\leq 5\frac{\log r}{r^{\textfrac{1}{4}}}\ .
\end{align*}
The claim then follows from Corollary~\ref{cor:samplersminentropypreservation}.
\end{proof}

Note that the length $L$ is only reduced to $L^{\textfrac{3}{4}}$ by the protocol $\textsf{Samp}$. To reduce the length of the output even further, we use the protocol recursively. That is, we randomly sample substrings $f$ times, each time sampling a substring of the already obtained string. In each
step, the original string is partitioned into a certain number of
blocks, out of which $r$ are chosen at random (throughout, $r$ will be
a fixed parameter). We call the resulting protocol $\textsf{ReSamp}$; see Figure~\ref{fig:itsampler}.  

To understand the effect of this recursive protocol, observe that the quantities in Lemma~\ref{lem:sampprocedure} describing the effect of $\textsf{Samp}$ are all additive in the following sense: repeated application of $\textsf{Samp}$ simply requires addition of the parameters. Moreover, with the chosen parameters, only the number of bits consumed depends on the length of the involved bitstrings. The analysis is therefore particularly simple, and the effect of the procedure~$\textsf{ReSamp}$ is described by the following lemma.
\begin{lemma}
Let $\rho_{ZE}$ be such that $Z$ is an $L$-bit string.
Let $f$ and $L$ be such that $L^{(\textfrac{3}{4})^f}\geq r^4$. Let $\cS$ be independent random bits and let $Z'=\textsf{ReSamp}(Z,f,r,\cS)$. Then
$Z'$ is a $L^{(\textfrac{3}{4})^f}$-bit substring of $Z$, with
\begin{align}\label{eq:recursiveresolved}
\hminrate^{\varepsilon+\gamma}(Z'|E\cS)_{\rho}\geq \hminrate^{\varepsilon}(Z|E\cS)_{\rho}-5f\frac{\log r}{r^{\textfrac{1}{4}}}\ 
\end{align}
where $\gamma=5f\cdot 2^{-\textfrac{\sqrt{r}}{8}}$. The generation of $Z'$ 
consumes less than $f r\log L$ independent random bits from $\cS$.

In particular, if $L=2^r$, then approximately $f\approx \frac{1}{\log \textfrac{4}{3}}\log\left(\frac{r}{4\log r}\right)$ applications of the subprotocol~$\textsf{Samp}$ are sufficient to produce a substring $Z'$ of $Z$ of length $\lesssim r^4$, while preserving the min-entropy-rate (for large~$r$). This consumes less than $r^3$~independent random bits.
\end{lemma}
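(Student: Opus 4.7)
The plan is to prove the three parts of the lemma by straightforward induction on $f$, using Lemma~\ref{lem:sampprocedure} as the black-box one-step bound. Define the intermediate strings $Z^{(0)} := Z$ and $Z^{(i)} := \textsf{Samp}(Z^{(i-1)}, r, \cS_i)$ for $i = 1, \ldots, f$, where the $\cS_i$ are the fresh blocks of random bits fed to each recursive call. Writing $L_i$ for the length of $Z^{(i)}$, the recursion $L_i = L_{i-1}^{3/4}$ gives $L_i = L^{(3/4)^i}$, so $Z' = Z^{(f)}$ has length $L^{(3/4)^f}$ as claimed.

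The key step is to verify the hypotheses of Lemma~\ref{lem:sampprocedure} at each level of the recursion. The condition $L_{i-1} \geq r^4$ needed to apply the lemma at step $i$ reduces to $L_i = L_{i-1}^{3/4} \geq r^3$, which follows from the assumption $L^{(3/4)^f} \geq r^4$ since $L_i \geq L_f \geq L^{(3/4)^f}$. Having checked admissibility, a single application of the lemma at step $i$ yields
\[
\hminrate^{\varepsilon_i + \tau}(Z^{(i)} | E \cS_1 \cdots \cS_i)_\rho \geq \hminrate^{\tau}(Z^{(i-1)} | E \cS_1 \cdots \cS_{i-1})_\rho - \frac{5 \log r}{r^{1/4}}
\]
with $\varepsilon_i \leq 5 \cdot 2^{-\sqrt{r}/8}$, where we used independence of $\cS_i$ from $(Z^{(i-1)}, E, \cS_1, \ldots, \cS_{i-1})$ to justify treating the earlier seed data as part of the side information. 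Chaining these $f$ inequalities and applying the triangle inequality to accumulate the smoothness parameters gives~\eqref{eq:recursiveresolved} with total smoothness budget $\gamma = 5 f \cdot 2^{-\sqrt{r}/8}$ and total rate loss $5 f \log r / r^{1/4}$. The randomness budget bound follows by summing: step $i$ consumes at most $r \log L_{i-1} \leq r \log L$ bits by Lemma~\ref{lem:sampprocedure}, hence at most $f r \log L$ bits in total.

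For the concrete setting $L = 2^r$, the only remaining task is to solve $L^{(3/4)^f} \lesssim r^4$ for $f$: taking logarithms twice yields $(3/4)^f r \lesssim 4 \log r$, i.e. $(4/3)^f \gtrsim r / (4 \log r)$, which gives $f \approx \log(r / (4 \log r)) / \log(4/3)$ as stated. Plugging this and $\log L = r$ into the randomness count yields $f r \log L = f r^2 = O(r^2 \log r)$, well below $r^3$ for large $r$.

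I do not anticipate a serious obstacle: the one-step lemma has been engineered precisely so that the relevant parameters (entropy-rate loss, smoothness, seed length) add up cleanly under composition, and the only nontrivial bookkeeping is the verification that the length threshold $L_i \geq r^4$ survives down to the last level, which is exactly the hypothesis imposed on $f$ and $L$. The mild point to be careful about is ensuring that the seed variables $\cS_1, \ldots, \cS_f$ remain jointly independent and can be absorbed into the conditioning system $E$ at every step; this is immediate from the protocol description, but it must be stated explicitly for the inductive hypothesis to close.
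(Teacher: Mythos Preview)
Your proposal is correct and follows essentially the same route as the paper: iterate Lemma~\ref{lem:sampprocedure} over the $f$ levels, accumulate the smoothness parameters additively, and bound the seed length by $f$ times the worst single-step cost $r\log L$. You are in fact slightly more careful than the paper in explicitly verifying the precondition $L_{i-1}\geq r^4$ at every level (the paper silently relies on the hypothesis $L^{(3/4)^f}\geq r^4$), though the detour through $L_i\geq r^3$ is unnecessary—$L_{i-1}\geq L_f\geq r^4$ suffices directly.
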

\begin{figure}
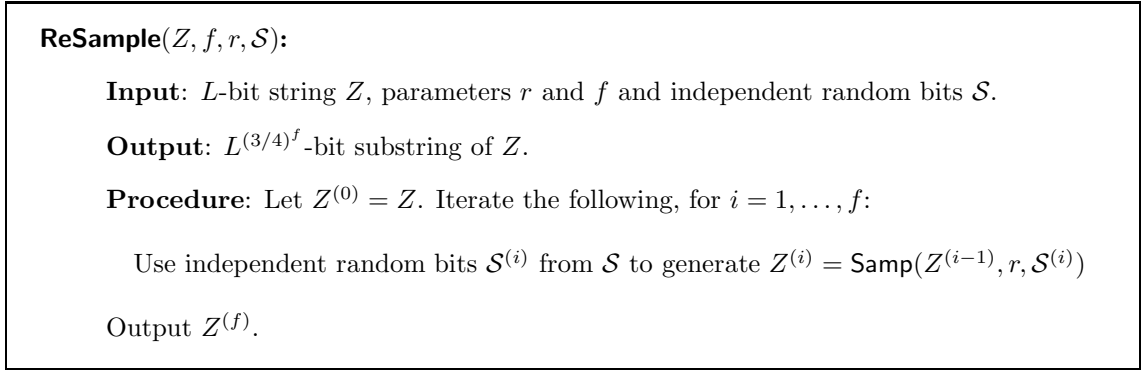

\begin{myprotocol}{\textsf{ReSample}$(Z,f,r,\cS)$}
\item  {\bf Input}:  $L$-bit string $Z$, parameters $r$ and $f$ and independent random bits $\cS$.
\item {\bf Output}: $L^{(\textfrac{3}{4})^f}$-bit substring of $Z$. 
\item {\bf Procedure}: Let $Z^{(0)}=Z$. Iterate the following, for $i=1,\ldots,f$:
\begin{center}
Use independent random bits $\cS^{(i)}$ from $\cS$ to generate $Z^{(i)}=\textsf{Samp}(Z^{(i-1)},r,\cS^{(i)})$
\end{center}
Output $Z^{(f)}$.
 \end{myprotocol}
\caption{The protocol $\textsf{ReSamp}$. It calls the subprotocol $\textsf{Sample}$ $f$~times, each time producing a substring of the already generated string. We will determine the amount of randomness this protocol needs below. Note that the output of this recursive protocol can be computed with limited storage. In particular, it is unnecessary to store the intermediate substrings $Z^{(i)}$.\label{fig:itsampler}}
\end{figure}

\begin{proof}
Let $L^{(i)}$ be the length of the string $Z^{(i)}$; by the definition of the protocol $\textsf{Samp}$, we have $L^{(i)}=(L^{(i-1)})^{\textfrac{3}{4}}$ and hence $L^{(f)}=L^{(\textfrac{3}{4})^f}$, as claimed.
Let $\cS^{(i)}$ be the random bits from $\cS$ used to generate $Z^{(i)}$.
 According to Lemma~\ref{lem:sampprocedure}, we have for all $i=1,\ldots,f$
\begin{align*}
\hminrate^{\varepsilon^{(i-1)}+\delta}(Z^{(i)}|E\cS^{(1)}\cdots \cS^{(i)})_\rho\geq \hminrate^{\varepsilon^{(i-1)}}(Z^{(i-1)}|E\cS^{(1)}\cdots \cS^{(i-1)})_\rho-5\frac{\log r}{r^{\textfrac{1}{4}}}\ ,
\end{align*}
where $\delta=4\cdot2^{-\textfrac{\sqrt{r}}{8}}$ and $\varepsilon^{(i-1)}$ is arbitrary. Defining
$\varepsilon^{(0)}=\varepsilon$ and $\varepsilon^{(i)}=\varepsilon^{(i-1)}+\delta$, this implies the claim~\eqref{eq:recursiveresolved}.

To compute the number of random bits consumed in this procedure\footnote{Note
that we are only interested in the approximate number of bits we need;
see~\cite{Dingetal04} for a dense encoding of subsets of $[n]$ into
bitstrings.}, observe that in the $i$-th step, a random subset $\cS^{(i)}$ of $[n^{(i)}]$ of size $r$ is chosen, where $n^{(i)}=r(L^{(i-1)})^{\textfrac{1}{4}}$; this consumes $\lceil\binom{n^{(i)}}{r}\rceil$ bits.

The total number of bits we need is bounded by the number $f$ of applications times the maximal number of bits 
$\max_{n^{(i)}}\lceil \log \binom{n^{(i)}}{r}\Big\rceil=\lceil \log \binom{rL^{\textfrac{1}{4}}}{r}\Big\rceil\leq r\log L$ consumed in a single step. (Here we used the fact that 
 $n^{(i)}$ is a decreasing sequence and the bound on the binomial coefficient shown in Lemma~\ref{lem:sampprocedure}.) This gives the upper bound $fr\log L$, as claimed.
\end{proof}

In summary, we have found a procedure that generates a random substring of a string of $2^r$ bits, with the following parameters:
\begin{center}
\begin{tabular}{|l|c|c|c|}
\hline
 & original $Z$ & substring $Z'$ & seed $\cS$\\
\hline
length (bits) & $2^r$ & $r^4$ & $r^3$\\
entropy-rate & $R$ (arbitrary) & $R-\textfrac{1}{r^{\Omega(1)}}$ &\\
\hline
\end{tabular}
\end{center}
with error $poly(\log(r))e^{-\Omega(\sqrt{r})}$. It is computable with $poly(r)$ bits of storage. Subsequent to this sampling procedure, privacy amplification may be used to extract a secret key from the substring~$Z'$. In conclusion, this gives a sample-and-hash procedure for key expansion in the bounded storage model, expanding an initial key of $r^3$ bits to  approximately $r^4(R-\textfrac{1}{r^{\Omega(1)}})$ bits. ($R\in [0,1]$ is usually assumed to be constant.)

\appendix

\section{Additional proofs related to entropies~\label{sec:entropypropertyappendix}}
\begin{proof}[Proof of Lemma~\ref{lem:hminclassicalmarkov}]
Let $\rho_{YZE}=\sum_y P_Y(y)\proj{y}\otimes\rho_{ZE|Y=y}$. Let 
\[
\cG:=\left\{y\in\cY\ |\ \hmin^\delta(Z|E,Y=y)\geq k\right\}\ .
\]
For every $y\in\cG$, there is a subnormalised state $\bar{\rho}^y_{ZE}$ and a state $\sigma^y_E$ such that
\begin{align*}
\bar{\rho}^y_{ZE}&\leq 2^{-k}\sigma^y_E\\
\|\bar{\rho}^y_{ZE}-\rho_{ZE|Y=y}\|&\leq \delta\ 
\end{align*}
by definition.
For every $y\not\in\cG$, we choose arbitrary states $\bar{\rho}^y_{ZE}$ and $\sigma^y_E$ satisfying
\begin{align*}
\bar{\rho}^y_{ZE}\leq 2^{-k}\sigma^y_E\ .
\end{align*}
It is easy to verify that the two states
\begin{align*}
\bar{\rho}_{YZE}&=\sum_{y} P_Y(y) \proj{y}\otimes\bar{\rho}^y_{ZE}\\
\sigma_{YE}&=\sum_{y} P_Y(y)\proj{y}\otimes\sigma^y_E
\end{align*}
satisfy
\begin{align*}
\bar{\rho}_{YZE}\leq 2^{-k}\bar{\sigma}_{YE}\\
\|\bar{\rho}_{YZE}-\rho_{YZE}\|\leq \delta+\varepsilon\ .
\end{align*}
The claim follows from this.
\end{proof}
We next prove the non-trivial inequality in~\eqref{eq:hbminhminbounds}.

\begin{lemma}\label{lem:hbmin}
Let $\rho_{AB}$ be a subnormalised state. We have
\begin{align}\label{eq:hbminlowerbound}
\hbmin^{2\varepsilon+\delta}(A|B)_\rho\geq \hmin^{\delta}(A|B)_\rho-2\log\textfrac{1}{\varepsilon}\ .
\end{align}
\end{lemma}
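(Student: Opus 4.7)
Unpacking the two definitions, the only structural difference is that $\hbmin$ restricts the ancillary operator $\sigma_B$ to trace-one operators that additionally dominate $\rho_B$, whereas $\hmin$ allows any trace-one $\sigma_B$. The task is therefore to take a near-optimal smoothed $\bar\rho_{AB}$ and trace-one $\sigma_B$ witnessing $\hmin^\delta(A|B)_\rho\geq k$ (i.e.\ $\bar\rho_{AB}\leq 2^{-k}\sigma_B$), and convert $\sigma_B$ into a \emph{dominating} trace-one operator at the cost of a small smoothing and an additive entropy loss. My plan is to mix $\sigma_B$ with a slightly shrunk $\bar\rho_B$.

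\textbf{Construction.} Pick $\bar\rho_{AB}$ with $\|\bar\rho-\rho\|\leq\delta$, $\tr\bar\rho\leq 1$ and a trace-one $\sigma_B$ achieving (up to an arbitrarily small slack) $\bar\rho_{AB}\leq 2^{-k}\sigma_B$ for $k=\hmin^\delta(A|B)_\rho$. Define
\[
\tilde\rho_{AB}:=(1-2\varepsilon)\bar\rho_{AB},\qquad
\tilde\sigma_B:=\tilde\rho_B+(1-\tr\tilde\rho_B)\,\sigma_B.
\]
Then $\tilde\rho_{AB}$ is nonnegative and subnormalised; by the triangle inequality,
\[
\|\tilde\rho-\rho\|\leq\|\tilde\rho-\bar\rho\|+\|\bar\rho-\rho\|\leq 2\varepsilon\,\tr\bar\rho+\delta\leq 2\varepsilon+\delta.
\]
By construction $\tilde\sigma_B\geq\tilde\rho_B$ and $\tr\tilde\sigma_B=\tr\tilde\rho_B+(1-\tr\tilde\rho_B)=1$, so $\tilde\sigma_B$ is a valid witness for $\hbmin(A|B)_{\tilde\rho}$.

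\textbf{Entropy bound.} Since $\tr\tilde\rho_B\leq 1-2\varepsilon$, we have $1-\tr\tilde\rho_B\geq 2\varepsilon$, hence $\tilde\sigma_B\geq 2\varepsilon\,\sigma_B$ and thus $\sigma_B\leq(2\varepsilon)^{-1}\tilde\sigma_B$. Plugging into the operator inequality for $\bar\rho$,
\[
\tilde\rho_{AB}=(1-2\varepsilon)\bar\rho_{AB}\leq (1-2\varepsilon)\,2^{-k}\sigma_B\leq \frac{1-2\varepsilon}{2\varepsilon}\,2^{-k}\tilde\sigma_B\leq \frac{1}{2\varepsilon}\,2^{-k}\tilde\sigma_B,
\]
so $H(A|B)_{\tilde\rho/\tilde\sigma}\geq k-\log\tfrac{1}{2\varepsilon}=k+1-\log\tfrac{1}{\varepsilon}\geq k-2\log\tfrac{1}{\varepsilon}$ (the last inequality holds for all $\varepsilon\leq 2$). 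Combining with $\|\tilde\rho-\rho\|\leq 2\varepsilon+\delta$ yields $\hbmin^{2\varepsilon+\delta}(A|B)_\rho\geq k-2\log(1/\varepsilon)$, and taking $k\uparrow\hmin^\delta(A|B)_\rho$ gives \eqref{eq:hbminlowerbound}.

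\textbf{Main obstacle.} There is no real technical obstacle once the mixing trick is seen; the only thing to get right is ensuring $\tilde\sigma_B$ is simultaneously trace-one and lies above $\tilde\rho_B$ while still controlling $\sigma_B$ from above in terms of $\tilde\sigma_B$. The shrinking factor $1-2\varepsilon$ for $\bar\rho$ is what guarantees the slack $1-\tr\tilde\rho_B\geq 2\varepsilon$, which in turn produces the $2\varepsilon$ lower bound on $\tilde\sigma_B/\sigma_B$ and hence the $2\log(1/\varepsilon)$ loss claimed.
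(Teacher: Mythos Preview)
Your proof is correct and takes a genuinely different---and noticeably simpler---route from the paper's.

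The paper argues as follows (for $\delta=0$): given the optimal $\sigma_B$ for $\hmin$, it studies the conditional operator $\sigma_B/\rho_B$ and lets $P_B$ be the spectral projector onto its eigenspaces with eigenvalue at most $1/\varepsilon^2$. The smoothed state is then $\bar\rho_{AB}=\rho_B^{1/2}P_B\rho_B^{-1/2}\rho_{AB}\rho_B^{-1/2}P_B\rho_B^{1/2}$, for which $\bar\rho_{AB}\le 2^{-\hmin}\varepsilon^{-2}\rho_B$ and $\bar\rho_B\le\rho_B$; the distance bound $\tfrac12\|\bar\rho-\rho\|\le\varepsilon$ is obtained via purifications and a Cauchy--Schwarz/gentle-measurement estimate on $\tr(P_B^\perp\rho_B)$. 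The dominating witness for $\hbmin$ is thus $\rho_B$ itself, and the entropy loss $2\log(1/\varepsilon)$ comes from the eigenvalue cutoff.

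Your argument sidesteps the spectral truncation and purification altogether: shrinking $\bar\rho$ by $(1-2\varepsilon)$ creates trace-room, and the convex combination $\tilde\sigma_B=\tilde\rho_B+(1-\tr\tilde\rho_B)\sigma_B$ is automatically a state dominating $\tilde\rho_B$ while retaining a $2\varepsilon$-multiple of $\sigma_B$. This is more elementary (no eigenprojectors, no purifications, just two operator inequalities and the triangle inequality) and in fact yields the sharper intermediate bound $k+1-\log(1/\varepsilon)$, which you then relax. A minor caveat: your construction as written needs $\varepsilon\le 1/2$ so that $1-2\varepsilon\ge 0$; for $\varepsilon>1/2$ you can simply take $\tilde\rho=0$, for which the claim is trivial. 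What the paper's approach buys is the extra structural information $\bar\rho_B\le\rho_B$ (the smoothed marginal is dominated by the original), though this is not used elsewhere in the paper.
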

\begin{proof}
Clearly, it suffices to show that the claim of the lemma is true for $\delta=0$. 

By definition, there is a normalised state $\sigma_B$ such that
\begin{align*}
\rho_{AB}\leq 2^{-\hmin(A|B)_\rho}\sigma_B\  .
\end{align*}
This implies that
\begin{align*}
\fr{\rho_{AB}}{\rho_B}\leq 2^{-\hmin(A|B)_\rho}\fr{\sigma_B}{\rho_B}\  .
\end{align*}
Let $P_B$ denote the projector onto the eigenspaces of $\fr{\sigma_B}{\rho_B}$ corresponding to eigenvalues smaller than or equal to $\textfrac{1}{\varepsilon^2}$. Applying this on both sides of the previous inequality gives (with $P_B\leq \id_B$)
\begin{align*}
P_B\fr{\rho_{AB}}{\rho_B}P_B\leq \frac{2^{-\hmin(A|B)_\rho}}{\varepsilon^2} \id_B\ .
\end{align*}
Multiplying from both sides by $\rho_B^\half$, we obtain
\begin{align}
\bar{\rho}_{AB}\leq  \frac{2^{-\hmin(A|B)_\rho}}{\varepsilon^2} \rho_B\ ,\label{eq:hbminl}
\end{align}
where we introduced the operator $\bar{\rho}_{AB}=\rho_B^\half P_B\fr{\rho_{AB}}{\rho_B}P_B\rho_B^\half$. We claim that
\begin{align}
\bar{\rho}_{B}\leq \rho_B\label{eq:barrhob}\ ,\\
\tr(\bar{\rho}_{AB})\leq 1\label{eq:barrhosv}\textrm{ and }\\
\frac{1}{2}\|\bar{\rho}_{AB}-\rho_{AB}\|\leq \varepsilon\ .\label{eq:barrhobsec}
\end{align}
Note that~\eqref{eq:hbminl}--\eqref{eq:barrhobsec} imply the claim~\eqref{eq:hbminlowerbound} (for $\delta=0$).

Inequality~\eqref{eq:barrhob} directly follows from the fact that $\tr_A(\fr{\rho_{AB}}{\rho_B})=\fr{\rho_B}{\rho_B}\leq \id_B$. To prove~\eqref{eq:barrhobsec}, let $\ket{\Psi_{ABC}}$ be a purification of $\rho_{AB}$ and consider the purification
\begin{align*}
\ket{\bar{\Psi}_{ABC}}=\rho_B^\half P_B\rho_B^\mhalf\ket{\Psi_{ABC}}
\end{align*}
of $\bar{\rho}_{AB}$. Using the Schmidt decomposition 
$\ket{\Psi_{ABC}}=\sum_{\lambda}\sqrt{\lambda}\ket{\lambda}_{AC}\ket{\lambda}_B$, it is straightforward to verify that 
\begin{align}
\spr{\Psi_{ABC}}{\bar{\Psi}_{ABC}}=\tr(P_B\rho_B)\qquad\textrm{ and }\qquad\spr{\bar{\Psi}_{ABC}}{\bar{\Psi}_{ABC}}\leq \tr(P_B\rho_B)\leq \tr(\rho_B)\  .\label{eq:Psioverlaps}
\end{align}
Note that the latter of these inequalities proves~\eqref{eq:barrhosv}.
The Cauchy-Schwarz inequality $\sum_{i=1}^d |\lambda_i|\leq \sqrt{d}\sqrt{\sum_{i=1}^d \lambda_i^2}$ implies that for any two pure states $\ket{\chi}$,$\ket{\varphi}$, we have
\begin{align*}
\big\|\proj{\varphi}-\proj{\chi}\big\|\leq \sqrt{2}\big\|\proj{\varphi}-\proj{\chi}\big\|_2
=\sqrt{2}\sqrt{|\spr{\varphi}{\varphi}|^2-2|\spr{\varphi}{\chi}|^2+|\spr{\chi}{\chi}|^2}\ ,
\end{align*}
where $\|A\|_2=\sqrt{\tr(A^\dagger A)}$, 
since the difference $\proj{\varphi}-\proj{\chi}$ has rank at most~$2$. Applying this to $\ket{\Psi_{ABC}}$ and $\ket{\bar{\Psi}_{ABC}}$ and using~\eqref{eq:Psioverlaps} gives
\begin{align*}
\bigl\|\proj{\bar{\Psi}_{ABC}}-\proj{\Psi_{ABC}}\bigr\|&\leq \sqrt{2}\sqrt{\tr(\rho_B)^2-\tr(P_B\rho_B)^2}\\
&=\sqrt{2}\sqrt{\left(\tr(\rho_B)-\tr(P_B\rho_B)\right)\left(\tr(\rho_B)+\tr(P_B\rho_B)\right)}\\
&\leq 2\sqrt{\tr(P_B^\bot \rho_B)}\ .
\end{align*}
Here $P_B^\bot=\id_B-P_B$ projects onto the orthogonal complement of the image of $P_B$. In the last inequality, we have used the  assumption that $\rho_{AB}$ is subnormalised and thus $\tr(\rho_B)\leq 1$.
Since the trace distance is non-increasing under partial traces, the claim~\eqref{eq:barrhobsec} follows once we show that
\begin{align*}
\tr(P_B^\bot\rho_B)\leq \varepsilon^2\ .
\end{align*}
Note that we have $\frac{1}{\varepsilon^2}P_B^\bot\leq P_B^\bot \fr{\sigma_{B}}{\rho_B}P_B^\bot$ by definition. Inserting this into the expression of interest gives
\begin{align*}
\tr(P_B^\bot\rho_B)&\leq \varepsilon^2 \tr\left(P_B^\bot \fr{\sigma_{B}}{\rho_B}P_B^\bot\rho_B\right)\leq\varepsilon^2\tr\left(\fr{\sigma_B}{\rho_B}\rho_B\right) \leq \varepsilon^2\tr(\sigma_B) =\varepsilon^2\ ,
\end{align*}\
as claimed.

\end{proof}

\section{Additional lemmas\label{sec:additional}}

\begin{lemma}\label{lem:alphabetmodifiedlem}
  Let $\{Q^\alpha\}_{\alpha\in [m]}$ be a family of Hermitian operators
  on a Hilbert space $A\otimes B$ and suppose that $\tr_B(Q^\alpha \rho_{AB} Q^\alpha)\leq \sigma_A$, for any $\alpha \in [m]$.
  Then
  \begin{align*}
    \tr_B(Q \rho_{AB} Q) \leq m^2 \sigma_A
  \end{align*}
  for $Q := \sum_\alpha Q^\alpha$.  In particular, if $\{Q^\alpha\}_{\alpha\in [m]}$ resolves the
  identity on $\supp(\rho_{AB})$ then $\rho_A \leq m^2 \sigma_A$.
\end{lemma}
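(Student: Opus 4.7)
The plan is to deduce the statement from a straightforward operator Cauchy--Schwarz inequality combined with a linearity-of-trace argument. Specifically, I will use the fact that, for any finite family of operators $\{X^\alpha\}_{\alpha\in[m]}$ on the same Hilbert space, one has
\[
\Bigl(\sum_{\alpha\in[m]} X^\alpha\Bigr)\Bigl(\sum_{\alpha\in[m]} X^\alpha\Bigr)^\dagger \;\leq\; m\,\sum_{\alpha\in[m]} X^\alpha (X^\alpha)^\dagger\ ,
\]
which is a standard ``$2xy\le x^2+y^2$''-type operator inequality that follows, e.g., from expanding $\sum_{\alpha\neq\beta}(X^\alpha-X^\beta)(X^\alpha-X^\beta)^\dagger\geq 0$.

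I would apply this inequality with $X^\alpha:=Q^\alpha \rho_{AB}^{1/2}$. Since each $Q^\alpha$ is Hermitian, $X^\alpha (X^\alpha)^\dagger = Q^\alpha \rho_{AB} Q^\alpha$, and similarly the sum gives $Q\rho_{AB}Q$. Hence
\[
Q\,\rho_{AB}\,Q \;\leq\; m\,\sum_{\alpha\in[m]} Q^\alpha \rho_{AB} Q^\alpha\ .
\]
Taking the partial trace over $B$, which preserves operator inequalities, and using the hypothesis $\tr_B(Q^\alpha\rho_{AB}Q^\alpha)\leq \sigma_A$ for each $\alpha$, yields
\[
\tr_B(Q\rho_{AB}Q) \;\leq\; m\sum_{\alpha\in[m]} \tr_B(Q^\alpha\rho_{AB}Q^\alpha) \;\leq\; m\cdot m\,\sigma_A \;=\; m^2\sigma_A\ ,
\]
which is the first claim.

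For the ``in particular'' part, assume $Q=\sum_\alpha Q^\alpha$ restricts to the identity on $\supp(\rho_{AB})$. Writing $P$ for the projector onto $\supp(\rho_{AB})$, we have $\rho_{AB}=P\rho_{AB}P$ and $QP=PQ=P$, hence $Q\rho_{AB}Q = P\rho_{AB}P=\rho_{AB}$. Substituting into the inequality just established and using $\tr_B(\rho_{AB})=\rho_A$ gives $\rho_A\leq m^2\sigma_A$, as desired.

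The argument is essentially a one-step application of a well-known operator inequality, so no serious obstacle is expected; the only minor subtlety is justifying that $Q\rho_{AB}Q=\rho_{AB}$ in the special case, which follows immediately once the identity-resolution property is properly interpreted on the support of $\rho_{AB}$.
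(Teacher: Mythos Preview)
Your proof is correct. Both your argument and the paper's rest on a Cauchy--Schwarz-type bound, but they apply it at different levels. The paper tests $\tr_B(Q\rho_{AB}Q)$ against an arbitrary rank-one projector $\proj{\varphi_A}$, expands the resulting scalar as the double sum $\sum_{\alpha,\beta}\tr(Z^\alpha(Z^\beta)^\dagger)$ with $Z^\alpha=(\proj{\varphi_A}\otimes\id_B)Q^\alpha\rho_{AB}^{1/2}$, and bounds each cross term via the Hilbert--Schmidt Cauchy--Schwarz inequality $\tr(EF)\le\sqrt{\tr(E^\dagger E)\tr(F^\dagger F)}$; this produces $m^2$ terms each bounded by $\bra{\varphi_A}\sigma_A\ket{\varphi_A}$. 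You instead establish the stronger \emph{operator} inequality $Q\rho_{AB}Q\le m\sum_\alpha Q^\alpha\rho_{AB}Q^\alpha$ directly from $\sum_{\alpha<\beta}(X^\alpha-X^\beta)(X^\alpha-X^\beta)^\dagger\ge 0$ with $X^\alpha=Q^\alpha\rho_{AB}^{1/2}$, and only then apply the partial trace and the hypothesis. Your route is more streamlined: it avoids the reduction to test vectors and yields the intermediate bound $Q\rho_{AB}Q\le m\sum_\alpha Q^\alpha\rho_{AB}Q^\alpha$ as a byproduct. The paper's term-by-term approach is slightly longer but makes the contribution of each cross term explicit.
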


\begin{proof}
Let $\ket{\varphi_A}\in A$ be arbitrary. By definition, we have
\begin{align}
\tr\left(\tr_B(Q\rho_{AB}Q) \proj{\varphi_A}\right)&=\tr\left(Q\rho_{AB}Q(\proj{\varphi_A}\otimes\id_B)\right)\nonumber\\
&=\sum_{\alpha,\beta} \tr\left(Q^\alpha\rho_{AB}Q^\beta(\proj{\varphi_A}\otimes\id_B)\right)\ .\label{eq:Qrhoabineq}
\end{align}
By the cyclicity of the trace and the fact that $\proj{\varphi_A}\otimes\id_B$ is a projector, we have
\begin{align*}
\tr\left(Q^\alpha\rho_{AB}Q^\beta(\proj{\varphi_A}\otimes\id_B)\right)=\tr(Z^\alpha (Z^\beta)^\dagger)\ ,
\end{align*}
with $Z^\alpha=(\proj{\varphi_A}\otimes\id_B)Q^\alpha \rho_{AB}^\half$. The operator-Cauchy-Schwarz-inequality \[
\tr(EF)\leq\sqrt{\tr(E^\dagger E)\tr(F^\dagger F)}
\] applied to $E=Z^\alpha$ and $F=(Z^\beta)^\dagger$ therefore gives (with the cyclicity of the trace)
\begin{align}
\tr\left(Q^\alpha\rho_{AB}Q^\beta(\proj{\varphi_A}\otimes\id_B)\right)\leq \sqrt{\tr\left((Z^\alpha)^\dagger Z^\alpha\right)\tr\left((Z^\beta)^\dagger Z^\beta\right)}\ .\label{eq:qalphaineq}
\end{align}
It is straightforward to verify that 
\begin{align}
\tr\left((Z^\alpha)^\dagger Z^\alpha\right)=\tr\left(\tr_B(Q^\alpha\rho_{AB}Q^\alpha)\proj{\varphi_A}\right)\leq \tr(\sigma_A\proj{\varphi_A})\qquad\textrm{ for all }\alpha\in [m]\ ,\label{eq:zalphaopineq}
\end{align}
where we used the assumption in the last inequality. Combining~\eqref{eq:Qrhoabineq} with \eqref{eq:qalphaineq} and \eqref{eq:zalphaopineq} gives
\begin{align*}
\tr\left(\tr_B(Q\rho_{AB}Q) \proj{\varphi_A}\right)\leq \tr(m^2\sigma_A\proj{\varphi_A})\ .
\end{align*}
Since $\ket{\varphi_A}\in A$ was arbitrary, the claim follows.
\end{proof}

\begin{lemma}\label{lem:orderingofprojectors}
Let $P$ and $P'$ be two projectors on a Hilbert space $\cH$. Then
\begin{enumerate}[(i)]
\item\label{eq:supportorderingprojectors}
If $\supp P\subseteq \supp P'$, then $P\leq P'$.
\item\label{eq:orderingprojectorssub}
If $P\leq P'$, then $PP'=P'P=P$.
\end{enumerate}
\end{lemma}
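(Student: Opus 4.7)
The statement is a standard fact about orthogonal projectors on a Hilbert space, and the plan is to prove both items by elementary arguments based on the orthogonal decomposition $\cH = \supp P' \oplus (\supp P')^\perp$, together with the defining relations $P=P^2=P^\dagger$ (and similarly for $P'$).

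For part~\eqref{eq:supportorderingprojectors}, I would fix an arbitrary $v\in\cH$ and split it as $v=v_1+v_2$ with $v_1\in\supp P'$ and $v_2\in(\supp P')^\perp$. The hypothesis $\supp P\subseteq\supp P'$ forces $v_2\in(\supp P)^\perp$, hence $Pv_2=0$. Using $P=P^2$ one gets $\langle v,Pv\rangle=\|Pv_1\|^2\leq\|v_1\|^2=\langle v,P'v\rangle$, where the inequality is just the fact that a projector is norm-nonincreasing. Since $v$ is arbitrary, this yields $P\leq P'$.

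For part~\eqref{eq:orderingprojectorssub}, the plan is first to deduce the support inclusion $\supp P\subseteq\supp P'$ from the operator inequality $P\leq P'$, and then to read off the identity $P'P=P$ from it. For the support inclusion, any $v\in(\supp P')^\perp$ satisfies $\langle v,P'v\rangle=0$, so $P\leq P'$ gives $\langle v,Pv\rangle=0$; positivity of $P$ then implies $P^{1/2}v=0$ and hence $Pv=0$, i.e.\ $(\supp P')^\perp\subseteq(\supp P)^\perp$. Once this is in place, for every $w\in\cH$ the vector $Pw$ lies in $\supp P\subseteq\supp P'$, so $P'(Pw)=Pw$; thus $P'P=P$, and taking adjoints (using that $P$, $P'$ are Hermitian) yields $PP'=P$ as well.

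The argument is entirely elementary, so I do not anticipate any real obstacle; the only mild care needed is the passage from $\langle v,Pv\rangle=0$ to $Pv=0$, which uses positive semidefiniteness of $P$ rather than just Hermiticity.
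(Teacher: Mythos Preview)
Your proof is correct and follows essentially the same elementary approach as the paper, which also derives both claims from the orthogonal decomposition of $\supp P'$ relative to $\supp P$ (writing $P' = P + \id_{(\supp P)^\perp}$). If anything, your argument is more complete: for part~\eqref{eq:orderingprojectorssub} you explicitly verify that $P\leq P'$ forces $\supp P\subseteq\supp P'$ via positivity, a step the paper takes for granted.
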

\begin{proof}
Both statements follow immediately from the fact that
\begin{align*}
\supp P'=\supp P\oplus (\supp P)^{\bot}\ ,
\end{align*}
where $(\supp P)^{\bot}$ is the orthogonal complement of $\supp P$ in $\supp P'$. This identity implies $P'=P+\id_{(\supp P)^{\bot}}$, where $\id_{(\supp P)^{\bot}}$ is the projector onto $(\supp P)^{\bot}$. Thus $PP'=P'P=P$.
\end{proof}


\begin{thebibliography}{ANTSV99}

\bibitem[ANTSV99]{ANTV99}
A.~Ambainis, A.~Nayak, A.~Ta-Shma, and U.~Vazirani.
\newblock Dense quantum coding and a lower bound for 1-way quantum automata.
\newblock In {\em Proceedings of the 31th ACM Symposium on Theory of
  Computing}, 1999.
\newblock arXiv:quant-ph/9804043.

\bibitem[BARd07]{aroyaetal07}
A.~Ben-Aroya, O.~Regev, and R.~{de Wolf}.
\newblock A hypercontractive inequality for matrix-valued functions with
  applications to quantum computing, 2007.
\newblock arXiv.org:0705.3806.

\bibitem[BBCM95]{BBCM95}
C.~H. Bennett, G.~Brassard, C.~Cr{\'e}peau, and U.~Maurer.
\newblock Generalized privacy amplification.
\newblock {\em IEEE Transaction on Information Theory}, 41(6):1915--1923, 1995.

\bibitem[BBR88]{BeBrRo88}
C.~H. Bennett, G.~Brassard, and J.-M. Robert.
\newblock Privacy amplification by public discussion.
\newblock {\em SIAM Journal on Computing}, 17(2):210--229, 1988.

\bibitem[BH05]{babaihayes05}
L.~Babai and T.~P. Hayes.
\newblock Near-independence of permutations and an almost sure polynomial bound
  on the diameter of the symmetric group.
\newblock In {\em SODA '05: Proceedings of the sixteenth annual ACM-SIAM
  symposium on Discrete algorithms}, pages 1057--1066, Philadelphia, PA, USA,
  2005. Society for Industrial and Applied Mathematics.

\bibitem[DHRS04]{Dingetal04}
Y.~Z. Ding, D.~Harnik, A.~Rosen, and R.~Shaltiel.
\newblock Constant-round oblivious transfer in the bounded storage mode\ l.
\newblock In {\em Second Theory of Cryptography Conference, {TCC} 2004}, pages
  446--472, 2004.

\bibitem[DM02]{DziMau02}
S.~Dziembowski and U.~Maurer.
\newblock Tight security proofs for the bounded-storage model.
\newblock In {\em Proceedings of the 34th Annual ACM Symposium on Theory of
  Computing}, pages 341--350. ACM, May 2002.

\bibitem[DM04]{DziMau04a}
S.~Dziembowski and U.~Maurer.
\newblock Optimal randomizer efficiency in the bounded-storage model.
\newblock {\em Journal of Cryptology}, 17(1):5--26, January 2004.
\newblock (Conference version appeared in Proc. of STOC'02.).

\bibitem[FS07]{fehrscha07}
S.~Fehr and C.~Schaffner.
\newblock Randomness extraction via delta-biased masking in the presence of a
  quantum attacker, 2007.
\newblock arXiv:0706.2606.

\bibitem[GKK{\etalchar{+}}07]{gavinskyetal07}
D.~Gavinsky, J.~Kempe, I.~Kerenidis, R.~Raz, and Ronald {de Wolf}.
\newblock Exponential separations for one-way quantum communication complexity,
  with applications to cryptography.
\newblock In {\em to appear in STOC'07}, pages 369--376, 2007.
\newblock arXiv:quant-ph/0611209.

\bibitem[Gol97]{goldreich97}
O.~Goldreich.
\newblock A sampler of samplers: A computational perspective on sampling, 1997.
\newblock http://www.eccc.uni-trier.de/eccc/.

\bibitem[Hol73]{holevo73}
A.~S. Holevo.
\newblock Statistical problems in quantum physics.
\newblock In {\em Proceedings of the Second Japan-USSR Symposium on Probability
  Theory}, Lecture Notes in Mathematics, pages 104--119. Springer, 1973.

\bibitem[ILL89]{ILL89}
R.~Impagliazzo, L.~A. Levin, and M.~Luby.
\newblock Pseudo-random generation from one-way functions.
\newblock In {\em STOC '89: Proceedings of the twenty-first annual ACM
  symposium on Theory of computing}, pages 12--24, New York, NY, USA, 1989.
  ACM.

\bibitem[KMR05]{KoMaRe03}
R.~K\"onig, U.~Maurer, and R.~Renner.
\newblock On the power of quantum memory.
\newblock {\em IEEE Transaction on Information Theory}, 51(7):2391--2401, July
  2005.
\newblock arXiv:quant-ph/0305154.

\bibitem[KSR07]{KoReSc08}
R.~K\"onig, C.~Schaffner, and R.~Renner.
\newblock The operational meaning of conditional min-entropy.
\newblock to appear, 2007.

\bibitem[KT07]{KoeTer06}
R.~K\"onig and B.~M. Terhal.
\newblock The bounded storage model in the presence of a quantum adversary.
\newblock {\em IEEE Transactions on Information Theory}, 2007.
\newblock to appear; arXiv:quant-ph/0608101.

\bibitem[Lei07]{leifer07}
M.~S. Leifer.
\newblock Conditional density operators and the subjectivity of quantum
  operations.
\newblock In G.~Adenier, C.~A. Fuchs, and A.~Yu. Khrennikov, editors, {\em
  Foundations of Probability and Physics-4, AIP Conference Proceeding}, volume
  889, pages 172--186, 2007.
\newblock arXiv:quant-ph/0611233.

\bibitem[Lu02]{Lu02}
C.~Lu.
\newblock Hyper-encryption against space-bounded adversaries from on-line
  strong extractors.
\newblock In {\em Advances in Cryptology --- CRYPTO 2003}, Lecture Notes in
  Computer Science, pages 18--22. Springer, 2002.

\bibitem[Mau92]{Maurer92b}
U.~Maurer.
\newblock Conditionally-perfect secrecy and a provably-secure randomized
  cipher.
\newblock {\em J.\ of Cryptology}, 5(1):53--66, 1992.

\bibitem[Nay99]{Nayak99}
A.~Nayak.
\newblock Optimal lower bounds for quantum automata and random access codes.
\newblock In {\em Proceedings of 40th IEEE FOCS}, pages 369--376, 1999.
\newblock arXiv:quant-ph/9904093.

\bibitem[NZ96]{NisZuc96}
N.~Nisan and D.~Zuckerman.
\newblock Randomness is linear in space.
\newblock {\em Journal of Computer and System Sciences}, 52:43--52, 1996.
\newblock A preliminary version appeared at \mbox{STOC '93}.

\bibitem[ON02]{ogawanagaoka02}
T.~Ogawa and H.~Nagaoka.
\newblock New proof of the channel coding theorem via hypothesis testin\ g in
  quantum information theory, 2002.
\newblock arXiv:quant-ph/0208139.

\bibitem[Ren05]{Ren05}
R.~Renner.
\newblock {\em Security of Quantum Key Distribution}.
\newblock PhD thesis, ETH Zurich, 2005.
\newblock arXiv:quant-ph/0512258.

\bibitem[RK05]{RenKoe05}
R.~Renner and R.~K\"onig.
\newblock Universally composable privacy amplification against quantum
  adversaries.
\newblock In {\em Second Theory of Cryptography Conference {TCC}}, volume 3378
  of {\em Lecture Notes in Computer Science}, pages 407--425. Springer, 2005.
\newblock arXiv:quant-ph/0403133.

\bibitem[Sha02]{Shal02}
R.~Shaltiel.
\newblock Recent developments in explicit constructions of extractors.
\newblock {\em Bulletin of the EATCS}, 77:67--95, 2002.
\newblock http://dblp.uni-trier.de.

\bibitem[Vad03]{Vadhan03}
S.~Vadhan.
\newblock On constructing locally computable extractors and cryptosystems in
  the bounded storage model.
\newblock In {\em Advances in Cryptology --- CRYPTO 2003}, pages 61--77, 2003.

\bibitem[Win99]{winter99}
A.~Winter.
\newblock Coding theorem and strong converse for quantum channels.
\newblock {\em IEEE Transactions on Information Theory}, 45, 1999.

\bibitem[Wul07]{Wullsc07}
J.~Wullschleger.
\newblock Oblivious-transfer amplification.
\newblock In {\em EUROCRYPT 2007}, pages 555--572, 2007.

\end{thebibliography}

\newcommand{\etalchar}[1]{$^{#1}$}

\end{document}